\definecolor{darkgreen}{rgb}{0,0.5,0}
\newcommand\l@protocol{\@dottedtocline{1}{1.5em}{2.3em}}\makeatother
\crefname{theorem}{Theorem}{Theorems}
\Crefname{lemma}{Lemma}{Lemmas}
\Crefname{claim}{Claim}{Claims}
\Crefname{fact}{Fact}{Facts}
\Crefname{remark}{Remark}{Remarks}
\Crefname{observation}{Observation}{Observations}
\Crefname{line}{Line}{Lines}
\newcounter{algorithmicH}% New algorithmic-like hyperref counter
\let\oldalgorithmic\algorithmic
\renewcommand{\algorithmic}{%
  \stepcounter{algorithmicH}% Step counter
  \oldalgorithmic}% Do what was always done with algorithmic environment
\renewcommand{\theHALG@line}{ALG@line.\thealgorithmicH.\arabic{ALG@line}}
\newcommand{\e}{\varepsilon}
\newcommand{\eps}{\e}
\newcommand{\prob}[1]{\mathbb P \left[ #1 \right]}
\newcommand{\RSet}{{\cR}}
\DeclareMathOperator{\Span}{span}
\newcommand{\aall}{{a_{\mathrm{all}}}}
\newcommand{\defcal}[1]{\expandafter\newcommand\csname c#1\endcsname{{\mathcal{#1}}}}
\newcommand{\defbb}[1]{\expandafter\newcommand\csname b#1\endcsname{{\mathbb{#1}}}}
\newcounter{calBbCounter}
    \edef\letter{\Alph{calBbCounter}}
\newcommand{\nnR}{{\bR_{\geq 0}}}
\DeclareMathOperator{\rank}{rank}
\DeclareMathOperator{\val}{Val}
\newcommand{\OPT}{\ensuremath{\mathrm{OPT}}\xspace}
\newcommand{\OPTbig}{\ensuremath{\mathrm{OPT_{big}}}\xspace}
\algnewcommand\myand{\textbf{and} }
\algnewcommand\myor{\textbf{or} }
\newcommand{\characteristic}{{\mathbf{1}}}
\newcommand{\ALG}{{\text{ALG}}}
\newcommand{\DSCG}{{\texttt{DSCG}}}
\newcommand{\SMM}{{\texttt{SMMatroid}}}
\newcommand{\MSMM}{{\texttt{MSMMatroid}}}
\keywords{Submodular maximization, streaming, matroid, random order}
\authorrunning{M. Feldman, P. Liu, A. Norouzi-Fard, O. Svensson, and R. Zenklusen} 
\title{Streaming Submodular Maximization under Matroid Constraints}
\author{Moran Feldman}{University of Haifa \and \url{cs.haifa.ac.il/~moranfe}}{moranfe@cs.haifa.ac.il}{https://orcid.org/0000-0002-1535-2979}{\flag{isf_logo}Research supported in part by the Israel Science Foundation (ISF) grants no. 1357/16 and 459/20.}
\author{Paul Liu}{Stanford University \and \url{cs.stanford.edu/people/paulliu}}{paul.liu@stanford.edu}{}{}
\author{Ashkan Norouzi-Fard}{Google Research}{ashkannorouzi@google.com}{https://orcid.org/0000-0002-2336-9826}{}
\author{Ola Svensson}{EPFL}{ola.svensson@epfl.ch}{https://orcid.org/0000-0003-2997-1372}{\flag{snsf_logo}Research supported by the Swiss National Science Foundation project 200021-184656 ``Randomness in Problem Instances and Randomized Algorithms.''}
\author{Rico Zenklusen}{ETH Zurich}{ricoz@ethz.ch}{https://orcid.org/0000-0002-7148-9304}{\flag{eu_emblem}\flag{erc_logo}Research supported in part by Swiss National Science Foundation grant number 200021\_184622. This project has received funding from the European Research Council (ERC) under the European Union's Horizon 2020 research and innovation programme (grant agreement No 817750).}
\begin{document}
\maketitle
\begin{abstract}
Recent progress in  (semi-)streaming algorithms for monotone submodular function maximization has led to tight results for a simple  cardinality constraint. 
However,  current techniques fail to give a similar understanding for natural generalizations, including matroid constraints. This paper aims at closing this gap.  
For a single matroid of rank $k$ (i.e., any solution has cardinality at most $k$), our main results are:
\begin{itemize}
    \item A single-pass streaming algorithm that uses $\widetilde{O}(k)$ memory and achieves an approximation guarantee of  $0.3178$.
    \item A multi-pass streaming algorithm that uses $\widetilde{O}(k)$ memory and achieves an approximation guarantee of $(1-1/e - \varepsilon)$ by taking a constant (depending on $\eps$) number of passes over the stream. 
\end{itemize}
This improves on the previously best approximation guarantees of $1/4$ and $1/2$ for single-pass and multi-pass streaming algorithms, respectively. In fact, our multi-pass streaming algorithm is \emph{tight} in that any algorithm with a better guarantee than $1/2$ must make several passes through the stream and any algorithm that beats our guarantee of $1-1/e$ must make linearly many passes (as well as an exponential number of value oracle queries). 

Moreover, we show how the approach we use for multi-pass streaming can be further strengthened if the elements of the stream arrive in uniformly random order, implying an improved result for $p$-matchoid constraints.

\end{abstract}

% \begin{tikzpicture}[overlay, remember picture, shift = {(current page.south east)}]
% \begin{scope}[shift={(-2.8,1.8)}]
% \def\hd{1.5}
% \node at (-3.2*\hd,0) {\includegraphics[height=0.6cm]{}};
% \node at (-2*\hd,0) {\includegraphics[height=0.35cm]{}};
% \node at (-\hd,0) {\includegraphics[height=0.7cm]{}};
% \node at (-0.2*\hd,0) {\includegraphics[height=0.8cm]{}};
% \end{scope}
% \end{tikzpicture}

% \pagenumbering{Alph}
% \thispagestyle{empty}

% \newpage
% \pagenumbering{arabic}
% \setcounter{page}{1}
%\tableofcontents
%\newpage

\section{Introduction} \label{sec:introduction}

Submodular function optimization is a classic topic in combinatorial optimization (see, e.g., the book~\cite{schrijver2003combinatorial}). Already in $1978$,  Nemhauser, Wolsey, and Fisher~\cite{nemhauser1978analysis} analyzed a simple greedy algorithm for selecting the most valuable set $S\subseteq V$ of cardinality at most $k$. This algorithm starts with the empty set $S$, and then, for $k$ steps, adds to $S$ the element $u$ with the largest marginal value. 
Assuming the submodular objective function $f$ is also non-negative and monotone,
they showed that the greedy algorithm returns a $(1-1/e)$-approximate solution. Moreover, the approximation guarantee of $1-1/e$ is known to be tight~\cite{feige1998threshold,nemhauser1978best}.

A natural generalization of a cardinality constraint is that of a matroid constraint.
While a matroid constraint is much more expressive than a cardinality constraint, it has often been the case that further algorithmic developments have led to the same or similar guarantees for both types of constraints.
Indeed, for the problem of maximizing a monotone submodular  function subject to a matroid constraint, C{\u{a}}linescu, Chekuri, P{\'{a}}l, and Vondr{\'{a}}k~\cite{calinescu2011maximizing} developed the more advanced continuous greedy method, and showed that it recovers the guarantee $1-1/e$ in this more general setting. 
Since then, other methods, such as local search~\cite{FilmusW14}, have been developed to recover the same optimal approximation guarantee.

 More recently, applications in data science and machine learning~\cite{SubmodularWWW}, with huge problem instances, have motivated the need for space-efficient algorithms, i.e., (semi-)streaming algorithms for (monotone) submodular function maximization. 
 This is now a very active research area, and recent progress has resulted in  a tight understanding of streaming algorithms for maximizing monotone submodular functions with a single cardinality constraint: the optimal approximation guarantee is $1/2$ for single-pass streaming algorithms, and it is possible to recover the guarantee $1-1/e -\varepsilon$ in $O_\varepsilon(1)$ passes. That it is impossible to improve upon $1/2$  in a single pass is due to~\cite{feldman_2020_one-way}, and the first single-pass streaming algorithm to achieve this guarantee is a simple ``threshold'' based algorithm~\cite{badanidiyuru2014streaming} that, intuitively, selects elements with marginal value at least $\OPT/(2k)$. The $(1-1/e-\varepsilon)$ guarantee in $O_\varepsilon(1)$ passes can be obtained using smart implementations of the greedy approach~\cite{badanidiyuru2014fast,abs-1802-06212,mcgregor2019better,MirzasoleimanBK16,Norouzi-FardTMZ18}. 

It is interesting to note that simple greedy and threshold-based algorithms have led to tight results for maximizing a monotone submodular function subject to a cardinality constraint in both the ``offline'' RAM and data stream models.  
However, in contrast to the RAM model, where more advanced algorithmic techniques have generalized these guarantees to much more general constraint families, current techniques fail to give a  similar understanding in the data stream model, both for single-pass and multi-pass streaming algorithms. 
Closing this gap is the motivation for our work.  In particular, 
current results leave open the intriguing possibility to obtain the same guarantees for a matroid constraint as for a cardinality constraint. Our results make significant progress on this question for single-pass streaming algorithms and completely close the gap for multi-pass streaming algorithms.  %In particular, there is no stronger hardness result known for a matroid constraint than that of a cardinality constraint.  Our first result makes algorithmic progress on this question. 

\newtoggle{defineRankInTheorems}\toggletrue{defineRankInTheorems}
\begin{restatable}{theorem}{thmMainSingleMatMon} \label{thm:mainSingleMatMon}
    There is a single-pass semi-streaming algorithm for maximizing a non-negative monotone submodular function subject to a matroid constraint of rank $k$ \iftoggle{defineRankInTheorems}{(any solution has cardinality at most $k$) }{}that stores $O(k)$ elements, requires $\widetilde{O}(k)$ additional memory, and achieves an approximation guarantee of $0.3178$.
\end{restatable}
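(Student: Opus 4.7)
The plan is to combine the classical "threshold greedy" streaming approach with two additional mechanisms that together aim to break the $1/4$ barrier: a swap/preemption rule that lets a newly arriving element oust a weaker element of the current solution, and a randomized skipping rule that discards each element with a carefully tuned probability. The algorithm maintains a single matroid-independent set $S$ throughout the stream; each arriving element is either added to $S$, used to replace an element already in $S$, or dropped. Returning $S$ at the end avoids the need for any post-processing beyond the usual.

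Concretely, I would first reduce to the case where a value $\tau \in [(1-\varepsilon)\OPT,\OPT]$ is known, by running $O(\varepsilon^{-1}\log k)$ parallel copies, one per power-of-$(1+\varepsilon)$ guess, each using $O(k)$ memory. For a fixed copy with guess $\tau$, and parameters $p \in [0,1]$, $\alpha > 0$, $\gamma > 0$ to be optimized: upon arrival of $e$, skip $e$ with probability $p$; otherwise, if $S \cup \{e\}$ is independent and $f(e\mid S) \geq \tau/(\alpha k)$, add $e$ to $S$; else, find the unique circuit element $e' \in S$ with $S \setminus \{e'\} \cup \{e\}$ independent, and swap $e'$ for $e$ if $f(e\mid S\setminus\{e'\}) \geq (1+\gamma)\,f(e'\mid S\setminus\{e'\})$. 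All candidate swap checks only need the matroid oracle on $S$, so the memory stays $O(k)$ per copy.

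For the analysis, let $O$ be an optimal independent set, and use the matroid exchange lemma to fix a bijection $\pi : O \setminus S \to S \setminus O$ so that $S \setminus \{\pi(o)\} \cup \{o\}$ is independent for each $o$. Decomposing $f(O) - f(S)$ by submodularity into terms $f(o \mid S)$ for $o \in O \setminus S$, I would charge each such term to one of three causes recorded at $o$'s arrival: the skipping coin killed $o$ (total charge at most $p\cdot\OPT$ by linearity of expectation over coin flips); $o$ failed the threshold test (telescoping charge at most $f(S)/\alpha$, using that the threshold was above $\tau/(\alpha k)$ and at most $k$ elements of $O$ contribute); or $o$ failed the swap test against $\pi(o)$ (charge at most $(1+\gamma)\,f(\pi(o)\mid S\setminus\{\pi(o)\})$, which after summing is bounded by $(1+\gamma)\,f(S)$ using submodularity and the current-state version of $\pi(o)$ in $S$). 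Summing yields $\OPT \leq f(S)\cdot g(\alpha,\gamma) + p\cdot\OPT$ with $g$ an explicit function, and an explicit optimization over $(p,\alpha,\gamma)$ should give $f(S)/\OPT \geq 0.3178$ in expectation.

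The principal obstacle I anticipate is exactly this last optimization, and in particular justifying that randomly skipping a $p$-fraction of elements does \emph{not} simply waste value — without an amortization argument, skipping is pure loss and one recovers only $1/4$. To make the skip "pay for itself," the swap charge must be tightened to exploit the fact that skipping raises the marginal values of elements that do survive, so that $f(S)$ grows faster per accepted element than in the deterministic analysis. I expect this step to require a conditional expectation argument over the random coin sequence, coupling the algorithm's trajectory to a fractional relaxation to keep submodularity usable under the randomized insertions, and that $0.3178$ emerges as the saddle point of the resulting two-parameter program.
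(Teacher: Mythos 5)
There is a genuine gap, and you have in fact located it yourself: the entire difficulty of beating $1/4$ with a swap-based streaming algorithm is the amortization argument that makes random skipping ``pay for itself,'' and your proposal does not supply it. As written, your charging scheme gives $(1-p)\cdot f(\OPT) \leq g(\alpha,\gamma)\cdot \ee{f(S)}$ for some explicit $g$, and the known optimization of the deterministic swap analysis (your $\gamma$-test against the weakest circuit element, charged via the exchange bijection $\pi$) already bottoms out at $g = 4$ when $p = 0$; setting $p > 0$ only worsens the left-hand side unless you can prove that skipping strictly decreases $g$, which is precisely the step you defer to ``a conditional expectation argument \dots coupling the algorithm's trajectory to a fractional relaxation.'' That coupling is not a routine technicality — it is the theorem. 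Note also that the target constant $0.3178 = 1/(\alpha+2)$ with $e^{\alpha} = \alpha + 2$ was achieved by Chan et al.\ only for \emph{partition} matroids in the \emph{preemptive online} model (where the algorithm may remember and re-insert previously discarded elements), and their argument does not transfer to general matroids in a single pass with $O(k)$ memory; asserting that your two-parameter saddle point lands on the same constant is not justified.

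The paper's proof takes a structurally different route that sidesteps the swap analysis entirely. It maintains, for a geometric ladder of thresholds $c^i$, independent sets $A_i$ of elements whose marginal value with respect to a running fractional point $a$ (measured via the multilinear extension $F$) exceeds $c^i$, and it increments $a$ by $\frac{c^i}{m\,\partial_u F(a)}\characteristic_u$ at each acceptance. Three comparisons then close the argument: $F(\aall)$ is upper bounded by $\frac{1}{m}\sum_i c^i|\overline{A}_i|$; the output point $s$, an average of $m$ independent sets built greedily from every $m$-th level (using that $\overline{A}_i \subseteq \Span(\overline{A}_{i-1})$), recovers a $(1-c^{-m})$ fraction of that sum; and $f(\OPT) - F(\aall)$ is bounded by $\sum_{u \in \OPT} c^{\ell(u)+1}$ via concavity of $F$ along non-negative directions, where $\ell(u)$ is the highest level spanning $u$. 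With $c = \frac{m}{m-\alpha}$ one gets $c^{-m} \to e^{-\alpha}$, and the ratio $\frac{1-e^{-\alpha}}{\alpha+1} = \frac{1}{\alpha+2}$ at the root of $e^{\alpha} = \alpha+2$ — this is where $0.3178$ actually comes from. If you want to salvage a swap-based route you would need a fundamentally new potential function; as it stands, your proposal reduces to the $1/4$ analysis plus an unproven claim.
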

The last theorem improves upon the previous best approximation guarantee of $1/4 = 0.25$~\cite{chakrabarti2015submodular}. Moreover, the techniques are versatile and also yield a single-pass streaming algorithm with an improved approximation guarantee for non-monotone functions  (improving from $0.1715$~\cite{feldman2018do} to $0.1921$).

Our next result is a tight multi-pass guarantee of $1-1/e - \varepsilon$, improving upon the previously best guarantee of $1/2 -\varepsilon$~\cite{HuangTW20}.
\begin{restatable}{theorem}{thmMultipass} \label{thm:multipass}
    For every constant $\varepsilon >0$, there is a multi-pass semi-streaming algorithm for maximizing a non-negative monotone submodular function subject to a matroid constraint of rank $k$\iftoggle{defineRankInTheorems}{ (any solution has cardinality at most $k$)}{} that stores $O(k/\varepsilon)$ elements, makes $O(1/\varepsilon^3)$ many passes, and achieves an approximation guarantee of $1-1/e -\varepsilon$.
\end{restatable}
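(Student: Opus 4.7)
The plan is to simulate the continuous greedy algorithm of Calinescu--Chekuri--P\'al--Vondr\'ak in the semi-streaming model. We discretize continuous greedy into $T = \lceil 1/\varepsilon \rceil$ phases and maintain a fractional point $x^{(t)} \in [0,1]^V$ in the matroid polytope, represented implicitly as $x^{(t)} = \tfrac{1}{T}\sum_{s\le t}\mathbbm{1}_{I_s}$ for matroid-independent sets $I_1,\ldots,I_t$ of size at most $k$ that are stored in memory (total $O(k/\varepsilon)$ elements). After $T$ phases, we round $x^{(T)}$ \emph{offline} via Chekuri--Vondr\'ak--Zenklusen swap rounding to an integral independent set $S$ with $\expected{f(S)} \ge F(x^{(T)})$, where $F$ is the multilinear extension of $f$; since $x^{(T)}$ is already stored as an explicit convex combination of $T$ independent sets, swap rounding needs no additional stream access.

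The heart of each phase $t$ is to find a matroid-independent set $I_t$ that is a $(1-O(\varepsilon))$-approximate maximizer of the linear functional $w_t(e) = \expected{f(e\mid R_t)}$, where $R_t$ is sampled by including each $j$ independently with probability $x^{(t-1)}_j$. Because $x^{(t-1)}$ is supported on $I_1\cup\ldots\cup I_{t-1}$ which lie in memory, we can draw $R_t$ without touching the stream and then, in each subsequent pass, evaluate $f(e\mid R_t)$ for every arriving stream element $e$ on the fly. To turn these weights into an approximate maximum-weight independent set we run a decreasing-threshold matroid greedy: we guess $\OPT$ to within a $(1-\varepsilon)$ factor via $O(1/\varepsilon)$ parallel guesses kept in memory, and for each guess $\tau$ we sweep once per geometrically spaced threshold $\tau(1-\varepsilon)^i/k$, inserting any arriving $e$ for which $f(e\mid R_t)$ exceeds the current threshold whenever $I_t \cup \{e\}$ remains matroid-independent. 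This recovers a $(1-O(\varepsilon))$-approximate MWIS, and we set $x^{(t)} = x^{(t-1)} + \tfrac{1}{T}\mathbbm{1}_{I_t}$. The standard analysis of discretized continuous greedy then yields $F(x^{(T)}) \ge (1-1/e-O(\varepsilon))\OPT$.

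The main obstacle is to argue that the combined errors from single-sample gradient estimation (using one random $R_t$ per phase) and from the threshold-based MWIS (whose output is only approximately optimal against the \emph{estimated} gradient) accumulate cleanly enough that continuous greedy's $(1-1/e)$ guarantee survives with only an additive $\varepsilon$ loss. This requires a concentration or variance-reduction argument---averaging the sample-based weights over $O(1/\varepsilon)$ independent draws of $R_t$, which accounts for an extra $1/\varepsilon$ factor in the pass count---showing that in expectation $\langle \wh{w}_t, \mathbbm{1}_{I_t}\rangle$ is a $(1-O(\varepsilon))$-fraction of $\langle w_t, \mathbbm{1}_{I^*}\rangle$ for the true optimum $I^*$. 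Once this per-phase progress guarantee is in place, the telescoping inequality $F(x^{(t)}) - F(x^{(t-1)}) \gtrsim \tfrac{1}{T}\bigl(\OPT - F(x^{(t-1)})\bigr)$ together with induction delivers the claimed $(1-1/e-\varepsilon)$-approximation, and the counts $O(1/\varepsilon)$ phases $\times$ $O(1/\varepsilon)$ passes per MWIS $\times$ $O(1/\varepsilon)$ gradient samples give the asserted $O(1/\varepsilon^3)$ pass and $O(k/\varepsilon^3)$ memory bounds.
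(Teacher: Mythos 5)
There is a genuine gap, and it sits exactly where you wave your hands at the end. With step size $1/T = \Theta(\varepsilon)$ per phase, finding a (near‑)maximum‑weight independent set $I_t$ with respect to the \emph{static} weights $w_t(e) = \mathbb{E}[f(e \mid R_t)]$ does not yield the per‑phase progress $F(x^{(t)}) - F(x^{(t-1)}) \gtrsim \tfrac{1}{T}\bigl(f(\OPT) - F(x^{(t-1)})\bigr)$. The linear gain $\tfrac{1}{T}\sum_{e\in I_t} w_t(e)$ overcounts when the elements of $I_t$ are mutually redundant, and the discrepancy is a multiplicative factor up to $k$, not an additive $O(\varepsilon)$. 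Concretely: take a rank‑$k$ uniform matroid, a set $A$ of $k$ elements on which $f$ is linear with $f(A)=1$, and disjoint groups $B_1,\dots,B_T$ of $k$ elements each, where $f$ gains $1.01/k$ from the first element of $B_j$ present and nothing from further ones. In phase $j$ every element of $B_j$ has weight $1.01/k > 1/k = w_t(a_i)$, so the exact MWIS is $B_j$, yet $F$ increases by at most $1.01/k$; after $T=1/\varepsilon$ phases $F(x^{(T)}) \le 1.01/(\varepsilon k)$ while $f(\OPT)=1$. Averaging over more samples of $R_t$ does not help — the failure is in the linearization, not the variance. This is precisely why Badanidiyuru--Vondr\'ak make the weights \emph{adaptive within a phase} (each candidate's marginal is measured against $x^{(t-1)}$ plus the partial set already selected in that phase); that adaptivity is what produces the guarantee \cref{eq:procedure_guarantee} and lets $T=O(1/\varepsilon)$ phases suffice.

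The paper keeps the outer Badanidiyuru--Vondr\'ak shell but replaces the inner decreasing‑threshold procedure by one implementable in a number of passes independent of $n$ and $k$: it repeatedly ($O(\varepsilon^{-2})$ times) runs a Chakrabarti--Kale‑style swap pass on the surrogate $g(S)=F(x+\varepsilon\characteristic_S)$ until the current base is an approximate local optimum, and shows that approximate local optimality implies \cref{eq:procedure_guarantee} (\cref{prop:alternative_procedure}). Note that even if you repaired the main gap by porting the adaptive decreasing‑threshold procedure to the stream with one pass per threshold, a $(1-\varepsilon)$‑approximation needs thresholds spanning a multiplicative range of about $k/\varepsilon$, i.e.\ $\Theta(\varepsilon^{-1}\log(k/\varepsilon))$ passes per phase; your count of $O(1/\varepsilon)$ thresholds is too optimistic, and the resulting pass count would no longer be constant for constant $\varepsilon$, which is the other reason the paper takes the local‑search route.
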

The result is tight (up to the exact dependency on $\varepsilon$) in the following strong sense: any streaming algorithm with a better approximation guarantee than $1/2$ must make more than one pass~\cite{feldman_2020_one-way}, and any algorithm with a better guarantee than $1-1/e$ must make linearly (in the length of the stream) many passes~\cite{mcgregor2019better} (see \cref{ssc:lower_bound_multipass} for more detail).

The way we obtain~\cref{thm:multipass} is through a rather general and versatile framework based on the ``Accelerated Continuous Greedy'' algorithm of~\cite{badanidiyuru2014fast}, which was designed for the classic (non-streaming) setting.
This allows us to obtain results with an improved number of passes or more general constraints in specific settings.
First, if the elements of the stream arrive in uniformly random order, then we can improve the number of passes as stated below.

\begin{restatable}{theorem}{thmMultipassRandom} \label{thm:multipass_random}
    If the elements arrive in an independently random order in each pass, then for every constant $\varepsilon >0$, there is a multi-pass semi-streaming algorithm for maximizing a non-negative monotone submodular function subject to a matroid constraint of rank $k$\iftoggle{defineRankInTheorems}{ (any solution has cardinality at most $k$)}{} that stores $O(k/\varepsilon)$ elements, makes $O(\varepsilon^{-2} \log \eps^{-1})$ many passes, and achieves an approximation guarantee of $1-1/e -\varepsilon$.
\end{restatable}

Second, also in the uniformly random order model, we can obtain results with even fewer passes, and that also extend to $p$-matchoid constraints, but at the cost of weaker approximation guarantees.
\begin{restatable}{theorem}{thmLocalOptimum} \label{thm:local_optimum}
    If the elements arrive in an independently random order in each pass, then for every constant $\varepsilon >0$, there is a multi-pass semi-streaming algorithm for maximizing a non-negative monotone submodular function subject to a matroid constraint of rank $k$\iftoggle{defineRankInTheorems}{ (any solution has cardinality at most $k$) }{} that stores $O(k)$ elements, makes $O(\log \eps^{-1})$ many passes, and achieves an approximation guarantee of $1/2 -\varepsilon$.
    
    Moreover, if the matroid constraint is replaced with a more general $p$-matchoid constraint, the above still holds except that now the approximation guarantee is $1 / (p + 1) - \varepsilon$ and the number of passes is $O(p^{-1} \log \eps^{-1})$.
\end{restatable}
\togglefalse{defineRankInTheorems}
The $p$-matchoid result of \cref{thm:local_optimum} improves, in the random order model, over an algorithm of~\cite{HuangTW20} that achieves the same approximation factor, but needs $O(p/\eps)$ passes, whereas our algorithm requires a number of passes that only logarithmically depends on $\eps^{-1}$ and decreases (rather than increases) with $p$.
(However, we highlight that the procedure in~\cite{HuangTW20} does not require random arrival order, and obtains its guarantees even in the adversarial arrival model.)

%\cref{thm:mainSingleMatMon} gives an improved approximation guarantee for single-pass streaming algorithms, but does not reach all the way to recover the guarantee $1/2$ known for the cardinality case. It is thus natural to wonder whether we can obtain stronger impossibility results for the problem with a matroid constraint. The most common way for proving such lower bounds is via one-way communication complexity. The input stream is  partitioned among players, usually two named  Alice and Bob, and each player can send a message to the following player of limited size (corresponding to the memory footprint of the streaming algorithm). We address this question by introducing a new protocol\footnote{A protocol is a set of algorithms, one for each player.} that achieves $0.505$-approximation guarantee for the two player setting. This rules out the existence of a $1/2$ (or better) impossibility result that is based on the two player communication complexity approach.

%\begin{restatable}{theorem}{thmTwoplayer} \label{thm:beatHalfForTwoPlayers}
%There is a two player protocol for maximizing a non-negative monotone submodular function subject to a matroid constraint that sends $O(k)$ elements (from Alice to Bob) and has an approximation guarantee of $0.505$.
%\end{restatable}

\subsection{Our Technique}\label{sec:ourTechniques}

Before getting into the technical details of our approaches, we provide an overview of the main ingredients behind the techniques we employ.

\paragraph*{Single pass algorithms.} The $4$-approximation single pass algorithm due to Chakrabarti and Kale~\cite{chakrabarti2015submodular} (and later algorithms based on it such as~\cite{chekuri2015streaming,feldman2018do}) maintains an integral solution in the following way. Whenever a new element $u$ arrives, the algorithm considers inserting $u$ into the solution at the expense of some element $u'$ that gets removed from the solution; and this swap is performed if it is beneficial enough. Naturally, the decision to make the swap is a binary decision: we either make the swap or we do not do that. The central new idea in our improved single pass algorithms (\cref{thm:mainSingleMatMon}) is that we make the swap fractional. In other words, we start inserting fractions of $u$ at the expense of fractions of $u'$ (the identity of $u'$ might be different for different fractions of $u$), and we continue to do that as long as the swap is beneficial enough. Since ``beneficial enough'' depends on properties of the current solution, the swapping might stop being beneficial enough before all of $u$ is inserted into the solution, which explains why our fractional swapping does not behave like the integral swapping used by previous algorithms.

While our single pass algorithms are based on the above idea, they are presented in a slightly different way for simplicity of the presentation and analysis. In a nutshell, the differences can be summarized by the following two points.
\begin{itemize}
    \item Instead of maintaining a fractional solution, we maintain multiple sets $A_i$ (for $i \in \bZ$). Membership of an element $u$ in each of these sets corresponds to having a fraction of $1/m$ (for a parameter $m$ of the algorithm) of $u$ in the fractional solution.
    \item We do not remove elements from our fractional solution. Instead, we add new elements to sets $A_i$ with larger and larger $i$ indexes with the implicit view that only fractions corresponding to sets $A_i$ with relatively large indices are considered part of the fractional solution.
\end{itemize}
To make the above points more concrete, we note that the fractional solution is reconstructed from the sets $A_i$ according to the above principles at the very end of the execution of our algorithms. The reconstructed fractional solution is denoted by $s$ in these algorithms.

\paragraph*{Multi-pass algorithms.} Badanidiyuru and Vondr\'{a}k~\cite{badanidiyuru2014fast} described an algorithm called ``Accelerated Continuous Greedy'' that obtains $1 - 1/e - O(\eps)$ approximation (for every $\eps \in (0, 1)$) for maximizing a monotone submodular function subject to a matroid constraint. Even though their algorithm is not a data stream algorithm, it accesses the input only in a well-defined restricted way, namely though a procedure called ``Decreasing-Threshold Procedure''. Originally, this procedure was implemented using a greedy algorithm on an altered objective function. However, we observe that the algorithm of~\cite{badanidiyuru2014fast} can work even if Decreasing-Threshold Procedure is modified to return any local maximum of the same altered objective function. Therefore, to get a multiple pass data stream algorithm, it suffices to design such an algorithm that produces an (approximate) local maximum (or a solution that is as good as such a local maximum); this algorithm can then be used as the implementation of Decreasing-Threshold Procedure. This is the framework we use to get our $(1 - 1/e - \eps)$-approximation algorithms.

To prove \cref{thm:multipass} using the above framework, we show that a known algorithm (a variant of the algorithm of Chakrabarti and Kale~\cite{chakrabarti2015submodular} due to Huang, Thiery, and Ward~\cite{HuangTW20}) can be repurposed to produce an approximate local maximum using $O(\eps^{-2})$ passes, which, when used in Accelerated Continuous Greedy, leads to the claimed $O(\eps^{-3})$ many passes. Similarly, by adapting an algorithm of Shadravan~\cite{S20} working in the random order model, and extending it to multiple passes, we are able to get a solution that is as good as an approximate local maximum in only $O(\eps^{-1} \log \eps^{-1})$ random-order passes, which leads to \cref{thm:multipass_random} when combined with the above framework.

Interestingly, any (approximate) local maximum also has an approximation guarantee of its own (without employing the above framework). This means that the above procedures for producing approximate local maxima can also be viewed as approximation algorithms in their own right, which leads to \cref{thm:local_optimum}.\footnote{Technically, we can also get a result for adversarial order streams in this way, but we omit this result since it is weaker than a known result of~\cite{HuangTW20}.} It is important to note that \cref{thm:local_optimum} uses fewer passes than what is used in the proof of \cref{thm:multipass_random} to get a solution which is at least as good as an approximate local maximum. This discrepancy happens because in  \cref{thm:local_optimum} we only aim for a solution with some approximation ratio $r$, where $r$ is an approximation ratio guaranteed by any approximate local maximum in any instance. In contrast, \cref{thm:multipass_random} needs a solution that is as good as some real approximate local maximum of the particular instance considered.

\subsection{Additional Related Work}

As mentioned above, C{\u{a}}linescu et al.~\cite{calinescu2011maximizing} proposed a $(1-1/e)$-approximation algorithm for maximizing a monotone submodular function subject to a matroid constraint in the offline (RAM) setting, which is known to be tight~\cite{feige1998threshold,nemhauser1978best}. The corresponding problem with a non-monotone objective is not as well understood. A long line of work~\cite{ene2016constrained,feldman2011unified,DBLP:conf/stoc/LeeMNS09} on this problem culminated in a $0.385$-approximation due to Buchbinder and Feldman~\cite{buchbinder2019constrained} and an upper bound by Oveis Gharan and Vondr\'{a}k~\cite{gharan2011submodular} of $0.478$ on the best obtainable approximation ratio.

The first semi-streaming algorithm for maximizing a monotone submodular function subject to a matroid constraint was described by Chakrabarti and Kale~\cite{chakrabarti2015submodular}, who obtained an approximation ratio of $1/4$ for the problem. This remained state-of-the-art prior to this work. However, Chan, Huang, Jiang, Kang, and Tang~\cite{chan2017online} managed to get an improved approximation ratio of $0.3178$ for the special case of a partition matroid in the related preemptive online model. We note that the last approximation ratio is identical to the approximation ratio stated in \cref{thm:mainSingleMatMon}, which points to some similarity that exists between the algorithms (in particular, both use fractional swaps). However, the algorithm of~\cite{chan2017online} is not a semi-streaming algorithm (and moreover, it is tailored to partition matroids). The first semi-streaming algorithm for the non-monotone version of the above problem was obtained by Chekuri, Gupta, and Quanrud~\cite{chekuri2015streaming}, and achieved a $(1/(4 + e) - \eps)\approx 0.1488$-approximation. This was later improved to $0.1715$-approximation by Feldman, Karbasi, and Kazemi~\cite{feldman2018do}.\footnote{Mirzasoleiman et al.~\cite{mirzasoleiman2018streaming} claimed another approximation ratio for the problem (weaker than the one given later by~\cite{feldman2018do}), but some problems were found in their analysis (see \cite{haba2020streaming} for details).}

\paragraph*{Outline of the paper.} In \cref{sec:prelim}, we introduce notations and definitions used throughout this paper. Afterwords, in \cref{sec:singlepass} (and \cref{app:singleMatNonMon}), we present and analyze our single-pass algorithms for maximizing submodular functions subject to a matroid constraint. The framework used to prove \cref{thm:multipass,thm:multipass_random} is presented in detail in \cref{sec:framework}, and in the two sections after it we describe the algorithms for obtaining approximate local maxima (or equally good solutions) necessary for using this framework. Specifically, in \cref{sec:adversarial_local_search} we show how to get such an algorithm for adversarial order streams (leading to \cref{thm:multipass}), and in \cref{sec:random_local_search} we show how to get such an algorithm for random order streams (leading to \cref{thm:multipass_random,thm:local_optimum}).  %In \cref{sec:main-hardness}, we explain our hardness result for the bipartite matching constraint, and in \cref{sec:twoplayer} we describe our two player protocol.
It is worth noting that \cref{sec:singlepass} is independent of all the other sections, and therefore, can be skipped by a reader interested in the other parts of this paper.

\section{Preliminaries} \label{sec:prelim}

Recall that we are interested in the problem of maximizing a submodular function subject to a matroid constraint. In \cref{ssc:problem_definition} we give the definitions necessary for formally stating this problem. Then, in \cref{ssc:settings} we define the data stream model in which we study the problem. Finally, in \cref{ssc:additional_definitions} we present some additional notation and definitions that we use.

\subsection{Problem Statement} \label{ssc:problem_definition}

\paragraph*{Submodular Functions.} Given a ground set $\cN$, a \emph{set function} $f\colon 2^\cN \to \bR$ is a function that assigns a numerical value to every subset of $\cN$. Given a set $S \subseteq \cN$ and an element $u \in \cN$, it is useful to denote by $f(u \mid S)$ the marginal contribution of $u$ to $S$ with respect to $f$, i.e., $f(u \mid S) \coloneqq f(S \cup \{u\}) - f(S)$. Similarly, we denote the marginal contribution of a set $T \subseteq \cN$ to $S$ with respect to $f$ by $f(T \mid S) \coloneqq f(S \cup T) - f(S)$.

A set function $f\colon 2^\cN \to \bR$ is called \emph{submodular} if for any two sets $S$ and $T$ such that $S \subseteq T \subseteq \cN$ and any element $u \in \cN \setminus T$ we have
\[
	f(u \mid S) \geq f(u \mid T) \enspace.
\]
Moreover, we say that $f$ is \emph{monotone} if $f(S_1)\leq f(S_2)$ for any sets $S_1\subseteq S_2 \subseteq \cN$, and $f$ is \emph{non-negative} if $f(S) \geq 0$ for every $S \subseteq \cN$.

\paragraph*{Matroids.} A set system is a pair $M = (\mathcal{N}, \mathcal{I})$, where $\mathcal{N}$ is a finite set called the \emph{ground set}, and $\mathcal{I} \subseteq 2^{\mathcal{N}}$ is a collection of subsets of the ground set. We say that a set $S \subseteq \cN$ is \emph{independent} in $M$ if it belongs to $\cI$ (otherwise, we say that it is a \emph{dependent} set); and the rank of the set system $M$ is defined as the maximum size of an independent set in it.
A set system is a \emph{matroid} if it has three properties: i) The empty set is independent, i.e., $\varnothing \in \mathcal{I}$. ii) Every subset of an independent set is independent, i.e., for any $S \subseteq T \subseteq \mathcal{N}$, if $T \in \mathcal{I}$ then $S \in \mathcal{I}$. iii) If $S \in \mathcal{I}$, $T \in \mathcal{I}$ and $|S| < |T|$, then there exists an element $u \in T \setminus S$ such that $S \cup \{u\} \in \mathcal{I}$.\footnote{The last property is often referred to as the \emph{exchange axiom} of matroids.}

A matroid constraint is simply a constraint that allows only sets that are independent in a given matroid. Matroid constraints are of interest because they have a rich combinatorial structure and yet are able to capture many constraints of interest such as cardinality, independence of vectors in a vector space, and being a non-cyclic sub-graph.

\paragraph*{Matchoids and $p$-matchoids.}
The matchoid notion (for the case of $p=2$) was proposed by Jack Edmonds as a common generalization of matching and matroid intersection. Let $M_1 = (\cN_1, \cI_1), M_2 = (\cN_2, \cI_2), \ldots, M_q = (\cN_q, \cI_q)$ be $q$ matroids, and let $\cN = \cN_1 \cup \cdots \cup \cN_q$ and $\cI = \{S \subseteq \cN \mid S \cap \cN_\ell \in \cI_\ell\textrm{ for every integer }1 \leq \ell \leq q\}$. The set system $M = (\cN, \cI)$ is a \emph{$p$-matchoid} if each element $u \in \cN$ is a member of $\cN_\ell$ for at most $p$ indices $\ell \in [q]$. Informally, a $p$-matchoid is an intersection of matroids in which every particular element $u \in \cN$ is affected by at most $p$ matroids. It is easy to see that a $1$-matchoid is simply a matroid, and vice versa. $2$-matchoids are often referred to simply as matchoids (without parameter $p$).

%This definition is interesting since $p$-matchoids generalize many types of basic combinatorial constraints, such as cardinality, matroid intersection, and matching constraints~\cite{chekuri2015streaming}.
%A matchoid is the intersection of a collection of matroids $M_1,M_2,\dotsc,M_m$ on a common ground set $\cN$, such that every element $u \in \cN$ is a co-loop (always independent) in all but two of the matroids $M_i$. More generally, a $p$-matchoid constraint is the intersection of a collection of matroids where every element is a co-loop in all but $p$ of the matroids. Hence only $p$ of the matroids affect any particular element $u \in \cN$.

\paragraph*{Problem.} In the \texttt{Submodular Maximization subject to a Matroid Constraint} problem (\SMM), we are given a non-negative\footnote{The assumption of non-negativity is necessary because we are interested in multiplicative approximation guarantees.} submodular function $f\colon 2^\cN \to \nnR$ and a matroid $M = (\cN, \cI)$ over the same ground set. The objective is to find an independent set $S \in \cI$ that maximizes $f$. An important special case of {\SMM} is the \texttt{Monotone Submodular Maximization subject to a Matroid Constraint} problem (\MSMM) in which we are guaranteed that the objective function $f$ is monotone (in addition to being non-negative and submodular). 

%In the \texttt{\setlength{\spaceskip}{0.5em plus 1em minus 0.1em}Monotone Submodular Maximization subject to a Bipartite Matching} problem ({\MSMBM}), we are given a bipartite graph $G = (V, E)$ and a non-negative monotone submodular function $f\colon 2^E \to \nnR$ whose ground set is the set of edges of the graph. The objective is to find a legal matching in $G$ that maximizes $f$.

\subsection{Data Stream Model} \label{ssc:settings}

In the data stream model, the input appears in a sequential form known as the \emph{input stream}, and the algorithm is allowed to read it only sequentially. In the context of our problem, the input stream consists of the elements of the ground set sorted in either an adversarially chosen order or a uniformly random order, and the algorithm is allowed to read the elements from the stream only in this order. Often the algorithm is allowed to read the input stream only once (such algorithms are called \emph{single-pass} algorithms), but in other cases it makes sense to allow the algorithm to read the input stream multiple times---each such reading is called a \emph{pass}. The order of the elements in each pass might be different; in particular, when the order is random, we assume that it is chosen independently for each pass.

A trivial way to deal with the restrictions of the data stream model is to store the entire input stream in the memory of the algorithm. However, we are often interested in a stream carrying too much data for this to be possible.
Thus, the goal in this model is to find a high quality solution while using significantly less memory than what is necessary for storing the input stream.
The gold standard are algorithms that use memory of size nearly linear in the maximum possible size of an output; such algorithms are called \emph{semi-streaming} algorithms.\footnote{The similar term \emph{streaming} algorithms often refers to algorithms whose space complexity is poly-logarithmic in the parameters of their input. Such algorithms are irrelevant for the problem we consider because they do not have enough space even for storing the output of the algorithm.} For {\SMM} and {\MSMM}, this implies that a semi-streaming algorithm is a data stream algorithm that uses $O(k \log^{O(1)} |\cN|)$ space, where $k$ is the rank of the matroid constraint.

The description of submodular functions and matroids can be exponential in the size of their ground sets, and therefore, it is important to define the way in which an algorithm may access them. We make the standard assumption that the algorithm has two oracles: a \emph{value oracle} and an \emph{independence oracle} which, given a set $S \subseteq \cN$ of elements that are explicitly \emph{stored} in the memory of the algorithm, returns the value of $f(S)$ and an indicator whether $S \in \cI$, respectively.

\subsection{Additional Notation and Definitions} \label{ssc:additional_definitions}

\paragraph*{Multilinear Extension.} A set function $f\colon 2^\cN \to \bR$ assigns values only to subsets of $\cN$. If we think of a set $S$ as equivalent to its characteristic vector $\characteristic_S$ (a vector in $\{0, 1\}^\cN$ that has a value of $1$ in every coordinate $u \in S$ and a value of $0$ in the other coordinates), then we can view $f$ as a function over the integral vectors in $[0, 1]^\cN$. It is often useful to extend $f$ to general vectors in $[0, 1]^\cN$. There are multiple natural ways to do that. However, in this paper, we only need the multilinear extension $F$. Given a vector $x \in [0, 1]^\cN$, let $\RSet(x)$ denote a random subset of $\cN$ that includes each element $u \in \cN$ with probability $x_u$, independently. Then,
\[
	F(x)
	=
	\bE[f(\RSet(x))]
	=
	\sum_{S \subseteq \cN} \left[f(S) \cdot \prod_{u \in S} x_u \cdot \prod_{u \not \in S} (1 - x_u)\right]
	\enspace.
\]
One can observe that, as is implied by its name, the multilinear extension is a multilinear function. This implies that, for every vector $x \in [0, 1]^\cN$, the partial derivative $\frac{\partial F}{\partial x_u}(x)$ is equal to $F(x + (1 - x_u) \cdot \characteristic_u) - F(x - x_u \cdot \characteristic_u)$. Note that in the last expression we have used $\characteristic_u$ as a shorthand for $\characteristic_{\{u\}}$. We often also use $\partial_u F(x)$ as a shorthand for $\frac{\partial F}{\partial x_u}(x)$. When $f$ is submodular, its multilinear extension $F$ is known to be concave along non-negative directions~\cite{calinescu2011maximizing}.

\paragraph*{General Notation.} Given a set $S \subseteq \cN$ and an element $u \in \cN$, we denote by $S + u$ and $S - u$ the expressions $S \cup \{u\}$ and $S \setminus \{u\}$, respectively. Additionally, given two vectors $x, y \in [0, 1]^\cN$, we denote by $x \vee y$ and $x \wedge y$ the coordinate-wise maximum and minimum operations, respectively.

\paragraph*{Additional Definitions from Matroid Theory.} Matroid theory is very extensive, and we refer the reader to~\cite{schrijver2003combinatorial} for a more complete coverage of it. Here, we give only a few basic definitions from this theory that we employ below.
Given a matroid $M = (\cN, \cI)$, a set $S \subseteq \cN$ is called \emph{base} if it is an independent set that is maximal with respect to inclusion (i.e., every super-set of $S$ is dependent), and it is called \emph{cycle} if it is a dependent set that is minimal with respect to inclusion (i.e., every subset of $S$ is independent). An element $u \in \cN$ is called a self-loop if $\{u\}$ is a cycle. Notice that such elements cannot appear in any feasible solution for either {\SMM} or {\MSMM}, and therefore, one can assume without loss of generality that there are no self-loops in the ground set.

The rank of a set $S \subseteq \cN$, denoted by $\rank_M(S)$, is the maximum size of an independent set $T \in \cI$ which is a subset of $S$. The subscript $M$ is omitted when it is clear from the context. We also note that $\rank_M(\cN)$ is exactly the rank of the matroid $M$ (i.e., the maximum size of an independent set in $M$), and therefore, it is customary to define $\rank(M) = \rank_M(\cN)$. We say that a set $S \subseteq \cN$ spans an element $u \in \cN$ if adding $u$ to $S$ does not increase the rank of the set $S$, i.e., $\rank(S) = \rank(S + u)$---observe the analogy between this definition and being spanned in a vector space. Furthermore, we denote by $\Span_M(S) \coloneqq \{u \in \cN \mid \rank(S) = \rank(S + u)\}$ the set of elements that are spanned by $S$. Again, the subscript $M$ is dropped when it is clear from the context.

\section{Single-Pass Algorithm}
\label{sec:singlepass}

In this section, we present our single-pass algorithm for the \texttt{Monotone Submodular Maximi\-zation subject to a Matroid Constraint} problem (\MSMM). %Recall that in this problem we are given a non-negative monotone submodular function $f\colon 2^\cN \to \nnR$, and the objective is to maximize $f$ subject to a single matroid constraint $M = (\cN, \cI)$.
The properties of the algorithm we present are given by the following theorem.
\thmMainSingleMatMon*

Our algorithm can be extended to the case in which the objective function is non-monotone (i.e., the {\SMM} problem) at the cost of obtaining a lower approximation factor, yielding the following theorem. However, for the sake of concentrating on our main new ideas, we devote this section to the proof of \cref{thm:mainSingleMatMon} and defer the proof of \cref{thm:mainSinglaMatNonMon} to \cref{app:singleMatNonMon}.
\begin{theorem}\label{thm:mainSinglaMatNonMon}
There is a single-pass semi-streaming algorithm for maximizing a non-negative (not necessarily monotone) submodular function subject to a matroid constraint of rank $k$ that stores $O(k)$ elements, requires $\widetilde{O}(k)$ additional memory, and achieves an approximation guarantee of $0.1921$.
\end{theorem}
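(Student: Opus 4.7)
The plan is to extend the single-pass algorithm underlying \cref{thm:mainSingleMatMon} to non-monotone objectives by incorporating a standard random-rejection step, following the paradigm established for streaming non-monotone submodular maximization subject to matroid constraints (e.g., the approach of Feldman--Karbasi--Kazemi that yielded the prior $0.1715$ bound). The first task is to revisit the analysis of \cref{thm:mainSingleMatMon} in \cref{sec:singlepass} and isolate every place where monotonicity of $f$ is invoked. Typically, this appears only in inequalities of the form $f(S \cup \OPT) \geq f(\OPT)$ or in lower bounds on residual contributions $f(u \mid S)$; these are the inequalities that must be replaced when $f$ is allowed to be non-monotone.

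Next, I would modify the algorithm as follows: on each arrival, the element is independently kept with probability $q$ and discarded with probability $1-q$; only kept elements are fed into the monotone procedure from \cref{sec:singlepass}. The analysis combines two ingredients. First, the $0.3178$ guarantee of \cref{thm:mainSingleMatMon} applied to the subsampled stream lower-bounds $\bE[f(\text{ALG})]$ in terms of $F(q \cdot \characteristic_{\OPT})$, where $F$ is the multilinear extension from \cref{ssc:additional_definitions}. Second, the standard non-monotone sampling lemma, a direct consequence of submodularity and non-negativity of $f$, states that for any fixed $T \subseteq \cN$ and a random set $R$ containing each element of $\cN$ independently with probability $q$ one has $\bE[f(T \cup R)] \geq (1-q)\,f(T)$. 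Applying this with $T = \OPT$ substitutes for the missing monotonicity and lets us relate the value the algorithm would have obtained on the full stream to the value it obtains on the sampled stream. Combining the two inequalities yields a bound of the form $\bE[f(\text{ALG})] \geq 0.3178 \cdot g(q) \cdot f(\OPT)$ for a function $g$ capturing the sampling-vs-contraction trade-off, and numerical optimization over $q$ recovers the stated constant $0.1921$.

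The main obstacle will be verifying that the swap-based and thresholding arguments from the proof of \cref{thm:mainSingleMatMon} carry over when the stream is replaced by a random subsample. In particular, the monotone algorithm's decisions depend on estimates of $f(\OPT)$ through thresholds tied to the matroid rank $k$, and these thresholds must be recalibrated so that, in expectation, the subsampled analysis still produces a feasible independent set of the right total value; the thresholding loses only a controllable factor because $F$ is concave along the non-negative direction $\characteristic_{\OPT}$. A secondary subtlety, resolved by standard coupling, is that the internal randomness of the sampling filter must be shown to be compatible with the adversarial stream order—this is fine because the guarantee of \cref{thm:mainSingleMatMon} holds for every fixed order. Once these pieces are in place, together with the observation that a semi-streaming memory footprint of $O(k)$ stored elements and $\widetilde{O}(k)$ auxiliary memory is preserved by the filter (which only rejects elements and keeps no additional state per element), the statement of \cref{thm:mainSinglaMatNonMon} follows, and the detailed execution is deferred to \cref{app:singleMatNonMon}.
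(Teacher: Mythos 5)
Your high-level instinct --- locate the uses of monotonicity and replace them by a ``union with a sparse random set loses only a $(1-q)$ factor'' argument --- is the right one, and it is indeed the spirit of the paper's proof. But there are two concrete problems with the route you propose. First, you cannot invoke the $0.3178$ guarantee of \cref{thm:mainSingleMatMon} as a black box on the subsampled stream: that guarantee is itself proved using monotonicity (specifically in \cref{lem:diffOptToAall}, where $f(\OPT)-F(\aall)\leq F(\aall\vee\characteristic_{\OPT})-F(\aall)$ requires $F$ to be monotone), and feeding the algorithm a random substream does nothing to restore that inequality for a non-monotone $f$. Second, random stream subsampling pays twice: the elements of $\OPT$ that are discarded by the filter are simply unavailable, costing a factor of roughly $q$ in the comparison against $f(\OPT)$, on top of the $(1-q)$ factor from the sampling lemma. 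A product of the form $0.3178\cdot q(1-q)$ is at most $0.3178/4\approx 0.079$, and even more careful charging schemes of this flavor land near the previous $0.1715$ bound rather than $0.1921$.

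The paper avoids the double loss by never discarding elements. Every arriving element is still processed and may enter the sets $A_i$ and hence the output; the only change is that the algorithm stops increasing the fractional coordinate $\aall(u)$ of an element once it reaches a cap $p=\frac{1}{m(c-1)+1}$. Because $\aall\in[0,p+\frac{1}{m}]^\cN$, the deterministic analogue of your sampling lemma (\cref{lem:multlin_value_small_coords}, via the Lov\'asz extension) gives $F(\aall\vee\characteristic_{\OPT})\geq(1-p-\frac 1m)f(\OPT)$, replacing monotonicity at exactly the one step where it was used. The price is that the derivative bound $\partial_u F(\aall)\leq c^{\ell(u)+1}$ can fail for capped elements; \cref{lem:boundDerivatiesLargeP} shows it degrades only to $\frac{1}{1-p}c^{\ell(u)+1}$, and this loss cancels against the factor $1-\aall(u)\leq 1-p$ for those same elements in \cref{lem:optDerivativesNonMon}. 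The net effect is a single multiplicative loss of $1-p\approx\frac{\alpha}{\alpha+1}$ relative to the monotone chain, giving a ratio of $\frac{(1-e^{-\alpha})\alpha}{(\alpha+1)^2}$, and re-optimizing $\alpha$ from $1.1462$ to $1.9532$ yields $0.1921$. If you want to salvage a sampling-based variant, you would need the filter to act on the fractional mass assigned to each element rather than on the element's presence in the stream --- which is precisely what the deterministic cap accomplishes.
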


Throughout this section, we denote by $P_M \coloneqq \{x\in \mathbb{R}_{\geq 0}^\cN \colon x(S) \leq \rank(S) \;\forall S\subseteq \cN\}$ the matroid polytope of $M$.

\subsection{The algorithm} \label{sec:algorithm}

The algorithm we use to prove \cref{thm:mainSingleMatMon} appears as \cref{alg:singleMatMonotone}. This algorithm gets a parameter $\eps > 0$ and starts by initializing a constant $\alpha$ to be approximately the single positive value obeying $\alpha + 2 = e^{\alpha}$. We later prove that the approximation ratio guaranteed by the algorithm is at least $\frac{1}{\alpha + 2} - \eps$, which is better than the approximation ratio stated in \cref{thm:mainSingleMatMon} for a small enough $\eps$. After setting the value of $\alpha$, \cref{alg:singleMatMonotone} defines some additional constants $m$, $c$, and $L$ using $\eps$ and $\alpha$. We leave these variables representing different constants as such in the procedure and analysis, which allows for obtaining a better understanding later on of why these values are optimal for our analysis. We also note that, as stated, \cref{alg:singleMatMonotone} is efficient (i.e., runs in polynomial time) only if the multilinear extension and its partial derivatives can be efficiently evaluated. If that cannot be done, then one has to approximate $F$ and its derivatives using Monte-Carlo simulation, which is standard practice (see, for example, \cite{calinescu2011maximizing}).
We omit the details to keep the presentation simple, but we note that, as in other applications of this standard technique, the incurred error can easily be kept negligible, and therefore, does not affect the guarantee stated in \cref{thm:mainSingleMatMon}.

\cref{alg:singleMatMonotone} uses sets $A_i$ and vectors $a_i\in [0,1]^\cN$ for certain indices $i\in \mathbb{Z}$.
Throughout the algorithm, we only consider finitely many indices $i\in \mathbb{Z}$.
However, we do not know upfront which indices within $\mathbb{Z}$ we will use.
To simplify the presentation, we therefore use the convention that whenever the algorithm uses for the first time a set $A_i$ or vector $a_i$, then $A_i$ is initialized to be $\varnothing$ and $a_i$ is initialized to be the zero vector.
The largest index ever used in the algorithm is $q$, which is computed toward the end of the algorithm at Line~\ref{algline:define_q}.

For each $i\in \mathbb{Z}$, the set $A_i$ is an independent set consisting of elements $u$ that already arrived and for which the marginal increase with respect to a reference vector $a$ (at the moment when $u$ arrives) is at least $c^i$.
More precisely, whenever a new element $u\in \cN$ arrives and its marginal return $\partial_u F(a)$ exceeds $c^i$ for an index $i\in \mathbb{Z}$ in a relevant range, then we add $u$ to $A_i$ if $A_i + u$ remains independent.
When adding $u$ to $A_i$, we also increase the $u$-entry of the vector $a_i$ by $\frac{c^i}{m\cdot \partial_u F(a)}$.
The vector $a$ built up during the algorithm has two key properties.
First, its multilinear value approximates $f(\OPT)$ up to a constant factor.
Second, one can derive from the sets $A_i$ a vector $s$ (see \cref{alg:singleMatMonotone}) such that $F(s)$ is close to $F(a)$ and $s$ is contained in the matroid polytope $P_\cM$.

Whenever an element $u\in \cN$ arrives, the algorithm first computes the largest index $i(u)\in \mathbb{Z}$ fulfilling $c^{i(u)} \leq \partial_u F(a)$. It then updates sets $A_i$ and vectors $a_i$ for indices $i \leq i(u)$.
Purely conceptually, the output of the algorithm would have the desired guarantees even if all infinitely many indices below $i(u)$ where updated. 
However, to obtain an algorithm running in finite (actually even polynomial) time and linear memory, we do not consider indices below $\max\{b, i(u) - \rank(M) - L\}$ in the update step.
Capping the considered indices like this has only a very minor impact in the analysis because the contribution of the vectors $a_i$ to the multilinear extension value of the vector $a$ is geometrically decreasing with decreasing index $i$.

In the algorithm, and also the analysis that follows, we sometimes use sums over indices that go up to $\infty$. However, whenever this happens, beyond some finite index, all terms are zero. Hence, such sums are well defined.

\begingroup
\begin{algorithm}[ht]
\caption{Single-Pass Semi-Streaming Algorithm for {\MSMM}} \label{alg:singleMatMonotone}
\begin{algorithmic}[1]
\State Set $\alpha = 1.1462$, $m=\left\lceil\frac{3 \alpha}{\eps}\right\rceil$, $c=\frac{m}{m-\alpha}$, and $L = \left\lceil \log_c(\frac{2 c}{\eps(c-1)}) \right\rceil$.
\State Set $a = 0 \in [0, 1]^\cN$ to be the zero vector, and let $b=-\infty$.

\For{every element arriving $u \in \cN$, if $\partial_u F(a) > 0$}\label{algline:loopOverU} 
	\State Let $i(u) = \left\lfloor \log_c(\partial_u F(a)) \right\rfloor$.\label{algline:index_i_u}
	\Comment{%
	\raggedright Thus, $i(u)$ is largest index $i\in \mathbb{Z}$ with $c^i \leq \partial_u F(a)$.%
	}
	\For{$i = \max\{b, i(u) - \rank(M) - L\}$ \textbf{to} $i(u)$}\label{algline:loopOverI}
		\If{$A_i + u \in \cI$}
			\State $A_i \gets A_i + u$.
			\State $a_i \gets a_i + \frac{c^i}{m\cdot \partial_u F(a)} \characteristic_u$.\label{algline:modify_ai}
		\EndIf
	\EndFor

\State Set $b \gets h - L$, where $h$ is largest index $i\in \mathbb{Z}$ satisfying $\sum_{j=i}^\infty |A_j| \geq \rank(M)$.

\State $a \gets \sum_{i=b}^{\infty} a_i$.\label{algline:updateA}

\State Delete from memory all sets $A_i$ and vectors $a_i$ with $i\in \mathbb{Z}_{<b}$.\label{algline:reduceMem}

\EndFor

\State Set $S_k \gets \varnothing$ for $k \in \{0,\ldots, m-1\}$.

\State Let $q$ be largest index $i\in \mathbb{Z}$ with $A_i\neq \varnothing$.\label{algline:define_q}

\For{$i = q$ \textbf{to} $b$ (stepping down by $1$ at each iteration)}\label{algline:build_Sk}
	\While{$\exists u \in A_i \setminus S_{(i \bmod m)}$ with $S_{(i \bmod m)} + u\in \cI$}
		\State $S_{(i \bmod m)} \gets S_{(i \bmod m)} + u$.
	\EndWhile
\EndFor
\State \Return{a rounding $R\in \mathcal{I}$ of the fractional solution $s\coloneqq \frac{1}{m}\sum_{k=0}^{m-1} \characteristic_{S_k}$ with $f(R) \geq F(s)$}.\label{algline:returnMonotSubmodMat}
\end{algorithmic}
\end{algorithm}
\endgroup

Finally, we provide details on the return statement in Line~\ref{algline:returnMonotSubmodMat} of the algorithm.
This statement is based on a fact stated in~\cite{calinescu2011maximizing}, namely that a point in the matroid polytope can be rounded losslessly to an independent set.
More formally, given any point $y\in P_\cM$ in the matroid polytope, there is an independent set $I\in \cI$ with $f(I) \geq F(y)$.
Moreover, assuming that the multilinear extension $F$ can be evaluated efficiently, such an independent set $I$ can be computed efficiently.
As before, if one is only given a value oracle for $f$, then the exact evaluation of $F$ can be replaced by a strong estimate obtained through Monte-Carlo sampling, leading to a randomized algorithm to round $y$ to an independent set $I$ with $f(I) \geq (1-\delta)F(y)$ for an arbitrarily small constant $\delta > 0$.

We highlight that we can assume in what follows that there is at least one element $u\in \cN$ which gets considered in the for-loop on Line~\ref{algline:loopOverU}, i.e., it fulfills $\partial_u F(a) > 0$ when appearing in the for-loop.
Note that if this does not happen, then we are in a trivial special case where $a$ remains the zero vector and $\partial_u F(a) = 0$ for all $u\in \cN$, which corresponds to $f(\cN) = f(\varnothing)$.
In this case, all sets $S_k$ for $k\in \{0,\ldots, m-1\}$ are empty, which implies that $s$ is the zero vector, and one can simply return $R=\varnothing$, which fulfills $f(R) \geq F(s)$, and is even a global maximizer of $f(S)$ over all sets $S\subseteq \cN$.

\subsection{Analysis of \texorpdfstring{\cref{alg:singleMatMonotone}}{Algorithm~\ref{alg:singleMatMonotone}}}

We now show that \cref{alg:singleMatMonotone} implies \cref{thm:mainSingleMatMon}.
Let $\eps \in (0,1]$ in what follows.
As mentioned, we sometimes restrict the considered index range for $i$ in the algorithm to make sure that the algorithm has a finite running time and only uses limited memory.
This happens in particular in Line~\ref{algline:updateA} when updating $a$, where we only consider indices starting from $b$.
However, for the analysis, it is convenient to look at the vector $\aall = \sum_{i=-\infty}^\infty \overline{a}_i$ obtained without this lower bound, where $\overline{a}_i$ is the vector $a_i$ when the algorithm terminates.
In the definition of $\aall$, we also consider indices $i\in \mathbb{Z}$ together with corresponding vectors $\overline{a}_i$ that have been removed from memory in Line~\ref{algline:reduceMem}.
Here, the vector $\overline{a}_i$ is simply the last vector $a_i$ before it got removed from memory in Line~\ref{algline:reduceMem}.
Similarly, we let $\overline{A}_i\subseteq \cN$ be the set $A_i$ at the end of the algorithm or, in case $A_i$ got removed from memory at some point, $\overline{A}_i$ is the set $A_i$ right before it got removed from memory. 

Note that because the coordinates of the vectors $a_i$ never decrease throughout the algorithm, every vector $a$ encountered throughout \cref{alg:singleMatMonotone} is upper bounded, coordinate-wise, by $\aall$.
In the following, we compare both the value of an optimal solution and the value $f(R) \geq F(s)$ of the returned set to $F(\aall)$.
We start by making sure that the different steps of the algorithm are well defined.
For this, we first show that $\aall$, and therefore also any vector $a$ encountered through \cref{alg:singleMatMonotone}, is contained in the box $[0,1]^\cN$, which implies that the computations of partial derivatives $\partial_u F(a)$ are well defined.

\begin{observation} \label{obs:overline_a_is_in_box}
It holds that $\aall\in [0,1]^\cN$. Consequently, throughout the algorithm, the vector $a$ is also contained in $[0,1]^\cN$.
\end{observation}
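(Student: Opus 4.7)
The plan is to bound, for each element $u \in \cN$, the total value $\aall_u = \sum_{i \in \mathbb{Z}} (\overline{a}_i)_u$ separately, and show that it never exceeds $1$ (in fact, that it never exceeds $1/\alpha$). The key structural observation is that the coordinate $(a_i)_u$ can only be modified in \cref{algline:modify_ai} at the single moment when the element $u$ arrives in the stream; it is never touched again afterwards. Therefore, the entire contribution to $\aall_u$ comes from the updates performed during the arrival of $u$, and one only needs to analyze the for-loop on \cref{algline:loopOverI} for that specific $u$.

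At that arrival, let $\partial_u F(a)$ denote the derivative value used by the algorithm, which is fixed throughout the loop over $i$ for this element. Whenever $u$ is added to $A_i$, the $u$-coordinate of $a_i$ increases by exactly $\frac{c^i}{m\cdot \partial_u F(a)}$, and this only happens for indices $i \leq i(u)$ (since the loop goes up to $i(u)$). Dropping the independence filter and the lower cut-off $\max\{b, i(u) - \rank(M) - L\}$ only enlarges the sum, so
\[
\aall_u \;\leq\; \sum_{i=-\infty}^{i(u)} \frac{c^i}{m\cdot \partial_u F(a)} \;=\; \frac{1}{m\cdot \partial_u F(a)} \cdot \frac{c^{i(u)+1}}{c-1}.
\]
By the definition of $i(u)$ on \cref{algline:index_i_u}, we have $c^{i(u)} \leq \partial_u F(a)$, hence $c^{i(u)+1} \leq c \cdot \partial_u F(a)$, and consequently $\aall_u \leq \frac{c}{m(c-1)}$.

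Substituting $c = \frac{m}{m-\alpha}$ from the setup of the algorithm gives $c-1 = \frac{\alpha}{m-\alpha}$, and therefore $\frac{c}{m(c-1)} = \frac{1}{\alpha}$. Since $\alpha = 1.1462 > 1$, this yields $\aall_u \leq 1/\alpha < 1$ for every $u \in \cN$; together with the trivial non-negativity of each $a_i$, we conclude $\aall \in [0,1]^\cN$. For the second part, note that at every moment of the algorithm, $a$ is a sum of the form $\sum_{i = b}^\infty a_i$ where each current $a_i$ is coordinate-wise bounded by $\overline{a}_i$; hence $a \leq \aall$ coordinate-wise, giving $a \in [0,1]^\cN$. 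There is no real obstacle here — the only point requiring mild care is verifying that a given coordinate $(a_i)_u$ is updated only during the unique arrival of $u$, so that the per-element geometric-sum bound indeed captures the entire contribution to $\aall_u$.
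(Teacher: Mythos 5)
Your proof is correct and follows essentially the same route as the paper's: both arguments reduce to the single moment when $u$ arrives, bound $\aall(u)$ by the geometric series $\sum_{i \le i(u)} \frac{c^i}{m\,\partial_u F(a)}$ using $c^{i(u)} \le \partial_u F(a)$, and evaluate it to $\frac{c}{m(c-1)} = \frac{1}{\alpha} \le 1$. Nothing further is needed.
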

\begin{proof}
Consider an element $u\in \cN$ and the moment when $u$ was considered in the for-loop at Line~\ref{algline:loopOverU} of \cref{alg:singleMatMonotone}.
Let $i(u)$ be the index computed at Line~\ref{algline:index_i_u} of the algorithm.
Hence, for the vector $a$ at that moment we have $c^{i(u)} \leq \partial_u F(a)$.
Thus,
\begin{align*}
\aall(u) \leq \sum_{j=-\infty}^{i(u)} \frac{c^j}{m\cdot \partial_u F(a)}
 \leq \frac{1}{m} \sum_{j=-\infty}^0 c^j
 =\frac{1}{m}\frac{c}{c-1}
 =\frac{1}{\alpha}
 \leq 1\enspace,
\end{align*}
where the second inequality follows from $c^{i(u)} \leq \partial_u F(a)$, and the second equality holds by the definition of $c$, i.e., $c=\frac{m}{m-\alpha}$.
\end{proof}

Moreover, we highlight that the fractional point $s$ rounded at the end of \cref{alg:singleMatMonotone} at Line~\ref{algline:returnMonotSubmodMat} is indeed in the matroid polytope $P_M$. This holds because it is a convex combination of the sets $S_k$ for $k\in \{0,\ldots m-1\}$, each of which is an independent set by construction. Hence, the rounding performed in Line~\ref{algline:returnMonotSubmodMat} is indeed possible, as discussed.

We now bound the memory used by the algorithm. Note that, for any constant $\eps$ (which implies that $c$ is also a constant), the guarantee in the next lemma becomes $O(\rank(M))$, which is the guarantee we need in order to prove \cref{thm:mainSingleMatMon}. One can also observe that \cref{alg:singleMatMonotone} stores one non-zero entry in its $a_i$ vectors for every element stored in the sets $A_i$; and thus, the next lemma also implies that \cref{alg:singleMatMonotone} is a semi-streaming algorithm using space $\widetilde{O}(\rank(M))$.
\begin{lemma} \label{lem:elements_in_memory_raw}
At any point in time, the sum of the cardinalities of all sets $A_i$ that \cref{alg:singleMatMonotone} has in memory is $O(L \cdot \rank(M)) = O\left(\frac{\log\left(\frac{2c}{\eps (c - 1)}\right)}{\log c} \rank(M)\right)$.
\end{lemma}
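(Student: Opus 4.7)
The plan is to start from the structural observation that each set $A_i$ is, by construction, added to only when $A_i+u\in\cI$, so $A_i$ is an independent set of $M$ at all times, and hence $|A_i|\leq \rank(M)$. So it suffices to control the number of indices $i$ with $A_i\neq \varnothing$ currently in memory, up to a multiplicative factor depending on $L$.

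The cleanest moment to do the bookkeeping is right after the memory-reduction step on \cref{algline:reduceMem} at the end of an outer iteration. At that point, the only surviving sets $A_i$ are those with $i\geq b$, and by construction $b = h-L$, where $h$ is the largest index satisfying $\sum_{j\geq h}|A_j|\geq \rank(M)$. I would then split the remaining indices into two ranges. First, the indices $i\in[b,h-1]$: there are at most $L$ of them and each contributes at most $\rank(M)$, for a combined total of $L\cdot \rank(M)$. Second, the indices $i\geq h$: by the maximality of $h$, $\sum_{j>h}|A_j|<\rank(M)$, and combining with $|A_h|\leq \rank(M)$ yields $\sum_{j\geq h}|A_j|<2\rank(M)$. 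Summing the two contributions gives a bound of $(L+2)\rank(M)$ on $\sum_i|A_i|$ immediately after the memory reduction.

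To extend this to every point in time, I would note that between two consecutive executions of \cref{algline:reduceMem}, the only thing that changes $|A_i|$ is the inner for loop on \cref{algline:loopOverI}. That loop iterates over at most $\rank(M)+L+1$ indices and adds at most one element to each corresponding $A_i$. Hence $\sum_i|A_i|$ can temporarily grow by at most $\rank(M)+L+1$ before the next memory reduction restores the $(L+2)\rank(M)$ bound. Together these give $\sum_i|A_i| = O(L\cdot \rank(M))$ at every point in time, matching the claim (and the stated expansion $O(\log(2c/(\eps(c-1)))/\log c \cdot \rank(M))$ via the definition of $L$).

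The only mildly delicate point is the degenerate case in which $\sum_i|A_i|$ never reaches $\rank(M)$, so no index $i\in\mathbb{Z}$ satisfies $\sum_{j\geq i}|A_j|\geq \rank(M)$ and $h$ (hence $b$) remains undefined or $-\infty$; in this case no memory reduction is ever triggered, but the desired bound holds trivially because $\sum_i|A_i|<\rank(M)$. I expect this edge case, together with the observation that $b$ is monotone non-decreasing across iterations (so earlier deletions are not ``undone'' by later changes of $h$), to be the only subtlety; the core estimate is the straightforward two-range decomposition above.
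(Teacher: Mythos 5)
Your proposal is correct and follows essentially the same argument as the paper: bound the sum right after the memory-reduction step by splitting the surviving indices at $h=b+L$ (each of the $O(L)$ low indices contributes at most $\rank(M)$ since every $A_i$ is independent, and the tail above $h$ contributes less than $2\rank(M)$ by maximality of $h$), then absorb the at most $\rank(M)+L+1$ elements that can be added between consecutive reductions. Your explicit treatment of the degenerate case where no valid $h$ exists is a small point the paper leaves implicit, but otherwise the two proofs coincide.
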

\begin{proof}
It suffices to bound the number of elements $\sum_{i=b}^\infty |A_i|$ after Line~\ref{algline:reduceMem}.
Indeed, we never have more than that many elements in memory plus the number of elements added in a single iteration of the for-loop at Line~\ref{algline:loopOverU}, which is at most $\rank(M)+ L = O(L \cdot \rank(M))$. %\footnote{Note that these elements added in one iteration of the for-loop before the memory clean-up step at Line~\ref{algline:reduceMem} are not even relevant from a streaming perspective, because for this, we only need to check the memory that needs to be saved right before a next element appears.}
Hence, consider the state of the algorithm at any moment right after the executing of Line~\ref{algline:reduceMem}.
We have
\begin{align*}
\sum_{i=b}^{\infty} |A_i| =     \sum_{i=b}^{b+L} |A_i| + \sum_{i=b+L+1}^{\infty} |A_i|
      									  < (L+1) \rank(\cM) + \rank(\cM)
											    = O(L \cdot \rank(\cM))\enspace,
\end{align*}
where the inequality follows from the fact that the first sum has $L+1$ terms, each is the cardinality of an independent set, which is upper bounded by $\rank(M)$; moreover, the second term in the sum is strictly less than $\rank(M)$ by the definition of $h$ in \cref{alg:singleMatMonotone} (note that $h = b+L$).
\end{proof}

We now start to relate the different relevant quantities to $F(\aall)$. We start by upper bounding the value of $F(\aall)$ as a function of the sets $\overline{A}_i$.
\begin{lemma}\label{lem:upperBoundFAall}
\begin{equation*}
F(\aall) \leq f(\varnothing) + \frac{1}{m} \sum_{i\in \mathbb{Z}} |\overline{A}_i|\cdot c^i\enspace.
\end{equation*}
\end{lemma}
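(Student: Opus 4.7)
The plan is to chronologically enumerate all the increments that contribute to $\aall$ and telescope $F(\aall) - F(0)$ along this sequence, using concavity of $F$ in non-negative directions together with submodularity.

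More precisely, list every addition performed in \cref{algline:modify_ai} across the whole run of \cref{alg:singleMatMonotone} as a sequence of $T$ updates, indexed by $t=1,\ldots,T$. The $t$-th update is characterized by a triple $(u_t, i_t, a^\star_t)$, where $u_t\in\overline{A}_{i_t}$ is the element being added and $a^\star_t$ is the reference vector $a$ that the algorithm uses when executing this line (recall that $a$ is refreshed only in \cref{algline:updateA}, after the whole inner for loop completes). The size of the increment is then
\[
\Delta_t \;=\; \frac{c^{i_t}}{m\cdot \partial_{u_t}F(a^\star_t)},
\]
and setting $a^{(0)}=0$ and $a^{(t)} = a^{(t-1)} + \Delta_t\, \characteristic_{u_t}$, we have $a^{(T)} = \aall$. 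I would then telescope
\[
F(\aall) - f(\varnothing) \;=\; F(a^{(T)})-F(a^{(0)}) \;=\; \sum_{t=1}^T \bigl[F(a^{(t)})-F(a^{(t-1)})\bigr].
\]

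Since $f$ is submodular, $F$ is concave along any non-negative direction (\cite{calinescu2011maximizing}); applied to the segment from $a^{(t-1)}$ to $a^{(t)}$, concavity yields
\[
F(a^{(t)})-F(a^{(t-1)}) \;\le\; \Delta_t\cdot \partial_{u_t} F(a^{(t-1)}).
\]
Plugging in the definition of $\Delta_t$, the desired bound $F(a^{(t)})-F(a^{(t-1)}) \le c^{i_t}/m$ reduces to showing $\partial_{u_t}F(a^{(t-1)}) \le \partial_{u_t} F(a^\star_t)$, which, again by submodularity of $f$, is implied by the coordinate-wise inequality $a^\star_t \le a^{(t-1)}$.

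I expect the only delicate point to be establishing this last inequality $a^\star_t \le a^{(t-1)}$, so I would treat it carefully. The vector $a^\star_t$ equals $\sum_{i\ge b} a_i$ as set by the most recent execution of \cref{algline:updateA} before update $t$; hence $a^\star_t$ is a sum of some subset of the increments indexed $1,\ldots,t-1$ (indeed, it excludes at least the increments in the same outer iteration that precede $t$, as well as any increments to sets $a_i$ that were purged in \cref{algline:reduceMem}). Since every individual increment is non-negative, this immediately gives $a^\star_t \le a^{(t-1)}$ coordinate-wise. Combining the three displayed inequalities and summing over $t$, while regrouping the sum $\sum_t c^{i_t}/m$ by the value of $i_t$, gives
\[
F(\aall) - f(\varnothing) \;\le\; \frac{1}{m}\sum_{t=1}^T c^{i_t} \;=\; \frac{1}{m}\sum_{i\in\mathbb{Z}} |\overline{A}_i|\cdot c^i,
\]
which is the stated bound.
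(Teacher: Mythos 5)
Your proposal is correct and follows essentially the same route as the paper: the paper likewise views $\aall$ as built up increment by increment (its auxiliary vector $w$ is your $a^{(t)}$), bounds each step's gain by the corresponding gain measured at the algorithm's reference vector $a$ using the coordinate-wise inequality $a \le w$ (your $a^\star_t \le a^{(t-1)}$), and regroups the resulting sum of $c^{i}/m$ terms by index. The only cosmetic difference is that you pass through partial derivatives and concavity where the paper uses the discrete single-coordinate difference directly, which for the multilinear extension is the same computation.
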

\begin{proof}
One can think of the vector $\aall$ as being constructed iteratively starting with the zero vector $w=0$ as follows. Whenever \cref{alg:singleMatMonotone} is at Line~\ref{algline:modify_ai}, we update $w$ by $w \gets w + \frac{c^i}{m\cdot \partial F(a)}\characteristic_u$, where $a\in [0,1]^\cN$ is the current vector $a$ of the algorithm at that moment in the execution. Note that we have $w\geq a$ because $a=\sum_{j=b}^\infty a_j$, for the current value of $b$ and the current vectors $a_j$, whereas $w=\sum_{j\in \mathbb{Z}}a_j$. Hence, by submodularity of $f$, we have that the increase of $F(w)$ in this iteration is upper bounded by
\begin{equation*}
F\left(q+\frac{c^i}{m\cdot \partial_u F(a)}\characteristic_u\right) - F(q) \leq F\left(a + \frac{c^i}{m\cdot \partial_u F(a)}\characteristic_u\right) - F(a) = \frac{c^i}{m} \enspace.
\end{equation*}
Hence, the total change in $F(w)$ starting from $F(0)=f(\varnothing)$ to $F(\aall)$ is therefore obtained by summing the above left-hand side over all occurrences when algorithm is at Line~\ref{algline:modify_ai}, which leads to
\begin{equation*}
F(\aall) - F(0) \leq \frac{1}{m} \sum_{i\in \mathbb{Z}} |\overline{A}_i|\cdot c^i\enspace,
\end{equation*}
thus completing the proof.
\end{proof}

Let $\overline{b}$ be the value of $b$ at the end of the algorithm.
A key difference between the fractional point $s$, which is constructed during the algorithm, and the point $\aall$, is that sets $\overline{A}_i$ for indices below $\overline{b}$ have an impact on the value of $F(\aall)$ (but not on $F(s)$), as reflected in the upper bound on $F(\aall)$ in \cref{lem:upperBoundFAall}.
The following lemma shows that this difference in index range is essentially negligible because the impact of the sets $\overline{A}_i$ in these bounds decreases exponentially fast with decreasing index $i$.
\begin{lemma}\label{lem:lowLevelsAreNegligible}
\begin{equation*}
\frac{1}{m} \sum_{i=-\infty}^{\overline{b}-1} c^i \cdot |\overline{A}_i|
  \leq \frac{c^{\overline{b}}}{m(c-1)} \rank(M)
  \leq \frac{\eps}{2c} \cdot \frac{1}{m} \sum_{i=\overline{b}}^q c^i \cdot |\overline{A}_i| \enspace.
\end{equation*}
\end{lemma}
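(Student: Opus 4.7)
The plan is to prove the two inequalities separately, as each follows from a distinct structural property: the left inequality uses only the rank bound on each $\overline{A}_i$ together with the geometric series, while the right inequality uses the definition of $\overline{h}$ (and hence of $\overline{b}$) to produce enough ``mass'' in the indices $i \geq \overline{b}$, combined with the specific choice of $L$ in the algorithm.

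For the left inequality, I would observe that each set $A_i$ stays independent throughout the algorithm (we only add $u$ when $A_i + u \in \cI$), so $|\overline{A}_i| \leq \rank(M)$ for every $i$. Factoring this bound out of the sum gives
\begin{equation*}
\frac{1}{m}\sum_{i=-\infty}^{\overline{b}-1} c^i |\overline{A}_i|
 \;\leq\; \frac{\rank(M)}{m} \sum_{i=-\infty}^{\overline{b}-1} c^i
 \;=\; \frac{\rank(M)}{m}\cdot \frac{c^{\overline{b}}}{c-1}\enspace,
\end{equation*}
where the last equality is the standard evaluation of the geometric series with ratio $1/c < 1$. This is exactly the middle expression of the lemma.

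For the right inequality, the plan is to invoke the definition of $\overline{h}$: at termination $\overline{h}$ is the largest index with $\sum_{j=\overline{h}}^\infty |A_j| \geq \rank(M)$, and since the sets with index at least $\overline{b}$ have never been deleted from memory, we have $A_j = \overline{A}_j$ for all such $j$ at termination. Because $\overline{b} = \overline{h} - L \leq \overline{h}$, I can write
\begin{equation*}
\frac{1}{m}\sum_{i=\overline{b}}^{q} c^i |\overline{A}_i|
 \;\geq\; \frac{1}{m}\sum_{i=\overline{h}}^{q} c^i |\overline{A}_i|
 \;\geq\; \frac{c^{\overline{h}}}{m}\sum_{i=\overline{h}}^{q} |\overline{A}_i|
 \;\geq\; \frac{c^{\overline{h}}}{m}\rank(M)\enspace.
\end{equation*}
It then suffices to show $\frac{c^{\overline{b}}}{c-1} \leq \frac{\eps}{2c} \cdot c^{\overline{h}}$, i.e., $c^{L} = c^{\overline{h}-\overline{b}} \geq \frac{2c}{\eps(c-1)}$, which is exactly the reason for defining $L = \lceil \log_c\!\bigl(\frac{2c}{\eps(c-1)}\bigr) \rceil$.

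The only real subtlety is the bookkeeping: making sure that the sets $\overline{A}_j$ appearing in the second inequality really coincide with the ``current'' sets $A_j$ at the termination of the algorithm, so that the defining property of $\overline{h}$ applies to them. Since \cref{algline:reduceMem} only removes sets with index strictly less than the current $b$, and $b$ is monotone non-decreasing (as $h$ grows over time), no set with index $\geq \overline{b}$ is ever deleted, so the coincidence holds. I anticipate this to be the only place where one must be slightly careful; the rest is elementary geometric-series manipulation and the definitional choice of $L$.
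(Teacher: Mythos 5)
Your proposal is correct and follows essentially the same route as the paper's proof: the first inequality via $|\overline{A}_i|\leq \rank(M)$ and the geometric series, and the second via the defining property of $h=\overline{b}+L$ at termination together with the choice of $L$. Your extra bookkeeping remark that no set with index $\geq \overline{b}$ is ever deleted (so the $\overline{A}_j$ coincide with the final $A_j$ there) is a point the paper leaves implicit, and it is handled correctly.
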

\begin{proof}
The first inequality of the statement follows from $|\overline{A}_i|\leq \rank(M)$ for $i\in \mathbb{Z}$, which holds because $\overline{A}_i\in \cI$.
The second one follows from
\begin{equation}\label{eq:simpleLowerBoundFABar}
\frac{1}{m} \sum_{i= \overline{b}}^q c^i \cdot |\overline{A}_j| 
 \geq \frac{1}{m} \sum_{i= \overline{b}+L}^q c^i \cdot |\overline{A}_j| 
 \geq \frac{1}{m} \rank(M) \cdot c^{\overline{b}+L}
 \geq \frac{1}{m} \rank(M) \cdot c^{\overline{b}} \frac{2c}{\eps (c-1)}\enspace,
\end{equation}
where the second inequality follows by the fact that $\overline{b}+L$ is the value of $h$ at the end of the algorithm, which fulfills by definition $\sum_{i=h}^\infty |\overline{A_j}| \geq \rank(M)$, and the third inequality follows by our definition of $L$.
\end{proof}

Combining \cref{lem:lowLevelsAreNegligible} with \cref{lem:upperBoundFAall} now leads to the following lower bound on $F(\aall)$, described only in terms of sets $|\overline{A}_i|$ that have not been deleted from memory when the algorithm terminates.
\begin{corollary}\label{cor:boundFAallInAi}
\begin{equation*}
F(\aall) - f(\varnothing) \leq \left(1+\frac{\eps}{2c}\right) \cdot \frac{1}{m} \sum_{i=\overline{b}}^q c^i \cdot |\overline{A}_i|\enspace.
\end{equation*}
\end{corollary}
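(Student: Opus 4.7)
The corollary is a direct consequence of combining the two preceding lemmas, so my plan is essentially to execute that combination carefully while accounting for the index range.

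My starting point is the bound from \cref{lem:upperBoundFAall}, which I would rewrite as
\begin{equation*}
F(\aall) - f(\varnothing) \leq \frac{1}{m}\sum_{i\in\mathbb{Z}} c^i \cdot |\overline{A}_i|
= \frac{1}{m}\sum_{i=-\infty}^{\overline{b}-1} c^i \cdot |\overline{A}_i|
+ \frac{1}{m}\sum_{i=\overline{b}}^{\infty} c^i \cdot |\overline{A}_i|.
\end{equation*}
The second step is to truncate the upper limit of the second sum from $\infty$ to $q$, which is valid because $q$ is defined as the largest index with $A_q \neq \varnothing$, so $\overline{A}_i = \varnothing$ (hence $|\overline{A}_i| = 0$) for all $i > q$.

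Third, I would invoke \cref{lem:lowLevelsAreNegligible} to bound the tail $\frac{1}{m}\sum_{i=-\infty}^{\overline{b}-1} c^i \cdot |\overline{A}_i|$ by $\frac{\eps}{2c} \cdot \frac{1}{m}\sum_{i=\overline{b}}^{q} c^i \cdot |\overline{A}_i|$. Substituting this back and factoring out the common sum gives
\begin{equation*}
F(\aall) - f(\varnothing) \leq \left(\frac{\eps}{2c} + 1\right) \cdot \frac{1}{m}\sum_{i=\overline{b}}^{q} c^i \cdot |\overline{A}_i|,
\end{equation*}
which is exactly the claimed inequality. Since both input lemmas are already proved, there is no real obstacle here; the only thing to double-check is that the index ranges line up (in particular that the truncation at $q$ on the right-hand side is tight, using the definition of $q$) and that \cref{lem:lowLevelsAreNegligible} is applied with the correct sign and normalization. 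This is a one-paragraph calculation rather than a step with a genuine difficulty.
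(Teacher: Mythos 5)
Your proposal is correct and is essentially identical to the paper's proof: both combine \cref{lem:upperBoundFAall} with \cref{lem:lowLevelsAreNegligible}, using the definition of $q$ to truncate the sum (the paper writes the same chain of inequalities starting from the right-hand side). No issues.
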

\begin{proof}
The statement follows from
\begin{equation*}
\left(1 + \frac{\eps}{2c}\right)\cdot \frac{1}{m} \sum_{i=\overline{b}}^q c^i \cdot |\overline{A}_i|
  \geq \frac{1}{m} \sum_{i=-\infty}^q c^i \cdot |\overline{A}_i|
  \geq F(\aall) - f(\varnothing)\enspace,
\end{equation*}
where the first inequality is due to \cref{lem:lowLevelsAreNegligible}, and the second one follows from \cref{lem:upperBoundFAall}.
\end{proof}

Before relating $F(s)$ to $F(\aall)$, we need the following structural property on the sets $\overline{A}_i$, which will be exploited to show that the sets $S_k$, chosen at the end of the algorithm, lead to a point $s$ of high multilinear value.
\begin{lemma}\label{lem:Ai_spanned_by_previous}
\begin{equation*}
\overline{A}_i \subseteq \Span\left(\overline{A}_{i-1}\right) \quad \forall i\in \{\overline{b}+1, \overline{b}+2,\ldots, q\}\enspace.
\end{equation*}
\end{lemma}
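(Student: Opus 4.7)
The plan is to prove the following stronger \emph{invariant} by induction on the number of stream elements already processed:
\begin{center}
\emph{At the end of each outer iteration of \cref{algline:loopOverU}, $A_{j+1} \subseteq \Span(A_j)$ holds for every $j$ such that both $A_j$ and $A_{j+1}$ are currently in memory (equivalently, for every $j \geq b$).}
\end{center}
Once this invariant is established, the lemma follows immediately: for $i \in \{\overline{b}+1,\dots,q\}$ we have $i-1 \geq \overline{b}$, so both $A_{i-1}$ and $A_i$ are in memory at termination and satisfy $\overline{A}_{i-1}=A_{i-1}$ and $\overline{A}_i=A_i$; the invariant then yields $\overline{A}_i \subseteq \Span(\overline{A}_{i-1})$.

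The matroid fact driving the entire argument is that $A_{j+1} \subseteq \Span(A_j)$ implies $\Span(A_{j+1}) \subseteq \Span(A_j)$, and hence that, for every element $u$, $A_j + u \in \cI$ implies $A_{j+1} + u \in \cI$. Applied to the inner for loop of \cref{algline:loopOverI} when processing an arriving element $u$ with loop range $[s,i(u)]$ (where $s = \max\{b,i(u)-\rank(M)-L\}$), the old invariant yields a \emph{contiguity property}: the set of indices $j \in [s,i(u)]$ at which $u$ gets added to $A_j$ is either empty or of the form $[j_0, i(u)]$ for some $j_0 \in [s,i(u)]$.

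The inductive step amounts to checking the invariant at every pair $(j, j+1)$ that remains in memory after the outer iteration. Pairs lying entirely outside $[j_0, i(u)]$ are unchanged; pairs with both coordinates in $[j_0, i(u)]$, as well as the boundary pair $(i(u), i(u)+1)$, gain $u$ on the left, which only enlarges $\Span(A_j)$, so the old invariant transfers directly to the new sets. The delicate pair is $(j_0-1, j_0)$, since $A_{j_0}$ gains $u$ while $A_{j_0-1}$ does not. When $j_0 > s$, iteration $j_0-1$ was executed but did not add $u$, so $A_{j_0-1} + u \notin \cI$, hence $u \in \Span(A_{j_0-1})$ and the invariant holds at this pair. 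When $j_0 = s = b$, the set $A_{j_0-1}$ was never in memory, so the invariant is vacuous there.

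The main obstacle---and, I believe, the reason the loop depth is chosen to be $\rank(M) + L + 1$ rather than, say, just $L + 1$---is the remaining \emph{edge case} $j_0 = s = i(u)-\rank(M)-L > b$: here iteration $j_0-1$ was never executed yet $A_{j_0-1}$ is in memory, so no direct argument bounds $u$ within $\Span(A_{j_0-1})$. The resolution exploits the contiguity property: $j_0 = s$ forces $u$ to be added to every $A_j$ with $j \in [s,i(u)]$, and in particular $u$ appears in the $\rank(M)$ sets $A_{s+L+1}, A_{s+L+2}, \ldots, A_{s+L+\rank(M)} = A_{i(u)}$. Hence, immediately after the inner for loop,
\[
    \sum_{j=s+L+1}^{\infty} |A_j| \geq \rank(M),
\]
which forces the subsequently recomputed $h$ to satisfy $h \geq s+L+1$ and the updated $b$ to satisfy $b \geq s+1$. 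Consequently, $A_{j_0-1} = A_{s-1}$ is deleted from memory in the very same outer iteration, so the pair $(j_0-1, j_0)$ no longer appears in the invariant. This closes the induction, and the lemma follows.
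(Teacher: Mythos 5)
Your proof is correct, but it takes a genuinely different route from the paper's. The paper argues directly about a single element $u\in\overline{A}_i$: it first shows that the lower end $\max\{b,\, i(u)-\rank(M)-L\}$ of the inner loop's range at the moment $u$ arrives is at most $\overline{b}$ (if $i(u)-\rank(M)-L\geq b$, then after the loop all $\rank(M)+L+1$ sets $A_j$ with $j\in\{i(u)-\rank(M)-L,\dots,i(u)\}$ are nonempty---each either received $u$ or must already have been nonempty since $A_j+u\notin\cI$ and $u$ is not a self-loop---which forces $b$ to be raised to at least $i(u)-\rank(M)-L$). Consequently the index $i-1\geq\overline{b}$ always lies inside the loop range, so $u$ was explicitly considered for $A_{i-1}$ and is spanned by it whether or not it was added. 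You instead maintain the global invariant $A_{j+1}\subseteq\Span(A_j)$ for all $j\geq b$ by induction over the stream, derive contiguity of the set of levels receiving $u$, and dispose of the one problematic pair $(j_0-1,j_0)$ in the case $j_0=s>b$ by showing that $b$ jumps past $s$ within the same iteration. That last step rests on the same mechanism the paper uses (the loop depth $\rank(M)+L$ forces $b$ to catch up whenever the bottom of the range is reached), but you deploy it inside the induction rather than up front; the paper's ordering makes the delicate pair never arise for indices above $\overline{b}$ and yields a shorter proof, whereas your invariant is a strictly stronger statement that holds at every point of the execution rather than only at termination.
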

\begin{proof}
Let $u\in \overline{A}_i$, and we show the statement by proving that $u\in \Span(\overline{A}_{i-1})$.
Consider the state of \cref{alg:singleMatMonotone} when it performs the for-loop at Line~\ref{algline:index_i_u} when the outer for-loop is considering the element $u$ (this is the for-loop that adds the element $u$ to sets $A_j$).
Note that we have
\begin{equation*}
\overline{b} \geq \max\{b, i(u) - \rank(M) - L\}\enspace,
\end{equation*}
due to the following. We clearly have $\overline{b} \geq b$ because the value of $b$ is non-decreasing throughout the algorithm.
Moreover, if $i(u) - \rank(M) - L \geq b$, then after the execution of the for-loop at Line~\ref{algline:index_i_u}, we have $A_j\neq \varnothing$ for each $j\in \{i(u) - \rank(M) - L, \ldots, i(u)\}$.
Hence, right after this execution of the for-loop, the value of $b$ will be increased to at least $i(u) - \rank(M) - L$.

Thus, because $u$ got added to $A_i$ for some $i\in \mathbb{Z}_{> \overline{b}}$, the algorithm will also add $u$ to $A_{i-1}$ if $A_{i-1} + u \in \cI$.
Hence, after the execution of this for-loop, we have $u\in \Span(A_{i-1})$.
Finally, because $\overline{A}_{i-1} \supseteq A_{i-1}$, we also have $u\in \Span(\overline{A}_{i-1})$.
\end{proof}

We are now ready to lower bound the value of $F(s)$ in terms of $F(\aall)$.
\begin{lemma}\label{lem:lowerBoundFsInFAall}
\begin{equation*}
F(s) \geq f(\varnothing) + \frac{1}{m} (1-c^{-m}) \sum_{i=\overline{b}}^q c^i \cdot |\overline{A}_i|
     \geq \left(1-c^{-m} - \frac{\eps}{2c}\right) \cdot F(\aall)\enspace.
\end{equation*}
\end{lemma}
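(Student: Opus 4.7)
The statement bundles two inequalities. The second one, $\tfrac{1}{m}(1-c^{-m})\sum c^i|\overline{A}_i|\geq (1-c^{-m}-\tfrac{\eps}{2c})F(\aall)$, is pure bookkeeping: rearranging \cref{cor:boundFAallInAi} gives $\tfrac{1}{m}\sum_{i=\overline{b}}^q c^i|\overline{A}_i|\geq (F(\aall)-f(\varnothing))/(1+\tfrac{\eps}{2c})$, and then the elementary estimate $(1-c^{-m})/(1+x)\geq 1-c^{-m}-x$ (with $x=\tfrac{\eps}{2c}$) followed by $f(\varnothing)\geq 0$ and $1-c^{-m}-\tfrac{\eps}{2c}\leq 1$ to absorb the $f(\varnothing)$ term finishes this part.

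The real content is the first inequality $F(s)\geq f(\varnothing)+\tfrac{1}{m}(1-c^{-m})\sum_{i=\overline{b}}^q c^i|\overline{A}_i|$. The plan is to build $s$ incrementally, in the order in which elements are appended to the sets $S_k$ by the down-loop on \cref{algline:build_Sk}. Let $s^{(0)}=0$ and let $s^{(t)}$ be obtained from $s^{(t-1)}$ by adding $\tfrac{1}{m}\characteristic_u$ for the $t$-th element $u$ added; by multilinearity of $F$, we have $F(s^{(t)})-F(s^{(t-1)})=\tfrac{1}{m}\partial_u F(s^{(t-1)})$. The \emph{core claim} to establish is that for every addition of $u$ into $S_{(i\bmod m)}$ at level $i$, $\partial_u F(s^{(t-1)})\geq c^{i}$. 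This will be argued via DR-submodularity using the fact that when $u$ was first added to $A_{i}$ during the streaming phase, the then-current reference vector $a$ satisfied $\partial_u F(a)\geq c^{i}$ (by \cref{algline:index_i_u} and the loop on \cref{algline:loopOverI}); the surviving challenge is transferring this bound from $a$ to $s^{(t-1)}$.

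Granting the core claim, the rest is combinatorics on the matroid structure. Fix $k\in\{0,\ldots,m-1\}$ and let $i_1^{(k)}>i_2^{(k)}>\cdots>i_{\ell_{\max}}^{(k)}$ enumerate the levels $\equiv k\pmod m$ in $[\overline{b},q]$ with $\overline{A}_{i}\neq\varnothing$. Since by \cref{lem:Ai_spanned_by_previous} we have $\overline{A}_{j}\subseteq\Span(\overline{A}_{j-1})$ for $j\in(\overline{b},q]$, iterating gives $\overline{A}_{i_{\ell-1}^{(k)}}\subseteq\Span(\overline{A}_{i_\ell^{(k)}})$, so the greedy matroid algorithm on the union $\bigcup_{\ell'\leq\ell}\overline{A}_{i_{\ell'}^{(k)}}$ yields $|S_k|=|\overline{A}_{i_\ell^{(k)}}|$ after processing the first $\ell$ levels. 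Hence exactly $|\overline{A}_{i_\ell^{(k)}}|-|\overline{A}_{i_{\ell-1}^{(k)}}|$ new elements are added at level $i_\ell^{(k)}$ (with $|\overline{A}_{i_0^{(k)}}|:=0$), and summation-by-parts gives
\begin{equation*}
\sum_{\ell} c^{i_\ell^{(k)}}\bigl(|\overline{A}_{i_\ell^{(k)}}|-|\overline{A}_{i_{\ell-1}^{(k)}}|\bigr)=c^{i_{\ell_{\max}}^{(k)}}|\overline{A}_{i_{\ell_{\max}}^{(k)}}|+(1-c^{-m})\sum_{\ell<\ell_{\max}}c^{i_\ell^{(k)}}|\overline{A}_{i_\ell^{(k)}}|\geq(1-c^{-m})\sum_{\ell}c^{i_\ell^{(k)}}|\overline{A}_{i_\ell^{(k)}}|,
\end{equation*}
using $i_{\ell+1}^{(k)}-i_\ell^{(k)}=-m$ between consecutive levels. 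Summing over $k$ and inserting this into the telescoping expression $F(s)-f(\varnothing)=\tfrac{1}{m}\sum_t \partial_u F(s^{(t-1)})$ yields $F(s)-f(\varnothing)\geq\tfrac{1-c^{-m}}{m}\sum_{i=\overline{b}}^{q}c^i|\overline{A}_i|$, as required.

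The principal obstacle is step (b): while the streaming phase furnishes $\partial_u F(a)\geq c^{i}$ at the moment $u$ enters $A_{i}$, the incremental $s^{(t-1)}$ from the down-loop need not be coordinate-wise dominated by the $a$ of that moment. Elements of $S_k^{<u}$ that correspond to earlier levels in the down-loop but arrive \emph{after} $u$ in the stream contribute to $s^{(t-1)}$ without being present in $a$ at that stream time, so plain DR-submodularity does not immediately give the bound. Handling this case cleanly—e.g.\ via an order-respecting argument that exploits either the geometric decay of the per-coordinate contributions $\tfrac{c^j}{m\,\partial_u F(\cdot)}\leq \tfrac{1}{m}$ or a careful reordering of the additions so that the reference vector for each marginal comparison dominates $s^{(t-1)}$—is what makes this the technical heart of the proof.
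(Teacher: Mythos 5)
Your combinatorial half (the spanning/greedy argument giving $|S_k^i|\geq|\overline{A}_i|-|\overline{A}_{i+m}|$ and the summation by parts) and your derivation of the second inequality from \cref{cor:boundFAallInAi} are both correct and match the paper. The problem is the ``core claim'' you grant yourself: that $\partial_u F(s^{(t-1)})\geq c^i$ when the additions are performed in the order of the down-loop on \cref{algline:build_Sk} with full increments of $\tfrac{1}{m}\characteristic_u$. As you yourself observe, $s^{(t-1)}$ contains mass on elements of higher levels that arrived \emph{later} in the stream than $u$, so it is not dominated by the reference vector $a$ against which the bound $\partial_u F(a)\geq c^i$ was certified, and DR-submodularity does not transfer the bound. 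You have correctly located the obstruction but not removed it, so the proof is incomplete as written.

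The paper closes exactly this gap by changing both the order and the step size of the incremental construction. One builds an auxiliary vector $w$, starting from $0$, processed in \emph{stream order}: whenever the streaming phase executes \cref{algline:modify_ai} for an element $u$ at a level $i\in\{\overline{b},\ldots,q\}$ such that $u$ ends up in $S_{(i\bmod m)}^i$, one adds $\tfrac{c^i}{m\cdot\partial_u F(a)}\characteristic_u$ to $w$, where $a$ is the algorithm's reference vector at that stream moment. Because each element receives only this single (sub-$\tfrac1m$) increment while $a$ accumulates the full geometric sum of increments of all earlier-arriving elements at all retained levels, one has $w\leq a$ coordinate-wise at every step; hence by the DR property of $F$ (exactly as in the proof of \cref{lem:upperBoundFAall}) each step increases $F(w)$ by at least $F(a+\tfrac{c^i}{m\,\partial_u F(a)}\characteristic_u)-F(a)=\tfrac{c^i}{m}$. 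Summing gives $F(w)\geq f(\varnothing)+\tfrac1m\sum_i c^i\sum_k|S_k^i|$, and since $\tfrac{c^i}{m\,\partial_u F(a)}\leq\tfrac1m$ implies $w\leq s$, monotonicity of $F$ yields $F(s)\geq F(w)$. This one application of monotonicity at the end is what replaces the per-step derivative bound you were unable to justify; without some such device (reordering plus shrunken increments, or an equivalent), your step (b) remains unproven.
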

\begin{proof}
For $k\in \{0,\ldots, m-1\}$, we partition $S_k$ into
\begin{equation*}
S_k = S_k^{\overline{b}} \cup S_k^{\overline{b}+1} \cup \ldots \cup S_k^q\enspace,
\end{equation*}
where $S_k^i$ are the elements that got added to $S_k$ in iteration $i$ of Line~\ref{algline:build_Sk}.
Note that because $S_k$ only gets updated in every $m$-th iteration, we have $S_k^i = \varnothing$ for any $i\not\equiv k \pmod{m}$.
Moreover, we have
\begin{equation}\label{eq:lowerBoundSki}
|S_k^i| \geq |\overline{A}_i| - |\overline{A}_{i+m}| \qquad \forall i\in \mathbb{Z} \text{ with } \overline{b} \leq i \leq q \text{ and } i\equiv k \pmod{m}
\end{equation}
because of the following.
We recall that the sets $S_k$ are constructed by adding elements from sets $\overline{A}_j$ from higher indices $j$ to lower ones.
Thus, when elements of $\overline{A}_i$ are considered to be added to $S_k$, the current set $S_k$ only contains elements from sets $A_j$ with $j \geq i + m$ (recall that only elements from every $m$-th set $A_j$ can be added to $S_k$). 
However, by \cref{lem:Ai_spanned_by_previous}, we have that all those elements are spanned by $\overline{A}_{i+m}$.
Hence, when elements of $\overline{A}_i$ are considered to be added to $S_k$, the set $S_k$ has at most $\rank(\overline{A}_{i+m}) = |\overline{A}_{i+m}|$ many elements since $S_k\in \cI$ by construction.
Moreover, when elements of the set $\overline{A}_{i}$ are added to $S_k$, this is done in a greedy way, which implies that the size of $S_k$ after adding elements from $\overline{A}_i$ will be equal to $\rank(\overline{A}_i) = |\overline{A_i}|$.
This implies \cref{eq:lowerBoundSki}.

The desired relation now follow from
{\allowdisplaybreaks\begin{align*}
F(s) &\geq f(\varnothing) + \frac{1}{m} \sum_{i=\overline{b}}^q c^i \sum_{k=0}^{m-1} |S_k^i| \\
     &\geq f(\varnothing) + \frac{1}{m} \sum_{i=\overline{b}}^q c^i \cdot \left(|\overline{A}_i| - |\overline{A}_{i+m}|\right) \\
     &\geq f(\varnothing) + \frac{1}{m} (1 - c^{-m}) \sum_{i=\overline{b}}^q c^i \cdot |\overline{A}_i| \\
     &\geq f(\varnothing) + \frac{1}{m} (1 - c^{-m}) \left(1+\frac{\eps}{2c}\right)^{-1} (F(\aall) - f(\varnothing) \\
     &\geq \frac{1}{m} (1 - c^{-m}) \left(1-\frac{\eps}{2c}\right) F(\aall)\enspace, \\
     &\geq \frac{1}{m} \left(1 - c^{-m} - \frac{\eps}{2c}\right) F(\aall)\enspace, \\
\end{align*}}
where the first inequality follows from a reasoning analogous to the one used in the proof of \cref{lem:upperBoundFAall},\footnote{%
More precisely, we can think of $s$ as being constructed iteratively starting from $w=0$.
Whenever the algorithm adds an element $u\in \cN$ to some set $A_i$ with $i\in \{\overline{b},\ldots, q\}$, then, if $u$ is also part of the set $S_k^i$ for $k\in \{0,\ldots, m-1\}$ with $i\equiv k \pmod{m}$, we update $w$ by setting it to $w+\frac{c^i}{m \cdot \partial_u F(a)}\characteristic_u$.
The increase $F(w+\frac{c^i}{m \cdot \partial_u F(a)}\characteristic_u) - F(w)$ is at least as big as $F(a+\frac{c^i}{m\cdot \partial_u F(a)}\characteristic_u) - F(a)$, which is $\frac{c^i}{m}$.} the second one is due to \cref{eq:lowerBoundSki}, and the fourth one uses \cref{cor:boundFAallInAi}\enspace.
\end{proof}

Let $\OPT$ be an arbitrary (but fixed) optimal solution for our problem. To relate $f(\OPT)$ to $F(\aall)$, we analyze by how much $f(\OPT)$ can be bigger than $F(\aall)$.
This difference can be bounded through the derivatives $\partial_u F(\aall)$, which we analyze first.
To this end, for any $u\in \cN$, we denote by $\ell(u)$ the largest index $i\in \mathbb{Z}$ such that $u\in \Span(\overline{A}_{i})$. If no such index exists, we set $\ell(u) = -\infty$.

\begin{observation}\label{obs:boundDerivativesFAall}
\begin{equation*}
\partial_u F(\aall) \leq c^{\ell(u)+1} \quad \forall u\in \cN \enspace.
\end{equation*}
\end{observation}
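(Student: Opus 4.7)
The plan is to trace what happens to an element $u$ during the run of \cref{alg:singleMatMonotone} and to exploit submodularity, which makes $\partial_u F$ non-increasing as the argument grows, to transfer a bound at $u$'s arrival to the final vector $\aall$.

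First I would argue that at every instant the current vector $a$ is coordinate-wise dominated by $\aall$. This is because each $a_i$ grows monotonically until it is possibly deleted, at which point $\overline{a}_i$ is frozen at its final value, while $a = \sum_{i \geq b} a_i$ is, coordinate-wise, at most $\sum_{i \in \mathbb{Z}} \overline{a}_i = \aall$. Submodularity then yields $\partial_u F(\aall) \leq \partial_u F(a)$, where the right-hand side is evaluated at the moment $u$ arrives. If $\partial_u F(a) = 0$ at that moment, non-negativity of partial derivatives (from monotonicity of $f$) forces $\partial_u F(\aall) = 0$ and the claim is trivial. Otherwise $i(u)$ is well defined and, by the floor definition, $\partial_u F(a) < c^{i(u)+1}$.

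The main step is to prove $\ell(u) \geq i(u)$, since then
\[
    \partial_u F(\aall) \;\leq\; \partial_u F(a) \;<\; c^{i(u)+1} \;\leq\; c^{\ell(u)+1}.
\]
I would prove $\ell(u) \geq i(u)$ by inspecting the iteration of the inner loop with index $i = i(u)$, which is the last index the inner loop visits. There are two sub-cases: either $A_{i(u)} + u \in \cI$, so the algorithm appends $u$ to $A_{i(u)}$, giving $u \in \overline{A}_{i(u)} \subseteq \Span(\overline{A}_{i(u)})$; or $A_{i(u)} + u \notin \cI$, which means $u$ is already spanned by the current $A_{i(u)}$, and since the current $A_{i(u)}$ is a subset of its final version $\overline{A}_{i(u)}$ (each $A_i$ only grows before being possibly deleted, at which point $\overline{A}_i$ is frozen), again $u \in \Span(\overline{A}_{i(u)})$. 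In both sub-cases $\ell(u) \geq i(u)$.

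The main obstacle is the corner case in which the inner loop range $[\max(b, i(u) - \rank(M) - L),\ i(u)]$ is empty, namely $b > i(u)$ at $u$'s arrival; the iteration with $i = i(u)$ then never executes and the direct argument above breaks. I would handle this case by chasing the definition of $b$: the inequality $b > i(u)$ means that at some previous step $h$ was raised to $h > i(u) + L$, so $\sum_{j \geq i(u) + L + 1} |A_j| \geq \rank(M)$ at that step, and then \cref{lem:Ai_spanned_by_previous} chained downwards from those high-index sets to $i(u)$ should force $u$ into $\Span(\overline{A}_{i(u)})$, recovering $\ell(u) \geq i(u)$.
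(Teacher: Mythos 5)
Your main argument is correct and is, in substance, the paper's own proof read from the other direction: where the paper fixes the index $\ell(u)+1$, notes that $A_{\ell(u)+1}+u$ was always independent, and concludes that the only way $u$ could have been skipped there is that $\ell(u)+1>i(u)$, you instead look at the iteration $i=i(u)$ and conclude $\ell(u)\geq i(u)$ because $u$ is either added to $A_{i(u)}$ or already spanned by it. For the case where the inner loop at \cref{algline:loopOverI} is nonempty these are equivalent, and your preliminary steps ($a\leq\aall$ coordinate-wise, submodularity, the trivial case $\partial_u F(a)=0$) match the paper's.

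The gap is in your last paragraph. You are right that the case $b>i(u)$ (empty loop range) is not covered by the direct argument --- the paper's one-line proof silently skips it too --- but the repair you sketch does not work. \cref{lem:Ai_spanned_by_previous} only relates the sets $\overline{A}_i$ to one another via $\overline{A}_i\subseteq\Span(\overline{A}_{i-1})$; it says nothing about an element $u$ that was never added to, nor even tested against, any $A_j$. Likewise, the condition $\sum_{j\geq h}|A_j|\geq\rank(M)$ that raised $b$ concerns total cardinality across levels, not any single $\overline{A}_j$ spanning the matroid, so it cannot force $u$ into $\Span(\overline{A}_{i(u)})$. Concretely, take the free matroid on $\{v_1,\dots,v_{100},u\}$ (rank $101$) with a modular $f$ assigning weight $c^{N}$ to each $v_j$ and weight $c^{N-L-2}$ to $u$, with $u$ arriving last: after the $v_j$'s one gets $b=N-L-1>N-L-2=i(u)$, the loop for $u$ is empty, no $\overline{A}_j$ contains or spans $u$, hence $\ell(u)=-\infty$ while $\partial_u F(\aall)=c^{N-L-2}>0=c^{\ell(u)+1}$. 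So in this corner case the observation as literally stated fails, and no argument via spans can rescue it; what does hold is the weaker bound $\partial_u F(\aall)\leq\partial_u F(a)<c^{i(u)+1}\leq c^{b}\leq c^{\overline{b}}$, which is what one must use instead (and which still suffices downstream, since at most $\rank(M)$ elements of $\OPT$ can be affected, contributing a term of order $\rank(M)\cdot c^{\overline{b}}$ to \cref{lem:diffOptToAall} of the same kind already absorbed via \cref{lem:lowLevelsAreNegligible}). As written, your proposal does not close this case.
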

\begin{proof}
Because $u\not\in \Span(\overline{A}_{\ell(u)+1})$, this implies that $u$ did not get added to the set $A_{\ell(u)+1}$ in \cref{alg:singleMatMonotone}, even though $A_{\ell(u)+1} + u \in \cI$, which holds because $A_{\ell(u)+1} + u \subseteq \overline{A}_{\ell(u)+1} + u \in \cI$. Hence, when $u$ got considered in Line~\ref{algline:loopOverU} of \cref{alg:singleMatMonotone}, we had $\partial_u F(a) < c^{\ell(u)+1}$. Finally, by submodularity of $f$ and because $a \leq \aall$ (coordinate-wise), we have $\partial_u F(\aall) \leq \partial_u F(a) \leq c^{\ell(u)+1}$.
\end{proof}

We are now ready to bound the difference between $f(\OPT)$ and $F(\aall)$.
\cref{lem:diffOptToAall} is the first statement in our analysis that exploits monotonicity of $f$.
\begin{lemma}\label{lem:diffOptToAall}
\begin{equation*}
f(\OPT) - F(\aall) \leq \sum_{u\in \OPT} c^{\ell(u) + 1}\enspace.
\end{equation*}
\end{lemma}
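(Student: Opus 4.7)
The plan is to compare $f(\OPT)$ and $F(\aall)$ by going through the intermediate point $\aall \vee \characteristic_\OPT$, using monotonicity of $F$ for the first inequality and a telescoping argument, combined with submodularity, for the second. The final bound on each marginal is then obtained directly from \cref{obs:boundDerivativesFAall}.

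First I would use monotonicity of $f$ (and hence of $F$) to observe that
\begin{equation*}
f(\OPT) = F(\characteristic_\OPT) \leq F(\aall \vee \characteristic_\OPT),
\end{equation*}
which is valid because $\aall \vee \characteristic_\OPT \geq \characteristic_\OPT$ coordinate-wise and $\aall \vee \characteristic_\OPT \in [0,1]^\cN$ by \cref{obs:overline_a_is_in_box}. It therefore suffices to bound $F(\aall \vee \characteristic_\OPT) - F(\aall)$ by the claimed sum.

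Next I would enumerate the elements of $\OPT$ arbitrarily as $u_1, u_2, \ldots, u_p$ and telescope:
\begin{equation*}
F(\aall \vee \characteristic_\OPT) - F(\aall) = \sum_{j=1}^p \bigl( F(\aall \vee \characteristic_{\{u_1,\ldots,u_j\}}) - F(\aall \vee \characteristic_{\{u_1,\ldots,u_{j-1}\}}) \bigr).
\end{equation*}
For each $j$, moving from $\aall \vee \characteristic_{\{u_1,\ldots,u_{j-1}\}}$ to $\aall \vee \characteristic_{\{u_1,\ldots,u_{j}\}}$ only changes the $u_j$-coordinate, increasing it by $1 - \aall(u_j) \in [0,1]$. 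Since $F$ is multilinear, this increase equals $(1-\aall(u_j))\cdot \partial_{u_j} F(y_j)$ for the intermediate point $y_j \coloneqq \aall \vee \characteristic_{\{u_1,\ldots,u_{j-1}\}} \wedge (\mathbf{1} - \characteristic_{u_j})$, which satisfies $y_j \geq \aall$ coordinate-wise on $\cN \setminus \{u_j\}$ and agrees with $\aall$ on the $u_j$-coordinate. By submodularity of $f$ (equivalently, concavity of $F$ along non-negative directions), $\partial_{u_j} F(y_j) \leq \partial_{u_j} F(\aall)$, and $1-\aall(u_j)\leq 1$, so each telescoping term is at most $\partial_{u_j} F(\aall)$.

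Combining the steps and applying \cref{obs:boundDerivativesFAall} yields
\begin{equation*}
f(\OPT) - F(\aall) \leq \sum_{u\in \OPT} \partial_u F(\aall) \leq \sum_{u\in \OPT} c^{\ell(u)+1},
\end{equation*}
as desired. The whole argument is essentially a standard ``marginal-bound'' computation for monotone submodular multilinear extensions; the only subtle point to get right is that in each telescoping step we must compare the marginal at a point $\geq \aall$ to the marginal at $\aall$ itself, which is exactly what submodularity gives us. I do not expect any genuine obstacle beyond writing this carefully.
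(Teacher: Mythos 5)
Your proof is correct and follows essentially the same route as the paper: bound $f(\OPT)\leq F(\aall\vee\characteristic_{\OPT})$ by monotonicity, show $F(\aall\vee\characteristic_{\OPT})-F(\aall)\leq\sum_{u\in\OPT}\partial_u F(\aall)$, and finish with \cref{obs:boundDerivativesFAall}. The only difference is that the paper invokes concavity of $F$ along non-negative directions (plus $\nabla F(\aall)\geq 0$) in one step, whereas you re-derive that inequality by an explicit element-by-element telescoping; this is a valid, slightly more elementary packaging of the same argument.
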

\begin{proof}
The result follows from
\begin{align*}
f(\OPT) - F(\aall)
  &\leq F(\aall \vee \characteristic_{\OPT}) - F(\aall)                 \\
  &\leq \nabla F(\aall)^T ((\aall \vee \characteristic_{\OPT}) - \aall) \\
  &\leq \nabla F(\aall)^T \characteristic_{\OPT}                        \\
  &= \sum_{u\in \OPT} \partial_u F(\aall)                               \\
  &\leq \sum_{u\in \OPT} c^{\ell(u)+1}\enspace,
\end{align*}
where the first inequality follows from monotonicity of $F$, the second one because $F$ is concave along non-negative directions, the third one uses again monotonicity of $F$ which implies $\nabla F(\aall) \geq 0$, and the last one follows from \cref{obs:boundDerivativesFAall}.
\end{proof}

The following lemma allows us to express the bound on the difference between $f(\OPT)$ and $F(\aall)$ in terms of $F(\aall)$, which, combined with the previously derived results, will later allow us to compare $F(s)$ to $f(\OPT)$ via the quantity $F(\aall)$.
\begin{lemma}\label{lem:upperBoundSumCOpt}
\begin{equation*}
\sum_{u\in \OPT} c^{\ell(u) + 1} \leq (c-1) \left(1 + \frac{\eps}{2c}\right) \frac{m}{1-c^{-m}} F(s) \enspace.
\end{equation*}
\end{lemma}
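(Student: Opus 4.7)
My plan is to first convert the geometric quantity $c^{\ell(u)+1}$ into a sum using the identity $c^{\ell(u)+1} = (c-1)\sum_{i \leq \ell(u)} c^i$, then swap the order of summation to re-express $\sum_{u \in \OPT} c^{\ell(u)+1}$ as a weighted sum over indices $i$ of the quantity $|\{u \in \OPT : \ell(u) \geq i\}|$. By the definition of $\ell(u)$, this set equals $\OPT \cap \Span(\overline{A}_i)$.

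The key matroid fact I will use is that for any independent set $\OPT \in \cI$ and any subset $A \subseteq \cN$, $|\OPT \cap \Span(A)| \leq \rank(\Span(A)) = \rank(A)$. Applied to the independent set $\overline{A}_i$ (which is independent by construction), this gives $|\OPT \cap \Span(\overline{A}_i)| \leq |\overline{A}_i|$. Hence
\begin{equation*}
\sum_{u \in \OPT} c^{\ell(u)+1} = (c-1) \sum_{i \in \mathbb{Z}} c^i \cdot |\OPT \cap \Span(\overline{A}_i)| \leq (c-1) \sum_{i \in \mathbb{Z}} c^i \cdot |\overline{A}_i|\enspace.
\end{equation*}
(Terms with $i$ below the smallest index where $\overline{A}_i$ is nonempty vanish because then $\Span(\overline{A}_i) = \varnothing$ under the no-self-loop assumption, so the sums are well defined.)

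Next, I will split the sum into indices $i < \overline{b}$ and $i \geq \overline{b}$, and apply \cref{lem:lowLevelsAreNegligible} to absorb the low-index tail into a factor of $(1 + \frac{\eps}{2c})$ times the sum from $\overline{b}$ to $q$:
\begin{equation*}
\sum_{i \in \mathbb{Z}} c^i \cdot |\overline{A}_i| \leq \left(1 + \frac{\eps}{2c}\right) \sum_{i=\overline{b}}^{q} c^i \cdot |\overline{A}_i|\enspace.
\end{equation*}
Finally, \cref{lem:lowerBoundFsInFAall} gives $\frac{1}{m}(1-c^{-m}) \sum_{i=\overline{b}}^q c^i |\overline{A}_i| \leq F(s) - f(\varnothing) \leq F(s)$, since $f(\varnothing) \geq 0$ by non-negativity. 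Rearranging yields $\sum_{i=\overline{b}}^q c^i |\overline{A}_i| \leq \frac{m}{1-c^{-m}} F(s)$, which combines with the above two displayed inequalities to give the claimed bound.

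I do not expect serious obstacles: the only subtlety is verifying that the matroid inequality $|\OPT \cap \Span(\overline{A}_i)| \leq |\overline{A}_i|$ applies cleanly, and that the ``tail'' of low indices $i < \overline{b}$ and the degenerate case where $\Span(\overline{A}_i)$ is empty for small $i$ are handled correctly; both follow immediately from standard matroid theory and the previously established lemmas.
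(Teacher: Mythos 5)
Your overall route is the same as the paper's---Abel summation to rewrite $\sum_{u\in\OPT}c^{\ell(u)+1}$ as $(c-1)\sum_{i}c^i\,|\{u\in\OPT:\ell(u)\geq i\}|$, a rank argument to compare with $|\overline{A}_i|$, and then \cref{lem:lowLevelsAreNegligible} and \cref{lem:lowerBoundFsInFAall}---but there is a genuine error in how you treat the indices $i<\overline{b}$. The claimed identity $\{u\in\OPT:\ell(u)\geq i\}=\OPT\cap\Span(\overline{A}_i)$ is false in general: $\ell(u)$ is only the \emph{largest} index with $u\in\Span(\overline{A}_i)$, and concluding $u\in\Span(\overline{A}_i)$ for every $i\leq\ell(u)$ requires the nesting $\overline{A}_{i}\subseteq\Span(\overline{A}_{i-1})$ of \cref{lem:Ai_spanned_by_previous}, which is only available for $i\in\{\overline{b}+1,\ldots,q\}$. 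Below the smallest index the algorithm ever touches, $\overline{A}_i=\varnothing$ and hence $\OPT\cap\Span(\overline{A}_i)=\varnothing$, yet every $u$ with $\ell(u)>-\infty$ still satisfies $\ell(u)\geq i$. Your own parenthetical remark exposes the inconsistency: if those terms ``vanish'' on the right-hand side, the two sides cannot be equal, because $c^{\ell(u)+1}=(c-1)\sum_{i\leq\ell(u)}c^i$ genuinely collects contributions from all arbitrarily negative $i$. The containment that does hold, $\OPT\cap\Span(\overline{A}_i)\subseteq\{u\in\OPT:\ell(u)\geq i\}$, goes the wrong way: it makes your middle expression a \emph{lower} bound on $\sum_{u\in\OPT}c^{\ell(u)+1}$, so upper-bounding that expression by $(c-1)\sum_i c^i|\overline{A}_i|$ proves nothing about the quantity you need to bound. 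Concretely, $|\{u\in\OPT:\ell(u)\geq i\}|\leq|\overline{A}_i|$ fails for sufficiently negative $i$ (the left side can be $|\OPT|$ while the right side is $0$).

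The repair is exactly what the paper does: keep $|\{u\in\OPT:\ell(u)\geq i\}|\leq|\overline{A}_i|$ only for $\overline{b}\leq i\leq q$ (where \cref{lem:Ai_spanned_by_previous} justifies the containment into $\Span(\overline{A}_i)$ and your rank argument applies), and for $i<\overline{b}$ use the cruder bound $|\{u\in\OPT:\ell(u)\geq i\}|\leq\rank(M)$, valid because this set is a subset of the independent set $\OPT$. Summing the geometric series gives a tail of $\rank(M)\cdot c^{\overline{b}}/(c-1)$, which the second inequality of \cref{lem:lowLevelsAreNegligible} absorbs into the same $\bigl(1+\frac{\eps}{2c}\bigr)$ factor you were aiming for. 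The final step of your argument---applying \cref{lem:lowerBoundFsInFAall} together with $f(\varnothing)\geq 0$ to pass to $\frac{m}{1-c^{-m}}F(s)$---is correct and matches the paper.
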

\begin{proof}

We start by expanding the left-hand side of the inequality to be shown:
\begin{equation}\label{eq:rewriteSumCOpt}
\begin{aligned}
\sum_{u\in \OPT} &c^{\ell(u) + 1}
 = c \cdot \sum_{u\in \OPT} c^{\ell(u)} \\
 &= c \cdot \sum_{i\in \mathbb{Z}} c^i \cdot |\{u\in \OPT \colon \ell(u) = i\}| \\
 &= (c-1) \sum_{i\in \mathbb{Z}} |\{u\in \OPT \colon \ell(u) \geq i\}| \\
 &= (c-1) \left[\sum_{i=-\infty}^{\overline{b}-1} c^i \cdot |\{u\in \OPT\colon \ell(u) \geq i\}| + \sum_{i=\overline{b}}^q c^i \cdot |\{u\in \OPT \colon \ell(u) \geq i\}|\right]\enspace.
\end{aligned}
\end{equation}
To upper bound the terms in the first sum, we use
\begin{equation}\label{eq:boundOPTEllRank}
|\{u\in \OPT\colon \ell(u) \geq i\}| \leq \rank(M) \qquad \forall i\in \mathbb{Z} \enspace,
\end{equation}
which holds because $\{u\in \OPT\colon \ell(u) \geq i\}\subseteq \OPT$ and $\OPT\in \cI$.
Moreover, for the second sum, we use
\begin{equation}\label{eq:boundOptEllAi}
|\{u\in \OPT\colon \ell(u) \geq i\}|\leq |\overline{A}_i| \qquad \forall i\in \mathbb{Z}_{\geq \overline{b}} \enspace,
\end{equation}
which holds due to the following.
By the definition of $\ell(u)$ and \cref{lem:Ai_spanned_by_previous}, we have $\{u\in \OPT: \ell(u) \geq i\}\subseteq \Span(\overline{A}_i)$.
\cref{eq:boundOptEllAi} now follows from 
\begin{equation*}
|\{u\in \OPT\colon \ell(u) \geq i\}| = \rank(\{u\in \OPT \colon \ell(u) \geq i\}) \leq \rank(\Span(\overline{A}_i)) = \rank(\overline{A}_i) = |\overline{A}_i|\enspace,
\end{equation*}
where the first equality holds because $\{u\in \OPT\colon \ell(u) \geq i\} \subseteq \OPT \in \cI$, the inequality holds because $\{u\in \OPT\colon \ell(u) \geq i\} \subseteq \Span(\overline{A}_i)$, and the last equation follows from $\overline{A}_i\in \cI$.

We now combine the above-proved inequalities to obtain the desired result:
\begin{align*}
\sum_{u\in \OPT} c^{\ell(u) + 1}
  &\leq  (c-1) \left[\sum_{i=-\infty}^{\overline{b}-1} c^i\cdot \rank(M)  +  \sum_{i=\overline{b}}^q c^i \cdot |\overline{A}_i|\right] \\
  &= (c-1) \left[\rank(M) \cdot \frac{c^{\overline{b}}}{c-1} + \sum_{i=\overline{b}}^{q} c^i \cdot |\overline{A}_i| \right] \\
  &\leq (c-1) \left(1+\frac{\eps}{2c}\right) \sum_{i=\overline{b}}^q c^i \cdot |\overline{A}_i| \\
  &\leq (c-1) \left(1+\frac{\eps}{2c}\right) \frac{m}{1-c^{-m}} F(s)\enspace, \\
\end{align*}
where the first inequality follows by \cref{eq:rewriteSumCOpt}, \cref{eq:boundOPTEllRank}, and \cref{eq:boundOptEllAi},
the second inequality follows by \cref{lem:lowLevelsAreNegligible},
and the last one is a consequence of \cref{lem:lowerBoundFsInFAall}.
\end{proof}

Combining \cref{lem:diffOptToAall,lem:upperBoundSumCOpt} we obtain the following lower bound on $F(s)$ in terms of $f(\OPT)$.
\begin{corollary}\label{cor:boundFsInOpt}
\begin{equation*}
F(s) \geq \frac{1-c^{-m}-\frac{\eps}{2c}}{(c-1)\left(1+\frac{\eps}{2c}\right)m + 1} \cdot f(\OPT)\enspace.
\end{equation*}
\end{corollary}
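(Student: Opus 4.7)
The plan is to combine the two preceding lemmas, \cref{lem:diffOptToAall} and \cref{lem:upperBoundSumCOpt}, with the previously established \cref{lem:lowerBoundFsInFAall} in order to convert both terms of the inequality $f(\OPT)\leq F(\aall) + \sum_{u\in \OPT} c^{\ell(u)+1}$ into quantities involving only $F(s)$. The result is then a purely algebraic rearrangement into the form stated in \cref{cor:boundFsInOpt}.

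Concretely, I would proceed in three short steps. First, apply \cref{lem:diffOptToAall} to get
\begin{equation*}
  f(\OPT) \;\leq\; F(\aall) + \sum_{u\in \OPT} c^{\ell(u)+1}\enspace.
\end{equation*}
Second, bound the sum on the right by $(c-1)\bigl(1+\tfrac{\eps}{2c}\bigr)\tfrac{m}{1-c^{-m}}\,F(s)$ using \cref{lem:upperBoundSumCOpt}. Third, bound the $F(\aall)$ term by invoking \cref{lem:lowerBoundFsInFAall}, which (by inverting the inequality $F(s)\geq (1-c^{-m}-\tfrac{\eps}{2c})F(\aall)$) yields $F(\aall)\leq F(s)/(1-c^{-m}-\tfrac{\eps}{2c})$. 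Adding the two upper bounds gives
\begin{equation*}
  f(\OPT)\;\leq\;\frac{F(s)}{1-c^{-m}-\tfrac{\eps}{2c}} \;+\; (c-1)\bigl(1+\tfrac{\eps}{2c}\bigr)\frac{m}{1-c^{-m}}\,F(s)\enspace.
\end{equation*}

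To finish, I would observe that $1-c^{-m} > 1-c^{-m} - \tfrac{\eps}{2c}$, so the denominator $1-c^{-m}$ in the second summand can be replaced (only enlarging the fraction) by $1-c^{-m}-\tfrac{\eps}{2c}$, making the two terms share a common denominator. Combining them produces
\begin{equation*}
  f(\OPT)\;\leq\;\frac{(c-1)\bigl(1+\tfrac{\eps}{2c}\bigr)m + 1}{1-c^{-m}-\tfrac{\eps}{2c}}\cdot F(s)\enspace,
\end{equation*}
which upon rearranging is exactly the claimed inequality. There is essentially no main obstacle here: the only care required is to keep the direction of every inequality correct when passing from $F(\aall)$ lower bounds in \cref{lem:lowerBoundFsInFAall} to an upper bound on $F(\aall)$, and to notice that the two different denominators $(1-c^{-m})$ and $(1-c^{-m}-\tfrac{\eps}{2c})$ can be unified without loss.
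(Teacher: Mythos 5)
Your proposal is correct and follows exactly the same route as the paper's proof: combine \cref{lem:diffOptToAall} and \cref{lem:upperBoundSumCOpt} to bound $f(\OPT)$ by $F(\aall)$ plus a multiple of $F(s)$, invert \cref{lem:lowerBoundFsInFAall} to replace $F(\aall)$ by $F(s)/(1-c^{-m}-\tfrac{\eps}{2c})$, and then unify the two denominators using $1-c^{-m}\geq 1-c^{-m}-\tfrac{\eps}{2c}$. No gaps.
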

\begin{proof}
\begin{align*}
f(\OPT) &\leq (c-1)\left(1+\frac{\eps}{2c}\right)\frac{m}{1-c^{-m}} \cdot F(s) + F(\aall) \\
        &\leq (c-1)\left(1+\frac{\eps}{2c}\right)\frac{m}{1-c^{-m}} \cdot F(s) + \left(1-c^{-m} - \frac{\eps}{2c}\right)^{-1} F(s)\\
        &\leq \frac{(c-1)\left(1+\frac{\eps}{2c}\right)m + 1}{1-c^{-m}-\frac{\eps}{2c}} \cdot F(s)\enspace, \\
\end{align*}
where the first inequality follows from \cref{lem:diffOptToAall,lem:upperBoundSumCOpt},
and the second one from \cref{lem:lowerBoundFsInFAall}.
\end{proof}

Our result for {\MSMM}, i.e., \cref{thm:mainSingleMatMon}, now follows from \cref{cor:boundFsInOpt} and our choice of parameters $\alpha$, $m$, and $c$, which have been chosen to optimize the ratio. This leads to a lower bound on $F(s)$ in terms of $f(\OPT)$, which in turn leads to a lower bound on $f(R)$ because $f(R) \geq F(s)$.
\begin{proof}[Proof of \cref{thm:mainSingleMatMon}]
We have
\begin{equation}\label{eq:boundFsViaOptStep1}
\begin{aligned}
f(R)		 &\geq F(s)  \\
         &\geq \frac{1-c^{-m}-\frac{\eps}{2c}}{(c-1)\left(1+\frac{\eps}{2c}\right)m + 1} f(\OPT)\\
         &\geq \frac{1 - e^{-\alpha} - \frac{\eps}{2c}}{(c-1)\cdot \left(1+\frac{\eps}{2c}\right)m+1} \cdot f(\OPT)      \\
         &= \frac{1 - e^{-\alpha} - \frac{\eps}{2c}}{\alpha \left(c + \frac{\eps}{2}\right) + 1} \cdot f(\OPT)                 \\
         &\geq \left(\frac{1 - e^{-\alpha}}{\alpha + 1} \cdot \frac{1}{c+\frac{\eps}{2}} - \frac{\eps}{2c}\right)\cdot f(\OPT) \\
         &\geq \left(\frac{1}{\alpha+2} \cdot \frac{1}{c+\frac{\eps}{2}} - \frac{\eps}{2c}\right)\cdot f(\OPT)\enspace,
\end{aligned}
\end{equation}
where the second inequality is due to \cref{cor:boundFsInOpt},
the third one follows from $c^{-m} = (1-\sfrac{\alpha}{m})^m \leq e^{-\alpha}$,
the equality uses again $c = \frac{m}{m-\alpha}$, and the last inequality uses our choice of value for $\alpha$ (note the inequality would have held as an equality if $\alpha$ was obeying $e^{\alpha} = \alpha + 2$, and we chose a value that is close).

By our choice of $m=\lceil \sfrac{3\alpha}{\eps} \rceil$, we obtain the following bound on $c$:
\begin{equation*}
c = \frac{m}{m-\alpha} = \frac{1}{1-\frac{\alpha}{m}} \leq \frac{1}{1-\frac{\eps}{3}} \leq 1 + \frac{\eps}{2}\enspace.
\end{equation*}
Plugging this bound into \cref{eq:boundFsViaOptStep1} and using $c\geq 1$, we get
\begin{align*}
F(R) &\geq \left(\frac{1}{\alpha + 2} \cdot \frac{1}{1+\eps} - \frac{\eps}{2} \right)\cdot f(\OPT) \\
     &\geq \left(\frac{1}{\alpha + 2}\cdot (1-\eps) - \frac{\eps}{2} \right)\cdot f(\OPT)          \\
     &\geq \left(\frac{1}{\alpha + 2} - \eps\right)\cdot f(\OPT)\enspace,
\end{align*}
as desired.
\end{proof}

\section{Framework for Multi-pass Algorithms}
\label{sec:framework}
In this section we present the details of the framework used to prove our $(1 - 1/e)$-approximation results (\cref{thm:multipass,thm:multipass_random}). We remind the reader that the proofs of these theorems (using the framework) can be found in \cref{sec:adversarial_local_search,sec:random_local_search}, respectively.
Badanidiyuru and Vondr\'{a}k~\cite{badanidiyuru2014fast} described an algorithm called ``Accelerated Continuous Greedy'' that obtains an approximation guarantee of $1 - 1/e - O(\eps)$ for {\MSMM} for every $\eps \in (0, 1)$. Their algorithm is not a data stream algorithm, but it enjoys the following nice properties.
\begin{itemize}
	\item The algorithm includes a procedure called ``Decreasing-Threshold Procedure''. This procedure is the only part of the algorithm that directly accesses the input.
	\item The Decreasing-Threshold Procedure is called $O(\eps^{-1})$ times during the execution of the algorithm.
	\item In addition to the space used by this procedure, Accelerated Continuous Greedy uses only space that is linear in the space necessary to store the outputs of the various executions of the Decreasing-Threshold Procedure.
	\item The Decreasing-Threshold Procedure returns a base $D$ of $M$ after every execution, and this base is guaranteed to obey \cref{eq:procedure_guarantee} stated below. The analysis of the approximation ratio of Accelerated Continuous Greedy treats Decreasing-Threshold Procedure as a black box except for the fact that its output is a base $D$ of $M$ obeying \cref{eq:procedure_guarantee}, and therefore, this analysis will remain valid even if Decreasing-Threshold Procedure is replaced by any other algorithm with the same guarantee. Furthermore, one can verify that the analysis continues to work (with only minor technical changes) even if the output $D$ of the replacing algorithm obeys \cref{eq:procedure_guarantee} only in expectation.
\end{itemize}

Let us now formally state the property that the output base of Decreasing-Threshold Procedure obeys. Let $P_M$ be the matroid polytope of $M$, and let $F$ be the multilinear extension of $f$. Decreasing-Threshold Procedure gets as input a point $x \in (1 - \eps) \cdot P_M$, and its output base $D$ is guaranteed to obey
\begin{equation} \label{eq:procedure_guarantee}
	F(x') - F(x)
	\geq
	\eps[(1 - 3\eps) \cdot f(\OPT) - F(x')]
	\enspace,
\end{equation}
where $x' = x + \eps \cdot \characteristic_D$ and $\OPT$ denotes an optimal solution. 

Our objective in \cref{sec:adversarial_local_search,sec:random_local_search} is to describe semi-streaming algorithms that can function as replacements for the offline procedure Decreasing-Threshold Procedure. The next proposition shows that plugging such a replacement into Accelerated Continuous Greedy yields a roughly $(1 - 1/e)$-approximation semi-streaming algorithm.
\begin{proposition} \label{prop:equation_enough}
Assume there exists a semi-streaming algorithm that given a point $x \in (1 - \eps) \cdot P_M$ makes $p$ passes over the input stream, stores $O(k / \eps)$ elements, and outputs a base $D$ obeying \cref{eq:procedure_guarantee} in expectation. Then, there exists a semi-streaming algorithm for maximizing a non-negative monotone submodular function subject to a matroid constraint of rank $k$ that stores $O(k/\eps)$ elements, makes $O(p/\eps)$ many passes and achieves an approximation guarantee of $1 - 1/e - \eps$.
\end{proposition}
\begin{proof}
Observe that the proposition is trivial when $\eps \geq 1 - 1/e$, and therefore, we assume below that $\eps < 1 - 1/e$. Furthermore, for simplicity, we describe an algorithm with an approximation ratio of $1 - 1/e - O(\eps)$ rather than a clean ratio of $1 - 1/e - \eps$. However, one can switch between the two ratios by scaling $\eps$ by an appropriate constant.

Let us denote by $\ALG$ the algorithm whose existence is promised by the statement of the proposition, and consider an algorithm called \texttt{Data Stream Continuous Greedy} (\DSCG) obtained from the Accelerated Continuous Greedy algorithm of~\cite{badanidiyuru2014fast} when every execution of the Decreasing-Threshold Procedure by the last algorithm is replaced with an execution of {\ALG}. We explain below why {\DSCG} has all the properties guaranteed by the proposition. We begin by recalling that since the approximation ratio analysis of Accelerated Continuous Greedy in~\cite{badanidiyuru2014fast} treats the Decreasing-Threshold Procedure as a black box that in expectation has the guarantee stated in \cref{eq:procedure_guarantee}, and {\ALG} also has this guarantee, this analysis can be applied as is also to {\DSCG}, and therefore, {\DSCG} is a $(1 - 1/e - O(\eps))$-approximation algorithm.

Recall now that Accelerated Continuous Greedy accesses its input only through the Decreasing-Threshold Procedure, which implies that {\DSCG} is a data stream algorithm just like {\ALG}. Furthermore, since Accelerated Continuous Greedy accesses the Decreasing-Threshold Procedure $O(\varepsilon^{-1})$ times, the number of passes used by {\DSCG} is larger by a factor of $O(\eps^{-1})$ compared to the number of passes used by {\ALG} (which is denoted by $p$). Hence, {\DSCG} uses $O(p/\eps)$ passes.

It remains to analyze the space complexity of {\DSCG}. Since Accelerated Continuous Greedy uses space of size linear in the space necessary to keep the $O(\eps^{-1})$ bases that it receives from the Decreasing-Threshold Procedure, the space complexity of {\DSCG} is larger than the space complexity of the semi-streaming algorithm {\ALG} only by an additive term of $\widetilde{O}(k/\eps)$. As this term is nearly linear in $k$ for any constant $\eps$, we get that {\DSCG} has a low enough space complexity to be a semi-streaming algorithm. Furthermore, since the $O(\eps^{-1})$ bases that {\DSCG} gets from $ALG$ can include only $O(k/\eps)$ elements, this expression upper bounds the number of elements stored by {\DSCG} in addition to the $O(k/\eps)$ elements stored by {\ALG} itself.
\end{proof}

It turns out that one natural way to get a base $D$ obeying \cref{eq:procedure_guarantee} is to output a local maximum with respect to the objective function $g(S) = F(x + \eps \cdot \characteristic_S)$ (i.e., a base $D$ whose value with respect to this objective cannot be improved by replacing an element of $D$ with an element of $\cN \setminus D$). Getting such a maximum using a semi-streaming algorithm with a reasonable number of passes is challenging; however, one can define weaker properties that still allow us to get \cref{eq:procedure_guarantee}. Specifically, for any $\eps \in (0, 1)$, we say that a set $D$ is an \emph{$\eps$-approximate local maximum} with respect to $g$ if
\[
    g(D \mid \varnothing) \geq g(B \mid D) + \sum_{u \in B \cap D} \mspace{-9mu} g(u \mid D - u) - \eps \cdot g(\OPT_g \mid \varnothing)
\]
for every base $B$ of $M$, where $\OPT_g$ is a base maximizing $g$. (Intuitively, one should think of $B$ as being the optimal solution with respect to $f$.)

One property of an approximate local maximum is that its value (with respect to $g$) is an approximation to $g(\OPT_g)$.
\begin{observation} \label{obs:local_maximum_approximation}
For every $\eps \in (0, 1)$, if $D$ is an $\eps$-approximate local maximum with respect to $g$, then $g(D) \geq \frac{1 - \eps}{2} \cdot g(OPT_g)$.
\end{observation}
\begin{proof}
One can verify that the non-negativity, monotonicity and submodularity of $f$ implies that $g$ also has these properties. Therefore,
\begin{align*}
    g(D)
    \geq{} &
    g(D \mid \varnothing)
    \geq
    g(OPT_g \mid D) + \sum_{u \in \OPT_g \cap D} \mspace{-18mu} g(u \mid D - u) - \eps \cdot g(\OPT_g \mid \varnothing)\\
    \geq{} &
    g(\OPT_g \mid D) - \eps \cdot g(OPT_g \mid \varnothing)
    \geq
    (1 - \eps) \cdot g(\OPT_g) - g(D)
    \enspace,
\end{align*}
where the first inequality holds by the non-negativity of $g$, the second inequality follows from the fact that $D$ is an $\eps$-approximate local maximum (for $B = \OPT_g$), the third inequalities follow from the monotonicity of $g$, and the last inequality hold by $g$'s non-negativity and monotonicity. Rearranging the above inequality now yields the observation.
\end{proof}
Using the last observation we can prove that any approximate local maximum with respect to $g$ obeys \cref{eq:procedure_guarantee}, and the same holds also for any solution that is almost as good as some approximate local maximum.
\begin{lemma} \label{lem:local_maximum_good}
For every $\eps \in (0, 1)$, if $D'$ is an $\eps$-approximate local maximum with respect to $g$, then any (possibly randomized) set $D$ such that $\bE[g(D \mid \varnothing)] \geq (1 - \eps) \cdot g(D' \mid \varnothing)$ obeys \cref{eq:procedure_guarantee} in expectation. In particular, this is the case for $D = D'$ since the monotonicity of $f$ implies that $g$ is non-negative.
\end{lemma}
\begin{proof}
We need to consider two cases. The simpler case is when $g(\OPT_g \mid \varnothing) \geq 2 \eps \cdot f(\OPT)$, where we recall that $OPT$ is an optimal base with respect to $f$. Since $x' = x + \eps \cdot \characteristic_D$ by definition, we get in this case
\begin{align*}
	\bE[F(x')] &{}- F(x)
	=
	\bE[F(x + \eps \cdot \characteristic_D)] - F(x)
	=
	\bE[g(D \mid \varnothing)]
	\geq
	(1 - \eps) \cdot g(D' \mid \varnothing)\\
	\geq{} &
	\tfrac{(1 - \eps)^2}{2}g(\OPT_g \mid \varnothing)
	\geq
	\eps(1 - 2\eps) \cdot f(\OPT)
	\geq
	\eps((1 - 3\eps) \cdot f(\OPT) - \bE[F(x')])
	\enspace,
\end{align*}
where the second inequality holds by Observation~\ref{obs:local_maximum_approximation}.

In the rest of the proof we consider the case of $g(\OPT_g \mid \varnothing) \leq 2 \eps \cdot f(\OPT)$. We note that, in this case,
\begin{align*}
	\frac{\bE[F(x')] - F(x)}{1 - \eps}
	={} &
	\frac{\bE[F(x + \eps \cdot \characteristic_D)] - F(x)}{1 - \eps}
	=
	\frac{\bE[g(D \mid \varnothing)]}{1 - \eps}
	\geq
	g(D' \mid \varnothing)\\
	\geq{} &
	g(\OPT \mid D') + \sum_{u \in \OPT \cap D'} \mspace{-18mu} g(u \mid D' - u) - \eps \cdot g(\OPT_g \mid \varnothing)\\
	\geq{} &
	g(\OPT \mid D') + \sum_{u \in \OPT \cap D'} \mspace{-18mu} g(u \mid D' - u) - 2\eps^2 \cdot f(\OPT)
	\enspace,
\end{align*}
where the second inequality holds since $D'$ is an $\eps$-approximate local maximum (by plugging $B = \OPT$ into the definition of such maxima). Let us now further develop the first two terms on the rightmost side of the last inequality. By the submodularity and monotonicity of $f$, if we denote $y = x + \eps \cdot \characteristic_{D'}$, then
\begin{align*}
    g(\OPT\mspace{-9mu}&\mspace{9mu}{} \mid D') + \sum_{u \in \OPT \cap D'} \mspace{-18mu} g(u \mid D' - u)\\
    ={} &
    F(x + \eps \cdot \characteristic_{\OPT \cup D'}) - F(x + \eps \cdot \characteristic_{D'})
    + 
    \sum_{u \in \OPT \cap D'} \mspace{-9mu} [F(x + \eps \cdot \characteristic_{D'}) - F(x + \eps \cdot \characteristic_{D' - u})]\\
    \geq{} &
    F(y + \eps \cdot \characteristic_{\OPT \setminus D'}) - F(y)
    +
    \sum_{u \in \OPT \cap D'} \mspace{-9mu} [F((y + \eps \cdot \characteristic_{\{u\}}) \wedge \characteristic_\cN) - F(y)]\\
    \geq{} &
    F((y + \eps \cdot \characteristic_{\OPT}) \wedge \characteristic_\cN) - F(y)
    \enspace.
\end{align*}
%Let us now use $\odot$ to denote coordinate-wise multiplication between vectors. Using this notation, we can combining the last two inequalities to get
Combining the last two inequalities yields
\begin{align*}
	\bE[F(x')] - F(x)
	\geq{} &
	(1- \eps) [F((y + \eps \cdot \characteristic_{\OPT}) \wedge \characteristic_\cN) - F(y)] - 2\eps^2 \cdot f(\OPT)\\
	\geq{} &
	(1- \eps) [F(y + \eps((\characteristic_\cN - y) \wedge \characteristic_{\OPT})) - F(y)] - 2\eps^2 \cdot f(\OPT)\\
	\geq{} &
	\eps(1- \eps)  [F(y \vee \characteristic_{\OPT}) - F(y)] - 2\eps^2 \cdot f(\OPT)\\
	\geq{} &
	\eps ((1- \eps) f(\OPT) - \bE[F(x')]) - 2\eps^2 \cdot f(\OPT)\\
	={} &
	\eps \cdot ((1 - 3\eps) f(\OPT) - \bE[F(x')])
	\enspace,
\end{align*}
where the second inequality holds by the monotonicity of $f$, the third inequality holds because the submodularity of $f$ guarantees that $F$ is concave along non-negative directions (such as $(\characteristic_\cN - y) \wedge \characteristic_{\OPT}$) and the last inequality holds by the motonicity of $f$ and the observation that
\[
    F(y)
    =
    g(D')
    =
    g(\varnothing) + g(D' \mid \varnothing)
    \leq
    g(\varnothing) + \frac{\bE[g(D \mid \varnothing)]}{1 - \eps}
    \leq
    \frac{\bE[g(D)]}{1 - \eps}
    =
    \frac{\bE[F(x')]}{1 - \eps}
    \enspace.
    \qedhere
\]
\end{proof}

In \cref{sec:adversarial_local_search} we describe a semi-streaming algorithm that can be used to find an $\eps$-approximate local maximum of a non-negative monotone submodular function. By applying this algorithms to $g$, we get (via Lemma~\ref{lem:local_maximum_good}) an algorithm having all the properties assumed by Proposition~\ref{prop:equation_enough}; which proves \cref{thm:multipass}. In \cref{sec:random_local_search} we attempt to use the same approach to get a result for random order streams. However, in this setting we are not able to guarantee an $\eps$-approximate local maximum. Instead, we design an algorithm whose output has in expectation a value that is almost as good as the value of the worst approximate local maximum. This leads to a proof of \cref{thm:multipass_random}.

\section{Approximate Local Maximum for Adversarial Streams} \label{sec:adversarial_local_search}

In this section we prove following proposition, which guarantees the existence of a semi-streaming multi-pass algorithm for finding an $\eps$-approximate local maximum in adversarial streams, i.e., when the order of the elements in the input stream is arbitrary. We note that this section highly depends on \cref{sec:framework}, and should not be read before that section.
\begin{restatable}{proposition}{propLocalMaxAdversarial} \label{prop:local_max_adversarial}
For every constant $\eps > 0$, there is a multi-pass semi-streaming algorithm that given an instance of {\MSMM}  with a matroid of rank $k$ stores $O(k)$ elements, makes $O(\eps^{-2})$ many passes, and outputs an $\eps$-approximate local maximum.
\end{restatable}

By \cref{lem:local_maximum_good} and \cref{prop:equation_enough}, the last proposition implies \cref{thm:multipass}. Therefore, we concentrate in this section on proving \cref{prop:local_max_adversarial}. Specifically, in \cref{ssc:local_search_pass} we reanalyze a known single pass algorithm for {\MSMM}, and then we use the results of this reanalysis to show in \cref{ssc:local_search_multi} that by executing this known algorithm multiple times one can obtain the algorithm whose existence is guaranteed by Proposition~\ref{prop:local_max_adversarial}. We conclude with \cref{ssc:lower_bound_multipass}, which provides a detailed discussion of the result of McGregor and Vu~\cite{mcgregor2019better} mentioned in \cref{sec:introduction}. We recall that this result shows that \cref{thm:multipass} is tight in the sense that any data stream algorithm that obtains a constant approximation ratio $\alpha > 1 - 1/e$ for {\MSMM} and uses a constant number of passes must have a space complexity that is linear in $|\cN|$.

\subsection{Alternative Analysis for a Known Single Pass Algorithm} \label{ssc:local_search_pass}

The first data stream algorithm for {\MSMM} was described by Chakrabarti and Kale~\cite{chakrabarti2015submodular}. In this section we consider a variant of their algorithm. This variant is a special case of an algorithm that was described by Huang, Thiery and Ward~\cite{HuangTW20} (based on ideas of Chekuri et al.~\cite{chekuri2015streaming}), and it appears as \cref{alg:local_search_pass}. \cref{alg:local_search_pass} gets a parameter $c > 1$ and a base $S_0$ of $M$ that it starts from, and intuitively, it inserts every arriving element into its solution (at the expense of an appropriate existing element) whenever such a swap is beneficial enough in some sense. In the pseudocode of \cref{alg:local_search_pass}, we denote by $u_1, u_2, \dotsc, u_n$ the elements of $\cN \setminus S_0$ in the order of their arrival. Similarly, we denote by $S_i$ the solution of the algorithm immediately after it processes element $u_i$ for every integer $1 \leq i \leq n$. Finally, we denote the elements of the base $S_0$ by $u_{1 - |S_0|}, u_{2 - |S_0|}, \dotsc, u_0$ in an arbitrary order. This notation allows us to define, for every integer $1 - |S_0| \leq i \leq n$ and set $T \subseteq \cN$,
\[
	f(u_i : T)
	=
	f(u_i \mid \{u_j \in T \mid 1 - |S_0| \leq j < i\})
	\enspace.
\]
In other words, $f(u_i : T)$ is the marginal contribution of $u_i$ with respect to the elements of $T$ that appear in the input stream of the algorithm before $u_i$.

\begin{algorithm} 
\caption{\textsc{Single Local Search Pass} $(S_0, c)$} \label{alg:local_search_pass}
\begin{algorithmic}[1]
\For{every element $u_i \in \cN \setminus S_0$ that arrives}
	\State Let $C_i$ be the single cycle in $S_{i - 1} + u_i$.
	\State Let $u'_i$ be the element in $C_i - u_i$ minimizing $f(u'_i : S_{i - 1})$. \\\hspace{4cm}\Comment{Note that $u'_i$ is equal to $u_j$ for some $j < i$.}
	\If{$f(u_i \mid S_{i - 1}) \geq c \cdot f(u'_i : S_{i - 1})$} 
		\State Set $S_i \gets S_{i - 1} - u'_i + u_i$.
	\Else{}	
		\State Set $S_i \gets S_{i - 1}$.
	\EndIf
\EndFor
\State \Return{$S_n$}.
\end{algorithmic}
\end{algorithm}

One can verify that the solution of \cref{alg:local_search_pass} remains a base of $M$ throughout the execution of the algorithm. Furthermore, it is known that \cref{alg:local_search_pass} achieves $4$-approximation for {\MSMM} when $c = 2$. However, we need to prove a slightly different property of it. Specifically, we show below that when $S_0$ is not an approximation local maximum, the value of the final solution $S_n$ of \cref{alg:local_search_pass} is much larger than the value of the initial solution $S_0$. In \cref{ssc:local_search_multi} we show how this property of \cref{alg:local_search_pass} can be used to find an $\eps$-approximate local maximum in $O(\eps^{-2})$ passes.

Let $B$ be an arbitrary base of $M$ (intuitively, one can think of $B$ as the optimal solution, although this will not always be the case). We begin the analysis of \cref{alg:local_search_pass} by showing a lower bound on the sum of the marginal contributions of the elements of $B \setminus S_0$ with respect to the solutions held by \cref{alg:local_search_pass} when these elements arrive. Let $A$ be the set of all elements that belong to the solution of \cref{alg:local_search_pass} at some point (formally, $A = \bigcup_{i = 0}^n S_i$).

\begin{lemma} \label{lem:adding_bound}
$\sum_{u_i \in B \setminus S_0} f(u_i \mid S_{i - 1}) \geq f(S_0 \cup B) + \tfrac{1}{c-1} \cdot f(S_0) - \tfrac{c}{c-1} \cdot f(S_n)$.
\end{lemma}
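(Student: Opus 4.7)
The plan is to telescope the potential $\Phi_i \coloneqq f(S_{i-1} \cup B_{\geq i})$, where $B_{\geq i} \coloneqq \{u_j \in B \setminus S_0 : j \geq i\}$, and to combine this with a bound on the total gain obtained from swaps. The endpoints $\Phi_1 = f(S_0 \cup B)$ and $\Phi_{n+1} = f(S_n)$ yield the identity
\[
  f(S_0 \cup B) - f(S_n) \;=\; \sum_{i=1}^{n} \bigl(\Phi_i - \Phi_{i+1}\bigr),
\]
so the task reduces to upper bounding the right-hand side by quantities we can connect to $\sum_{u_i \in B \setminus S_0} f(u_i \mid S_{i-1})$ and to $f(S_n) - f(S_0)$.

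Next I would bound $\Phi_i - \Phi_{i+1}$ via a four-way case analysis on whether $u_i \in B \setminus S_0$ and whether a swap occurs at step $i$. When $u_i \in B \setminus S_0$ and no swap occurs, submodularity directly gives $\Phi_i - \Phi_{i+1} = f(u_i \mid S_{i-1} \cup B_{\geq i+1}) \leq f(u_i \mid S_{i-1})$. When $u_i \in B \setminus S_0$ and a swap occurs, the drop equals $f(u'_i \mid (S_{i-1} + u_i - u'_i) \cup B_{\geq i+1})$, which by submodularity is at most $f(u'_i \mid S_{i-1} - u'_i) \leq f(u'_i : S_{i-1})$; the swap-threshold condition $f(u_i \mid S_{i-1}) \geq c \cdot f(u'_i : S_{i-1})$ then bounds it by $f(u_i \mid S_{i-1})/c$. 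For a swap with $u_i \notin B$, I would add and subtract the intermediate value $f((S_{i-1} + u_i) \cup B_{\geq i+1})$; this rewrites the drop as the sum of a negative $u_i$-marginal (which is $\le 0$ by monotonicity) and the same $u'_i$-marginal as before, again bounded by $f(u_i \mid S_{i-1})/c$. The remaining case ($u_i \notin B$, no swap) contributes $0$.

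In parallel I would establish the swap-gain bound
\[
  \sum_{\text{swap}} f(u_i \mid S_{i-1}) \;\leq\; \tfrac{c}{c-1}\bigl(f(S_n) - f(S_0)\bigr).
\]
Whenever a swap happens, $f(S_i) - f(S_{i-1}) = f(u_i \mid S_{i-1} - u'_i) - f(u'_i \mid S_{i-1} - u'_i) \geq f(u_i \mid S_{i-1}) - f(u'_i : S_{i-1}) \geq \tfrac{c-1}{c} f(u_i \mid S_{i-1})$, using submodularity twice and the threshold once. Summing the telescoping sum $f(S_n) - f(S_0)$ over all swaps gives the claim.

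To conclude, let $X_1$, $X_2$, $X_3$ denote the sums of $f(u_i \mid S_{i-1})$ over the cases (no swap, $u_i \in B \setminus S_0$), (swap, $u_i \in B \setminus S_0$), and (swap, $u_i \notin B$), respectively. The case analysis gives $f(S_0 \cup B) - f(S_n) \leq X_1 + (X_2 + X_3)/c$, and substituting the swap-gain bound $X_2 + X_3 \leq \tfrac{c}{c-1}(f(S_n) - f(S_0))$ yields $X_1 \geq f(S_0 \cup B) + \tfrac{1}{c-1} f(S_0) - \tfrac{c}{c-1} f(S_n)$. Since $X_2 \geq 0$ by monotonicity of $f$ and $X_1 + X_2 = \sum_{u_i \in B \setminus S_0} f(u_i \mid S_{i-1})$, the lemma follows. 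The main technical obstacle is the case $u_i \notin B$ with a swap, since the corresponding potential drop cannot be related directly to a marginal of an element of $B$; the add-and-subtract trick with the intermediate set, combined with monotonicity to discard the resulting negative part, is what makes the argument go through.
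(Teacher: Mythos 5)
Your proof is correct, but it takes a genuinely different route from the paper's. The paper never tracks the process step by step on the $B$-side: it introduces the set $A = \bigcup_{i=0}^n S_i$ of all elements ever held, lower-bounds $\sum_{u_i \in B \setminus S_0} f(u_i \mid S_{i-1}) \geq f(B \mid A)$ by submodularity, rewrites $f(B \mid A) = f(B \cup (A \setminus S_0) \mid S_0) - f(A \setminus S_0 \mid S_0)$, and then bounds the subtracted term by the sum of marginals of the accepted elements. Your argument replaces this aggregate manipulation with a hybrid potential $\Phi_i = f(S_{i-1} \cup B_{\geq i})$ and a four-way case analysis on each step. Both proofs then invoke the identical "swap-gain" inequality $\sum_{\text{swaps}} f(u_i \mid S_{i-1}) \leq \tfrac{c}{c-1}\bigl(f(S_n) - f(S_0)\bigr)$, derived the same way from the acceptance threshold. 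The paper's version is shorter and avoids cases; yours is more local and transparent about where each term of the final bound originates, and it in fact yields the marginally stronger conclusion that already $X_1$ (the sum over elements of $B \setminus S_0$ that were \emph{rejected}) satisfies the claimed lower bound, since you only add back $X_2 \geq 0$ at the end. Both arguments use monotonicity in exactly one essential place (you in Case 3 and for $X_2 \geq 0$; the paper in passing from $f(B \cup (A\setminus S_0) \mid S_0)$ to $f(B \mid S_0)$).
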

\begin{proof}
By the submodularity of $f$,
\begin{align*}
	\sum_{u_i \in B \setminus S_0} \mspace{-11mu} f(u_i \mid {}&S_{i - 1})
	\geq
	\sum_{u_i \in B \setminus S_0} \mspace{-11mu} f(u_i \mid A)
	\geq
	f(B \mid A)
	=
	f(B \cup (A \setminus S_0) \mid S_0) - f(A \setminus S_0 \mid S_0)\\
	\geq{}&
	f(B \mid S_0) - \sum_{u_i \in A \setminus S_0} \mspace{-11mu} f(u_i \mid S_0 \cup \{u_j \in A \mid 1 \leq j < i\})\\
	\geq{} &
	f(B \mid S_0) - \sum_{u_i \in A \setminus S_0} \mspace{-11mu} f(u_i \mid S_{i - 1})
	\enspace,
\end{align*}
where the third inequality holds by the monotonicity of $f$.

Let us now upper bound the second term in the rightmost side. Since all the elements of $A \setminus S_0$ were accepted by \cref{alg:local_search_pass} into its solution,
\begin{align*}
	\sum_{u_i \in A \setminus S_0} \mspace{-9mu} f(u_i \mid S_{i - 1})
	\leq{} &
	\tfrac{c}{c-1} \cdot \sum_{u_i \in A \setminus S_0} \mspace{-9mu} [f(u_i \mid S_{i - 1}) - f(u'_i : S_{i - 1})]\\
	\leq{} &
	\tfrac{c}{c-1} \cdot \sum_{u_i \in A \setminus S_0} \mspace{-9mu} [f(u_i \mid S_{i - 1}) - f(u'_i \mid S_{i - 1} + u_i - u'_i)]\\
	={} &
	\tfrac{c}{c-1} \cdot \sum_{u_i \in A \setminus S_0} \mspace{-9mu} [f(S_i) - f(S_{i - 1})]
	=
	\tfrac{c}{c-1} \cdot [f(S_n) - f(S_0)]
	\enspace,
\end{align*}
where the second inequality holds by the submodularity of $f$, and the last equality holds since $S_i = S_{i - 1}$ for every integer $1 \leq i \leq n$ for which $u_i \not \in A$. The lemma now follows by combining the two above inequalities.
\end{proof}

To complement the last lemma, we need to upper bound the marginal contributions of the elements $u'_i$ corresponding to the elements $u_i \in A$ with respect to the solutions of \cref{alg:local_search_pass} when the last elements arrive. We prove such an upper bound in \cref{cor:removing_bound} below. However, proving this upper bound requires us to present a few additional definitions as well as properties of the objects defined. We begin by constructing an auxilary directed graph $G$ whose vertices are the elements of $\cN$. Furthermore, for every element $u_i \in \cN \setminus S_0$, we create edges for the graph $G$ in the following way. Note that there is a single element $c_i \in C_i$ that does not belong to $S_i$. The graph $G$ includes edges from $c_i$ to every other element of $C_i$. Let us now prove some properties of the graph $G$.

\begin{observation} \label{obs:edge_increase}
For every element $u \in \cN$, let us define
\[
	\val(u)
	=
	\begin{cases}
		f(u : S_n) & \text{if $u \in S_n$} \enspace, \\
		f(u : S_i) & \begin{aligned} & \text{if $u \in A \setminus S_n$, and $u$ was removed from} \\ & \text{the solution of \cref{alg:local_search_pass} when $u_i$ arrived} \enspace,\end{aligned} \\
		f(u \mid S_i) & \text{if $u \not \in A$ and $u = u_i$} \enspace.
	\end{cases}
\]
Then, for every edge $uv$ of $G$ such that $u \in A$, $\val(u) \leq \val(v)$.
\end{observation}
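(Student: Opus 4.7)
The plan is to first pin down the iteration $i$ that creates the edge $uv$ and then dispatch two sub-cases for the choice of $v$. By construction the source $c_i$ of edges produced in iteration $i$ is either $u'_i$ (if $u_i$ is accepted) or $u_i$ (if $u_i$ is rejected). In the rejection case, $u = u_i$ never enters the solution, so $u \notin A$, contradicting the hypothesis. Hence $u = u'_i$ for some iteration in which $u_i$ is accepted, and since each element arrives exactly once in the stream, $u$ is removed at time $i$ and never re-enters, yielding $u \in A \setminus S_n$. By the definition of $\val$, this gives $\val(u) = f(u : S_i)$; writing $u = u_k$, the sets $\{u_j \in S_i : j < k\}$ and $\{u_j \in S_{i-1} : j < k\}$ coincide, because $S_i$ and $S_{i-1}$ differ only in $u = u_k$ and $u_i$, neither of which has index strictly less than $k$. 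Hence $\val(u) = f(u : S_{i-1})$.

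The core technical fact I would use repeatedly is a monotonicity property: for any index $\ell \leq i$ and any time $t \geq i-1$, one has $\{u_j \in S_t : j < \ell\} \subseteq \{u_j \in S_{i-1} : j < \ell\}$, because between times $i-1$ and $t$ only elements of index at least $i$ can enter the solution, while elements of smaller index can only be removed (they have already arrived and cannot re-appear). Combined with submodularity of $f$, this yields $f(u_\ell : S_t) \geq f(u_\ell : S_{i-1})$ whenever $\ell \leq i$ and $t \geq i-1$, for every element with index $\ell$.

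With these preparations, I would handle the two sub-cases. If $v = u_i$, the acceptance criterion at time $i$ gives $f(u_i \mid S_{i-1}) \geq c \cdot f(u'_i : S_{i-1}) \geq \val(u)$ (using $c > 1$), and submodularity applied to $\{u_j \in S_{i-1} : j < i\} \subseteq S_{i-1}$ gives $f(u_i : S_{i-1}) \geq f(u_i \mid S_{i-1})$. The monotonicity fact with $\ell = i$ and $t$ equal to either $n$ (when $u_i \in S_n$) or the time at which $u_i$ is subsequently removed (when $u_i \in A \setminus S_n$) then gives $\val(v) = \val(u_i) \geq f(u_i : S_{i-1}) \geq \val(u)$. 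Otherwise $v \in C_i \setminus \{u, u_i\} \subseteq S_{i-1} \cap S_i$; writing $v = u_{\ell_v}$ with $\ell_v < i$, the choice $u'_i = \argmin_{w \in C_i - u_i} f(w : S_{i-1})$ made by the algorithm yields $f(v : S_{i-1}) \geq f(u'_i : S_{i-1}) = \val(u)$, and the monotonicity fact with $\ell = \ell_v$ and $t$ equal to $n$ or the removal time of $v$ gives $\val(v) \geq f(v : S_{i-1}) \geq \val(u)$.

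The main obstacle is the careful index bookkeeping across the various restrictions $\{u_j \in S_t : j < k\}$ as $t$ varies over $\{i-1, i, \dots, n\}$; the monotonicity fact in the second paragraph isolates this bookkeeping once and for all, so that both sub-cases reduce to combining it with either the algorithm's acceptance criterion or the choice of $u'_i$ as an argmin.
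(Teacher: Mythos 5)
Your proof is correct and follows essentially the same route as the paper's: identify the iteration $i$ that creates the edge, split on whether $v = u_i$ or $v \in C_i - u_i - u'_i$, and combine the acceptance condition (resp.\ the argmin choice of $u'_i$) with the monotonicity of $f(\cdot : S_t)$ as $t$ grows. Your write-up is in fact slightly more careful than the paper's, which leaves the $t$-monotonicity fact implicit and, in the $v = u_i$ case, writes $f(v \mid S_i)$ where $f(v \mid S_{i-1})$ is meant.
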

\begin{proof}
Since there is an edge from $u$ to $v$, $u$ must have been removed from $A$ when some element $u_i$ arrived, and $v$ was another element of the cycle $C_i$. If $v \neq u_i$, then the fact that $u$ was removed (rather than $v$) implies
\[
	\val(u)
	=
	f(u : S_i)
	\leq
	f(v : S_i)
	\leq
	\val(v)
	\enspace,
\]
where the second inequality holds since $\val(v)$ is equal to $f(v : S_j)$ for some $j \geq i$. Otherwise, if $v = u_i$, then the fact that $u$ was removed following the arrival of $v$ implies
\[
	\val(u)
	=
	f(u : S_i)
	\leq
	\frac{f(v \mid S_i)}{c}
	\leq
	\frac{\val(v)}{c}
	\leq
	\val(v)
	\enspace,
\]
where the last inequality holds since the monotonicity of $f$ guarantees that $\val(v)$ is non-negative.
\end{proof}

\begin{corollary} \label{cor:path_increase}
If $u$ and $v$ are two elements of $A$ such that $v$ is reachable from $u$ in $G$, then $\val(u) \leq \val(v)$.
\end{corollary}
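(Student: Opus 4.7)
The plan is to prove the corollary by induction on the length $\ell$ of the directed path from $u$ to $v$ in $G$. The base case $\ell = 1$ is immediate: it is exactly the content of Observation \ref{obs:edge_increase} applied to the single edge $u \to v$, whose source $u$ lies in $A$.

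Before tackling the inductive step, the key setup I would establish is a small structural fact about $G$: every target of an edge in $G$ lies in $A$. The reason is that an edge always has the form $c_i \to w$ with $w \in C_i - c_i$, and inspecting the two cases in \cref{alg:local_search_pass} shows that either $c_i = u'_i$ (when $u_i$ is accepted) or $c_i = u_i$ (when $u_i$ is rejected); in both cases $C_i - c_i \subseteq S_i \subseteq A$. Thus, regardless of where a walk in $G$ begins, after one step every subsequent vertex is guaranteed to lie in $A$.

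With this in hand, the inductive step is short. Given a path $u = w_0 \to w_1 \to \cdots \to w_\ell = v$ with $u, v \in A$, I would apply Observation \ref{obs:edge_increase} to the first edge $w_0 \to w_1$ (whose source $w_0 = u$ lies in $A$) to obtain $\val(u) \leq \val(w_1)$. By the structural fact above, $w_1 \in A$, so the inductive hypothesis applies to the path from $w_1$ to $v$ of length $\ell - 1$, yielding $\val(w_1) \leq \val(v)$. Chaining the two inequalities closes the induction.

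I do not foresee a substantial obstacle. The only subtlety is that Observation \ref{obs:edge_increase} requires the source of the edge to be in $A$, so the induction would break if a directed path could exit $A$ at an intermediate vertex; the structural observation that every edge target lies in $A$ is precisely what rules this out and makes the induction go through.
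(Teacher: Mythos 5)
Your proof is correct and follows essentially the same route as the paper: the paper's one-line argument likewise observes that every vertex of $\cN \setminus A$ is a source of $G$ (equivalently, every edge target lies in $A$, which is your structural fact) and then chains \cref{obs:edge_increase} along the path. Your version just makes the induction explicit.
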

\begin{proof}
The corollary follows from \cref{obs:edge_increase} because the construction of $G$ guarantees that the vertices of $\cN \setminus A$ are all sources of $G$ (i.e., vertices that do not have any edge entering them).
\end{proof}

\begin{observation}
$G$ is acyclic; and every element $u \in \cN$ that is not a sink of $G$ is spanned by the elements of $\delta^+(u)$, where $\delta^+(u) = \{v \mid \text{$uv$ is an edge of $G$}\}$.
\end{observation}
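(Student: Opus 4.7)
My plan is to handle the spanning claim first, since it follows directly from the definition of the graph $G$ together with standard matroid facts, and then attack acyclicity via a carefully chosen time-based potential function on the vertices of $G$.

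For the spanning property, suppose $u \in \cN$ is not a sink of $G$. Then by the construction of $G$, there is some time step $i \in \{1, \dots, n\}$ at which $u$ plays the role of $c_i$, and in particular $u \in C_i$ where $C_i$ is a cycle of $M$. By the defining property of a cycle, $u$ is spanned by $C_i - u$. Every other element of $C_i$ is a head of an edge leaving $u = c_i$, so $C_i - u \subseteq \delta^+(u)$. Since $\Span$ is monotone under taking supersets, this immediately yields $u \in \Span(\delta^+(u))$, as desired.

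For acyclicity, the key is to assign each element $u \in \cN$ an integer $\tau(u)$ that strictly increases along every edge of $G$. I would set $\tau(u) = \max\{j \geq 0 : u \in S_j\}$ whenever $u \in A$, and $\tau(u) = i - 1$ whenever $u \notin A$ and $u = u_i$. To verify strict increase along an edge $c_i \to v$, I would split on whether a swap occurred at time $i$. If a swap occurred, then $c_i = u'_i \in A$ with $\tau(c_i) = i - 1$, while $v \in C_i - u'_i$ is either $u_i$ or an element of $S_{i-1} - u'_i$; in both sub-cases $v \in S_i$, so $\tau(v) \geq i > \tau(c_i)$. If no swap occurred, then $c_i = u_i \notin A$ with $\tau(c_i) = i - 1$, and $v \in C_i - u_i \subseteq S_{i-1} = S_i$, again giving $\tau(v) \geq i > \tau(c_i)$.

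Once $\tau$ strictly increases along every edge, a directed cycle in $G$ is impossible, since traversing such a cycle would require $\tau$ to return to its starting value. I do not foresee a serious obstacle; the only subtlety is calibrating $\tau$ so that both $u \in A$ and $u \notin A$ are handled uniformly. Assigning $\tau(u) = i - 1$ (rather than $i$) to a rejected element $u = u_i \notin A$ is precisely what makes the no-swap case give a strict inequality, since even though $v \in S_{i-1}$ lived in the solution before $u_i$ arrived, it remains in $S_i$, so its $\tau$ is at least $i$.
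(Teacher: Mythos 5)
Your proof is correct and follows essentially the same route as the paper: the spanning claim is the identical observation that $\delta^+(c_i) \supseteq C_i - c_i$ for the cycle $C_i$, and your potential $\tau$ is just an explicit, quantified version of the paper's topological order, which sorts vertices by the largest index $i$ with $u \in S_i$ (placing elements never in any solution first).
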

\begin{proof}
Every edge $e$ of $G$ was created due to some cycle $C_i$. Furthermore, the edge $e$ goes from a vertex that does not appear in $S_i$ or any solution that \cref{alg:local_search_pass} has at a later time point to a vertex that does belong to $S_i$. Therefore, if we sort the vertices of $G$ by the largest index $i$ for which they belong to $S_i$ (a vertex that does not belong to $S_i$ for any $i$ is placed before all the vertices that do belong to $S_i$ for some $i$), then we obtain a topological order of $G$, which implies that $G$ is acyclic.

To prove the second part of the observation, we note that whenever the construction of $G$ includes edges leaving a node $u$, this implies that these edges go to all the vertices of $C - u$ for some cycle $C$ that includes $u$. Therefore, $\delta^+(u) \supseteq C - u$ spans $u$.
\end{proof}

To use the last observation, we need the following known lemma (a similar lemma appeared earlier in~\cite{badanidiyuru2011buyback} in an implicit form, and was made explicit in~\cite{chekuri2015streaming}).
\begin{lemma}[Lemma 13 of~\cite{feldman2018do}] \label{lem:mapping}
Consider an  arbitrary  directed  acyclic  graph $G = (V,E)$ whose  vertices  are elements of some matroid $M'$.  If every non-sink vertex $u$ of $G$ is spanned by $\delta^+(u)$ in $M'$, then for every set $S$ of vertices of $G$ which is independent in $M'$ there must exist an injective function $\psi_S$ such that, for every vertex $u \in S$, $\psi_S(u)$ is a sink of $G$ which is reachable from $u$.
\end{lemma}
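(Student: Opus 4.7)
The plan is to deduce the existence of $\psi_S$ from Hall's marriage theorem applied to the bipartite graph $H$ whose two vertex classes are $S$ and $T \coloneqq \{\text{sinks of } G\}$, with $u \in S$ adjacent to $t \in T$ iff $t$ is reachable from $u$ in $G$. Any matching saturating $S$ in $H$ gives the desired injective $\psi_S$, so it is enough to verify Hall's condition: $|N_H(S')| \geq |S'|$ for every $S' \subseteq S$.

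Fix such an $S'$ and let $R \subseteq V$ be the set of all vertices reachable in $G$ from some vertex of $S'$ (each vertex is considered reachable from itself, so $S' \subseteq R$). Then $N_H(S') = R \cap T$, and since $S'$ is independent in $M'$, we have $\mathrm{rank}_{M'}(R) \geq |S'|$. Hence it suffices to show
\begin{equation*}
R \subseteq \Span_{M'}(R \cap T) \enspace,
\end{equation*}
because this yields $\mathrm{rank}_{M'}(R \cap T) = \mathrm{rank}_{M'}(R) \geq |S'|$, and trivially $|N_H(S')| = |R \cap T| \geq \mathrm{rank}_{M'}(R \cap T)$.

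To establish the spanning claim, I would induct on the vertices of $R$ in a reverse topological order of $G$ (which exists because $G$ is acyclic, and has the property that every out-neighbor of a vertex appears earlier). A sink $u \in R$ lies in $R \cap T$, so it is trivially in $\Span_{M'}(R \cap T)$. For a non-sink $u \in R$, the set $\delta^+(u)$ is contained in $R$ (appending one edge to a path from $S'$ to $u$ certifies reachability), and every vertex of $\delta^+(u)$ precedes $u$ in the reverse topological order, so by the inductive hypothesis $\delta^+(u) \subseteq \Span_{M'}(R \cap T)$. Combining this with the hypothesis $u \in \Span_{M'}(\delta^+(u))$ yields $u \in \Span_{M'}(R \cap T)$, completing the induction and the proof.

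The main subtlety is the closure property $\delta^+(u) \subseteq R$, which makes the induction go through and is where acyclicity is used; without acyclicity one could not order the vertices of $R$ so that every out-neighbor of a non-sink vertex is processed first. Everything else is a routine chaining of the matroid spanning hypothesis with submodularity of rank and a single appeal to Hall's theorem.
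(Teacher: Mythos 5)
Your proof is correct. Note that the paper does not prove this lemma at all---it imports it verbatim as Lemma~13 of~\cite{feldman2018do} (with the remark that it appears implicitly in~\cite{badanidiyuru2011buyback} and explicitly in~\cite{chekuri2015streaming})---so there is no in-paper argument to compare against; you have supplied the proof the paper outsources. Your argument is the standard one: the key step, that the set $R$ of vertices reachable from $S'$ satisfies $R \subseteq \Span_{M'}(R \cap T)$, goes through exactly as you describe by induction along a reverse topological order, using that $\Span$ is a closure operator (so $\delta^+(u) \subseteq \Span_{M'}(R\cap T)$ together with $u \in \Span_{M'}(\delta^+(u))$ gives $u \in \Span_{M'}(R\cap T)$), and then $|R\cap T| \geq \rank_{M'}(R\cap T) = \rank_{M'}(R) \geq |S'|$ verifies Hall's condition. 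Two cosmetic points: the convention that each vertex is reachable from itself is needed (an element of $S$ that is itself a sink must be allowed to map to itself), which you state; and your closing mention of ``submodularity of rank'' is a misnomer---the argument only uses monotonicity of rank and the closure property of $\Span$---but nothing in the proof actually depends on it.
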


\begin{corollary} \label{cor:removing_bound}
$\sum_{u_i \in B \setminus A} f(u'_i : S_{i - 1}) + c \cdot \sum_{u_i \in B \cap (A \setminus S_0)} f(u'_i : S_{i - 1}) \leq f(S_n \mid \varnothing) - f(S_0 \mid S_0 \setminus B)$.
\end{corollary}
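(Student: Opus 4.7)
The plan is to apply Lemma~\ref{lem:mapping} to the base $B$, using the structural properties of $G$ established in Observation~\ref{obs:edge_increase} and Corollary~\ref{cor:path_increase}. Since the sinks of $G$ are exactly $S_n$ and $|B|=|S_n|$ (both are bases of $M$), this yields a bijection $\psi_B\colon B\to S_n$ such that $\psi_B(u)$ is reachable from $u$ in $G$. The strategy is to charge each summand on the left-hand side to $\val(\psi_B(u_i))$, so that the total is bounded by $\sum_{v\in S_n}\val(v)=f(S_n)-f(\varnothing)$, minus a correction term accounting for the elements of $B\cap S_0$, which contribute no charge on the left-hand side but whose images under $\psi_B$ still appear in $\sum_{v\in S_n}\val(v)$.

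For $u_i\in B\cap(A\setminus S_0)$ (accepted), the construction of $G$ yields the edge $u'_i\to u_i$. A short verification shows $\val(u'_i)=f(u'_i:S_{i-1})$ (the conditioning sets in $f(u'_i:S_i)$ and $f(u'_i:S_{i-1})$ agree because the only differences between $S_{i-1}$ and $S_i$ involve elements of index at least the index of $u'_i$). The argument in the proof of Observation~\ref{obs:edge_increase} (case $v=u_i$) then gives $c\cdot\val(u'_i)\le\val(u_i)$, and Corollary~\ref{cor:path_increase} gives $\val(u_i)\le\val(\psi_B(u_i))$, so $c\cdot f(u'_i:S_{i-1})\le\val(\psi_B(u_i))$. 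For $u_i\in B\setminus A$ (rejected), $u_i$ is a source of $G$ whose outgoing edges reach every element of $C_i-u_i$. Let $v_0$ be the first vertex after $u_i$ on a path in $G$ from $u_i$ to $\psi_B(u_i)$, so $v_0\in C_i-u_i\subseteq S_{i-1}\subseteq A$. Minimality of $u'_i$ gives $f(u'_i:S_{i-1})\le f(v_0:S_{i-1})$; a short submodularity argument (using that any element of index smaller than $v_0$'s that lies in $S_n$, or in the solution just before $v_0$ is removed, also lies in $S_{i-1}$) gives $f(v_0:S_{i-1})\le\val(v_0)$; and Corollary~\ref{cor:path_increase} gives $\val(v_0)\le\val(\psi_B(u_i))$.

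Combining these inequalities bounds the left-hand side by $\sum_{u\in B\setminus S_0}\val(\psi_B(u))=\sum_{v\in S_n}\val(v)-\sum_{v\in\psi_B(B\cap S_0)}\val(v)$. The first sum telescopes to $f(S_n)-f(\varnothing)=f(S_n\mid\varnothing)$. For the second, since $B\cap S_0\subseteq A$, Corollary~\ref{cor:path_increase} gives $\sum_{v\in\psi_B(B\cap S_0)}\val(v)\ge\sum_{u\in B\cap S_0}\val(u)$; and for each $u=u_j\in S_0$, the conditioning set in the definition of $\val(u)$ is a subset of $\{u_l:l<j\}$, so submodularity gives $\val(u)\ge f(u\mid\{u_l:l<j\})$. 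The main subtlety lies in the final step: one must show $\sum_{u_j\in B\cap S_0}f(u_j\mid\{u_l:l<j\})\ge f(S_0)-f(S_0\setminus B)$ regardless of the ordering of $S_0$. This follows because telescoping over all of $S_0$ gives $\sum_{u_j\in S_0}f(u_j\mid\{u_l:l<j\})=f(S_0)-f(\varnothing)$, while a further submodularity step yields $\sum_{u_j\in S_0\setminus B}f(u_j\mid\{u_l:l<j\})\le\sum_{u_j\in S_0\setminus B}f(u_j\mid\{u_l\in S_0\setminus B:l<j\})=f(S_0\setminus B)-f(\varnothing)$; subtracting the two gives the required bound $f(S_0\mid S_0\setminus B)$.
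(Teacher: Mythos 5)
Your proof is correct and follows essentially the same route as the paper's: apply \cref{lem:mapping} to obtain the bijection $\psi_B\colon B\to S_n$, charge each left-hand term to $\val(\psi_B(u_i))$ via \cref{cor:path_increase} (routing the accepted case through the acceptance condition and the rejected case through the minimality of $u'_i$ and the first out-neighbour on the path), and then use bijectivity plus telescoping/submodularity to arrive at $f(S_n\mid\varnothing)-f(S_0\mid S_0\setminus B)$. The only differences are cosmetic (e.g., you bound the accepted case via $\val(u'_i)$ rather than directly via $f(u_i\mid S_{i-1})/c$, and you organize the final $B\cap S_0$ correction as a difference of two telescoping sums), and all the small submodularity verifications you invoke do check out.
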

\begin{proof}
Let $\psi_B$ be the function whose existence is guaranteed by \cref{lem:mapping} (recall that $B$ is a base of $M$, and therefore, is independent in $M$). Consider now an element $u_i \in B \setminus A$, and let $P_i$ be the path in $G$ from $u_i$ to $\psi_B(u_i)$ whose existence is guaranteed by \cref{lem:mapping}. If we denote by $u''_i$ the element that appears in this path immediately after $u_i$ (there must be such an element because $u_i \not \in A \supseteq S_n$, and therefore, is not a sink of $G$), then $\val(u''_i) \leq \val(\psi_B(u_i))$ according to \cref{cor:path_increase}. Additionally, since $u_i$ was rejected by \cref{alg:local_search_pass} immediately upon arrival, both $u'_i$ and $u''_i$ are elements of $C_i - u_i$, and thus, due to the way in which \cref{alg:local_search_pass} selects $u'_i$,
\[
	f(u'_i : S_{i - 1})
	\leq
	f(u''_i : S_{i - 1})
	\leq
	\val(u''_i)
	\leq
	\val(\psi_B(u_i))
	=
	f(\psi_B(u_i) : S_n)
	\enspace,
\]
where the last equality holds since $\psi_B(u_i)$ is a sink of $G$, and therefore, belongs to $S_n$.

Consider now an element $u_i \in B \cap (A \setminus S_0)$. Since $\psi_B(u_i)$ is reachable in $G$ from $u_i$, $\val(u_i) \leq \val(\psi_B(u_i))$. Therefore, the fact that $u_i$ was added upon arrival to the solution of \cref{alg:local_search_pass} implies
\[
	f(u'_i : S_{i - 1})
	\leq
	\frac{f(u_i \mid S_{i - 1})}{c}
	\leq
	\frac{\val(u_i)}{c}
	\leq
	\frac{\val(\psi_B(u_i))}{c}
	=
	\frac{f(\psi_B(u_i) : S_n)}{c}
	\enspace.
\]

Combining both the above inequalities, we get
\begin{align*}
	\sum_{u_i \in B \setminus A} f(u'_i&{} : S_{i - 1}) + c \cdot \sum_{u_i \in B \cap (A \setminus S_0)} \mspace{-18mu} f(u'_i : S_{i - 1})
	\leq
	\sum_{u_i \in B \setminus S_0} \mspace{-9mu} f(\psi_B(u_i) : S_n)\\
	={} &
	\sum_{u_i \in S_n} f(\psi_B(u_i) : S_n) - \sum_{u_i \in B \cap S_0} \mspace{-9mu} f(\psi_B(u_i) : S_n)
	\leq
	f(S_n \mid \varnothing) - \sum_{u_i \in B \cap S_0} \mspace{-9mu} \val(u_i)\\
	\leq{} &
	f(S_n \mid \varnothing) - f(S_0 \cap B \mid S_0 \setminus B)
	=
	f(S_n \mid \varnothing) - f(S_0 \mid S_0 \setminus B)
	\enspace,
\end{align*}
where the first equality holds because $\psi_B$ is a bijection from $B$ to $S_n$ by \cref{lem:mapping}, and the third inequality holds since $\sum_{u_i \in B \cap S_0} \val(u_i) \geq \sum_{u_i \in B \cap S_0} f(u_i : S_0) \geq \sum_{u_i \in B \cap S_0} f(u_i \mid (S_0 \setminus B) \cup (S_0 \cap \{u_{1 - k}, u_{2 - k}, \dotsc, u_{i - 1}\})) = f(S_0 \cap B \mid S_0 \setminus B)$.
\end{proof}

We are now ready to state and prove our main result regarding \cref{alg:local_search_pass}. In the beginning of the section, we claimed that this result shows that the difference $f(S_n) - f(S_0)$ is large whenever $S_0$ is not an approximate local maximum. To see why this is the case, note that the rightmost side in the next proposition is guaranteed to be large (for some base $B$) when $S_0$ is not an approximate local maximum.
\begin{proposition} \label{prop:single_pass}
\cref{alg:local_search_pass} is a semi-streaming algorithm, and it outputs a base $S_n$ of $M$ that obeys $(c - 1) \cdot f(S_n \mid \varnothing) + \tfrac{3c - 2}{c - 1}[f(S_n) - f(S_0)] \geq	f(B \mid S_0 \setminus B) - f(S_0 \mid \varnothing) \geq f(B \mid S_0) + \sum_{u \in B \cap S_0} f(u \mid S_0 - u) - f(S_0 \mid \varnothing)$.
\end{proposition}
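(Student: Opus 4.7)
The semi-streaming claim is immediate from the description of \cref{alg:local_search_pass}: at every moment the algorithm stores only the current base $S_{i-1}$ of $M$ (of cardinality at most $k$) together with the arriving element $u_i$, and both the cycle $C_i$ and the minimizing element $u'_i$ are computed on the fly from these.

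For the approximation inequality, my plan is to bridge \cref{lem:adding_bound} and \cref{cor:removing_bound} through the common sum $\sum_{u_i \in B \setminus S_0} f(u_i \mid S_{i-1})$. Set $\sigma_r := \sum_{u_i \in B \setminus A} f(u'_i : S_{i-1})$ and $\sigma_a := \sum_{u_i \in B \cap (A \setminus S_0)} f(u'_i : S_{i-1})$, and partition $B \setminus S_0$ into rejected elements ($u_i \in B \setminus A$) and accepted elements ($u_i \in B \cap (A \setminus S_0)$). For a rejected $u_i$ the algorithm's acceptance rule gives $f(u_i \mid S_{i-1}) \leq c \cdot f(u'_i : S_{i-1})$ directly. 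For an accepted $u_i$ a short submodularity chain---first $f(u_i \mid S_{i-1}) \leq f(u_i \mid S_{i-1} - u'_i) = [f(S_i) - f(S_{i-1})] + f(u'_i \mid S_{i-1} - u'_i)$, and then $f(u'_i \mid S_{i-1} - u'_i) \leq f(u'_i : S_{i-1})$---yields $f(u_i \mid S_{i-1}) \leq f(u'_i : S_{i-1}) + [f(S_i) - f(S_{i-1})]$. Each swap gain $f(S_i) - f(S_{i-1})$ is non-negative, and the full set of gains over $A \setminus S_0$ telescopes to $f(S_n) - f(S_0)$, so summing over $B \setminus S_0$ produces
\[
\sum_{u_i \in B \setminus S_0} f(u_i \mid S_{i-1}) \;\leq\; c\sigma_r + \sigma_a + f(S_n) - f(S_0).
\]

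Combining this with the lower bound of \cref{lem:adding_bound} gives a lower bound on $c\sigma_r + \sigma_a$, while \cref{cor:removing_bound} combined with the elementary estimate $c\sigma_r + \sigma_a \leq c(\sigma_r + c\sigma_a)$ (valid since $\sigma_a \geq 0$ and $c \geq 1$) gives the matching upper bound $c\sigma_r + \sigma_a \leq c \cdot [f(S_n \mid \varnothing) - f(S_0 \mid S_0 \setminus B)]$. Chaining the two bounds and collecting terms produces
\[
f(S_0 \cup B) - c \cdot f(S_0 \setminus B) + \tfrac{c^2}{c-1}\, f(S_0) + c\cdot f(\varnothing) \;\leq\; \tfrac{c^2 + c - 1}{c-1}\, f(S_n).
\]
The main subtlety I expect is that this is not literally the target inequality: once both sides of the statement of \cref{prop:single_pass} are expanded, the target's analogous left-hand side reads $f(S_0 \cup B) - f(S_0 \setminus B) + \tfrac{2c-1}{c-1}f(S_0) + c\cdot f(\varnothing)$ (with the same right-hand side). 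The difference between what I derive and what I need is exactly $(c-1)\cdot[f(S_0) - f(S_0 \setminus B)] = (c-1)\cdot f(S_0 \mid S_0 \setminus B) \geq 0$ by monotonicity of $f$, so the derived inequality is strictly stronger and the target follows. In other words, the slack incurred by the estimate $c\sigma_r + \sigma_a \leq c(\sigma_r + c\sigma_a)$ is absorbed cleanly by a single invocation of monotonicity at the very end; this is the only delicate step in the argument.
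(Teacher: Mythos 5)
Your proof is correct and follows essentially the same route as the paper's: both arguments bound $\sum_{u_i \in B \setminus S_0} f(u_i \mid S_{i-1})$ by splitting into rejected and accepted elements (using the swap rule and the telescoping of the non-negative swap gains), then chain \cref{lem:adding_bound} with \cref{cor:removing_bound} after inflating the accepted-sum coefficient from $1$ to $c^2$, and absorb the resulting slack $(c-1)\cdot f(S_0 \mid S_0 \setminus B) \geq 0$ by a final application of monotonicity. The only difference is presentational (you isolate $c\sigma_r + \sigma_a$ and compare expanded inequalities at the end, while the paper keeps everything as a lower bound on $f(S_n)-f(S_0)$), so no further comment is needed.
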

\begin{proof}
The first part of the proposition holds because implementing \cref{alg:local_search_pass} requires us to maintain only two bases of the matroid $M$, the input base $S_0$ and the current solution of the algorithm. The rest of this proof is devoted to proving the second part of the proposition.

Observe that whenever \cref{alg:local_search_pass} changes its solution while processing element $u_i$, the value of this solution changes by
\begin{align*}
	f(u_i \mid S_{i - 1}) - f(u'_i \mid S_{i - 1} + u_i - u'_i)
	\geq{} &
	c \cdot f(u'_i : S_{i - 1}) - f(u'_i \mid S_{i - 1} + u_i - u'_i)\\
	\geq{} &
	(c - 1) \cdot f(u'_i \mid S_{i - 1} + u_i - u'_i)
	\geq
	0
	\enspace,
\end{align*}
where the second inequality holds by the submodularity of $f$ and the last inequality follows from the monotonicity of $f$. This implies that $f(S_i)$ is a non-decreasing function of $i$, and therefore,
\begin{align*}
	f(S_n)&{} - f(S_0)
	\geq
	\sum_{u_i \in B \cap (A \setminus S_0)} \mspace{-18mu} [f(u_i \mid S_{i - 1}) - f(u'_i \mid S_{i - 1} + u_i - u'_i)]\\
	\geq{} &
	\sum_{u_i \in B \cap (A \setminus S_0)} \mspace{-18mu} [f(u_i \mid S_{i - 1}) - f(u'_i \mid S_{i - 1} - u'_i)]
	\geq
	\sum_{u_i \in B \cap (A \setminus S_0)} \mspace{-18mu} [f(u_i \mid S_{i - 1}) - f(u'_i : S_{i - 1})]\\
	\geq{} &
	\sum_{u_i \in B \setminus S_0} f(u_i \mid S_{i - 1}) - \sum_{u_i \in B \cap (A \setminus S_0)} \mspace{-18mu} f(u'_i : S_{i - 1}) - c \cdot \sum_{u_i \in B \setminus A} f(u'_i : S_{i - 1})\\
	\geq{} &
	\sum_{u_i \in B \setminus S_0} f(u_i \mid S_{i - 1}) - c^2 \cdot \sum_{u_i \in B \cap (A \setminus S_0)} \mspace{-18mu} f(u'_i : S_{i - 1}) - c \cdot \sum_{u_i \in B \setminus A} f(u'_i : S_{i - 1})
	\enspace,
\end{align*}
where the second and third inequalities hold by the submodularity of $f$, the penultimate inequality holds since the elements of $B \setminus A$ where not added by \cref{alg:local_search_pass} to its solution, and the last inequality holds by the monotonicity of $f$. 

Using \cref{lem:adding_bound} and \cref{cor:removing_bound}, the previous inequality implies
\begin{align*}
	f(S_n) - f(&S_0)\\
	\geq{} &
	\{f(S_0 \cup B) + \tfrac{1}{c-1} \cdot f(S_0) - \tfrac{c}{c-1} \cdot f(S_n)\} - c \cdot \{f(S_n \mid \varnothing) - f(S_0 \mid S_0 \setminus B)\} \\
	={} &
	f(S_0 \cup B) + c \cdot f(\varnothing) + \tfrac{1}{c - 1} \cdot f(S_0) - \tfrac{c^2}{c - 1} \cdot f(S_n) + c \cdot f(S_0 \mid S_0 \setminus B)\\
	\geq{} &
	f(S_0 \cup B) + c \cdot f(\varnothing) + \tfrac{1}{c - 1} \cdot f(S_0) - \tfrac{c^2}{c - 1} \cdot f(S_n) + f(S_0 \mid S_0 \setminus B)
	\enspace,
\end{align*}
where the second inequality follows from monotonicity of $f$ and the fact that $c > 1$.
The first inequality of the proposition now follows by rearranging the last inequality (this can be verified by checking term by term).

To see that the second inequality of the proposition holds as well, we note that, by the submodulary of $f$,
\[
    f(B \mid S_0 \setminus B)
    =
     f(B \mid S_0) + f(B \cap S_0 \mid S_0 \setminus B)
    \geq
     f(B \mid S_0) + \sum_{u \in B \cap S_0} f(u \mid S_0 - u)
     \enspace.
     \qedhere
\]
\end{proof}

\subsection{Obtaining an Approximate Local Maximum} \label{ssc:local_search_multi}

As promised above, in this section we show that \cref{alg:local_search_pass} can be used to get an $\eps$-approximate local maximum. The algorithm we use to do that is given as \cref{alg:local_search_multi}, and it gets $\eps \in (0, 1)$ as a parameter.

\begin{algorithm}
\caption{\textsc{Multiple Local Search Passes} $(\eps)$} \label{alg:local_search_multi}
\begin{algorithmic}[1]
\State Find a base $T_0$ of $M$ using a single pass (by simply initializing $T_0$ to be the empty set, and then adding to it any elements that arrives and can be added to $T_0$ without violating independence in $M$).
\State Let $T_1$ be the output of \cref{alg:local_search_pass} when given $S_0 = T_0$ and $c = 2$. \label{line:OPT_estimation}
\For{$i = 2$ \textbf{to} $2 + \lceil 40\eps^{-2} \rceil$} \label{line:passes_loop}
	\State Let $T_i$ be the output of \cref{alg:local_search_pass} when given $S_0 = T_{i - 1}$ and $c = 1 + \eps / 2$.
	\If{$f(T_i) - f(T_{i - 1}) \leq \eps^2/10 \cdot f(T_1 \mid \varnothing)$}
		\State \Return{$T_{i - 1}$}. 
	\EndIf
\EndFor
\State Indicate failure if the execution of the algorithm has arrived to this point.
\end{algorithmic}
\end{algorithm}

Intuitively, \cref{alg:local_search_multi} works by employing the fact that every execution of \cref{alg:local_search_pass} increases the value of its input base $T_{i - 1}$ significantly, unless this input base is close to being a local maximum, and therefore, if the execution produces a base $T_i$ which is not much better than $T_{i - 1}$, then we know that $T_{i - 1}$ is an $\eps$-approximate local maximum. The following lemma proves this formally.
\begin{lemma}
If \cref{alg:local_search_multi} does not indicate a failure, then its output set $T$ obeys $f(B \mid T) + \sum_{u \in B \cap T} f(u \mid T - u) - f(T \mid \varnothing) < \eps \cdot f(\OPT \mid \varnothing)$. Note that since $B$ is an arbitrary base of $M$, the last inequality implies that $T$ is an $\eps$-approximate local maximum with result to $f$.
\end{lemma}
\begin{proof}
Since $T_1$ is a base of $M$, $f(T_1 \mid \varnothing) = f(T_1) - f(\varnothing) \leq f(\OPT) - f(\varnothing) = f(\OPT \mid \varnothing)$. This implies that when \cref{alg:local_search_multi} returns a set $T_{i - 1}$, then
\[
	f(T_{i}) - f(T_{i - 1}) \leq (\eps^2/10) \cdot f(\OPT \mid \varnothing)
	\enspace.
\]
Plugging this inequality and the fact that $f(T_{i} \mid \varnothing) \leq f(\OPT \mid \varnothing)$ (because $T_{i}$ is a base of $M$) into the guarantee of \cref{prop:single_pass} for the execution of \cref{alg:local_search_pass} that has created $T_{i}$ yields
\begin{align*}
	\eps \cdot f(\OPT \mid \varnothing)
	\geq{} &
	(\eps / 2) \cdot f(\OPT \mid \varnothing) + \eps(3\eps / 2 + 1)/5 \cdot f(\OPT \mid \varnothing)\\
	\geq{} &
	(\eps / 2) \cdot f(T_{i} \mid \varnothing) + \tfrac{3\eps/2 + 1}{\eps/2} \cdot [f(T_{i}) - f(T_{i - 1})]\\
	\geq{} &
	f(B \mid T_{i - 1}) + \sum_{u \in B \cap T_{i - 1}} \mspace{-18mu} f(u \mid T_{i - 1} - u) - f(T_{i - 1} \mid \varnothing)
	\enspace.
	\qedhere
\end{align*}
\end{proof}

One could image that it is possible for the value of the solution maintained by \cref{alg:local_search_multi} to increase significantly following every iteration of the loop starting on Line~\ref{line:passes_loop}, which will result in the algorithm indicating failure rather than ever returning a solution. However, it turns out that this cannot happen because the value of the solution of \cref{alg:local_search_multi} cannot exceed $f(\OPT)$, which implies a bound on the number of times this value can be increased significantly. This idea is formalized by the next two claims.
\begin{observation} \label{obs:opt_estimation}
$f(T_1 \mid \varnothing) \geq \tfrac{1}{5}f(\OPT \mid \varnothing)$.
\end{observation}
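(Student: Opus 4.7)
The plan is to apply Proposition 3.3 directly to the single local search pass that produced $T_1$ from $T_0$ with parameter $c = 2$, taking the reference base $B$ to be (an extension of) $\OPT$.

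First, since $f$ is monotone and $\OPT \in \cI$, we may replace $\OPT$ by any independent superset that is a base of $M$, without decreasing $f(\OPT)$; the matroid exchange axiom guarantees such an extension exists. Call this base $B$; by optimality, $f(B) = f(\OPT)$. Applying Proposition 3.3 with $S_0 = T_0$, $S_n = T_1$, $c = 2$, and this choice of $B$ yields
\begin{equation*}
f(T_1 \mid \varnothing) + 4\bigl[f(T_1) - f(T_0)\bigr] \;\geq\; f(\OPT \mid T_0 \setminus \OPT) - f(T_0 \mid \varnothing).
\end{equation*}
Rewriting $f(T_1) - f(T_0) = f(T_1 \mid \varnothing) - f(T_0 \mid \varnothing)$ and collecting terms, this rearranges to
\begin{equation*}
5 f(T_1 \mid \varnothing) \;\geq\; f(\OPT \mid T_0 \setminus \OPT) + 3 f(T_0 \mid \varnothing).
\end{equation*}

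Next I would lower bound $f(\OPT \mid T_0 \setminus \OPT)$ using monotonicity alone. Since $f(\OPT \mid T_0 \setminus \OPT) = f(\OPT \cup T_0) - f(T_0 \setminus \OPT)$, monotonicity gives $f(\OPT \cup T_0) \geq f(\OPT)$ and $f(T_0 \setminus \OPT) \leq f(T_0)$, so
\begin{equation*}
f(\OPT \mid T_0 \setminus \OPT) \;\geq\; f(\OPT) - f(T_0).
\end{equation*}
Substituting into the previous display and expanding $f(T_0 \mid \varnothing) = f(T_0) - f(\varnothing)$ yields $5 f(T_1 \mid \varnothing) \geq f(\OPT) + 2 f(T_0) - 3 f(\varnothing)$. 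Using the monotonicity bound $f(T_0) \geq f(\varnothing)$ once more, the right-hand side is at least $f(\OPT) - f(\varnothing) = f(\OPT \mid \varnothing)$, which gives the claimed $f(T_1 \mid \varnothing) \geq \tfrac{1}{5} f(\OPT \mid \varnothing)$.

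There is no real obstacle here; the proof is essentially a one-line consequence of Proposition 3.3 once the constant $c = 2$ is plugged in. The only subtleties are (i) making sure $B$ in Proposition 3.3 can be taken to encode $\OPT$, which requires passing to a base extension and invoking monotonicity to preserve the objective value, and (ii) recognizing that the term $f(\OPT \mid T_0 \setminus \OPT)$ appearing on the right of Proposition 3.3 is compatible with the plain $f(\OPT \mid \varnothing)$ we want on the right of the observation, which is handled by the two elementary monotonicity bounds above. The factor $5$ in the statement comes from $(c-1) + \tfrac{3c-2}{c-1} = 5$ at $c = 2$, and this choice is indeed what the algorithm uses in Line~\ref{line:OPT_estimation}.
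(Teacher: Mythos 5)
Your proof is correct and follows essentially the same route as the paper's: apply \cref{prop:single_pass} with $c=2$ and $B$ encoding $\OPT$, rewrite the left-hand side as $5f(T_1\mid\varnothing)-4f(T_0\mid\varnothing)$, and discharge $f(\OPT\mid T_0\setminus\OPT)$ and $f(T_0\mid\varnothing)$ via the same two monotonicity bounds. The only difference is your (slightly more careful) extension of $\OPT$ to a base $B$ before invoking the proposition, which the paper handles implicitly by taking $\OPT$ to be a base without loss of generality.
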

\begin{proof}
If we set $B = \OPT$, then by applying \cref{prop:single_pass} to the execution of \cref{alg:local_search_pass} on Line~\ref{line:OPT_estimation} of \cref{alg:local_search_multi}, we get
\begin{align*}
	f(T_1 \mid \varnothing) + 4[f(T_1) - f(T_0)]
	\geq{} &
	f(\OPT \mid T_0) + \sum_{u \in \OPT \cap T_0} \mspace{-9mu} f(u \mid T_0 - u) - f(T_0 \mid \varnothing)\\
	\geq{} &
	f(\OPT \mid T_0) - f(T_0 \mid \varnothing) \enspace,
\end{align*}
where the second inequality follows from the monotonicity of $f$. Since the leftmost side the last inequality is equal to $5f(T_1 \mid \varnothing) - 4f(T_0 \mid \varnothing)$, this inequality implies
\begin{align*}
	5f(T_1 \mid \varnothing)
	\geq{} &
	f(\OPT \mid T_0) + 3f(T_0 \mid \varnothing)
	=
	f(\OPT \cup T_0) + 2f(T_0) - 3f(\varnothing)\\
	\geq{} &
	f(\OPT) - f(\varnothing)
	=
	f(\OPT \mid \varnothing)
	\enspace,
\end{align*}
where the second inequality follows again from the monotonicity of $f$. The observation now follows by dividing the last inequality by $5$.
\end{proof}

\begin{corollary}
\cref{alg:local_search_multi} never indicates failure.
\end{corollary}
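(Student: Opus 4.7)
The plan is to argue by contradiction. Suppose that \cref{alg:local_search_multi} reaches the last line and indicates failure. Then for every iteration index $i$ in the loop on \cref{line:passes_loop}, the early-return check must have failed, giving the strict inequality $f(T_i) - f(T_{i-1}) > \delta^2 \cdot f(T_1 \mid \varnothing)$ for every such $i$. Let $N = 1 + \lceil 4\delta^{-2}\rceil$ denote the final loop index. First I would telescope these strict inequalities along the chain $T_1, T_2, \dotsc, T_N$ to obtain the lower bound
\[
    f(T_N) - f(T_1) \;>\; N \cdot \delta^2 \cdot f(T_1 \mid \varnothing) \;\geq\; 4\, f(T_1 \mid \varnothing),
\]
where the last step uses $N \cdot \delta^2 \geq \lceil 4\delta^{-2}\rceil \cdot \delta^2 \geq 4$.

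Next I would prove a matching upper bound on the same quantity. Since every $T_i$ produced by \cref{alg:local_search_pass} is a base of $M$, monotonicity of $f$ yields $f(T_N) \leq f(\OPT)$, which is equivalent to $f(T_N \mid \varnothing) \leq f(\OPT \mid \varnothing)$. Combining this with \cref{obs:opt_estimation}, which provides $f(T_1 \mid \varnothing) \geq \tfrac{1}{5} f(\OPT \mid \varnothing)$, gives $f(T_N \mid \varnothing) \leq 5\, f(T_1 \mid \varnothing)$, and hence
\[
    f(T_N) - f(T_1) \;=\; f(T_N \mid \varnothing) - f(T_1 \mid \varnothing) \;\leq\; 4\, f(T_1 \mid \varnothing).
\]
The two bounds are in direct conflict, so the assumption that the algorithm indicates failure must be false.

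The only minor obstacle I foresee is the degenerate case $f(T_1 \mid \varnothing) = 0$, in which the lower bound above becomes trivial. Here I would note that the monotonicity argument inside the proof of \cref{prop:single_pass} already establishes that the output of \cref{alg:local_search_pass} is at least as valuable as its input, so $f(T_i) \geq f(T_{i-1})$ for every $i$; consequently the check $f(T_i) - f(T_{i-1}) \leq \delta^2 \cdot f(T_1 \mid \varnothing) = 0$ is satisfied already at $i = 1$ (or $i = 2$, depending on indexing convention), and the algorithm returns at the very first iteration rather than failing. This edge case aside, the contradiction from the two bounds above completes the proof.
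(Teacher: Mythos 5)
Your main argument is correct and is essentially the paper's proof in a different normalization: the paper lower-bounds each per-iteration gain by $\tfrac{\delta^2}{5} f(\OPT \mid \varnothing)$ via \cref{obs:opt_estimation} and accumulates until the value of the maintained base would exceed $f(\OPT)$, while you keep everything in units of $f(T_1 \mid \varnothing)$ and use the same observation to bound $f(T_N \mid \varnothing) \leq f(\OPT \mid \varnothing) \leq 5 f(T_1 \mid \varnothing)$; the two contradictions are the same. One slip in your degenerate case: from $f(T_i) \geq f(T_{i-1})$ you cannot conclude $f(T_i) - f(T_{i-1}) \leq 0$ --- that implication runs backwards. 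The correct route is that $f(T_1 \mid \varnothing) = 0$ together with \cref{obs:opt_estimation} forces $f(\OPT \mid \varnothing) = 0$, so by monotonicity every base of $M$ has value exactly $f(\varnothing)$, the difference is $0$, and the test passes in the first iteration; this is how the paper disposes of the edge case as well.
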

\begin{proof}
If $f(\OPT \mid \varnothing) = 0$, then the value of every base of $M$ according to $f$ is $f(\varnothing)$, which guarantees that \cref{alg:local_search_multi} returns $T_1$ during the first iteration of the loop starting on its \cref{line:passes_loop}. Therefore, we assume below that $f(\OPT \mid \varnothing) > 0$. Furthermore, assume towards a contradiction that \cref{alg:local_search_multi} indicates failure. By \cref{obs:opt_estimation}, this assumption implies that the value of the solution maintained by \cref{alg:local_search_multi} increases by at least $(\eps^2 / 10) \cdot f(T_1 \mid \varnothing) \geq \tfrac{\eps^2}{50} f(\OPT \mid \varnothing)$ after every iteration of the loop starting on Line~\ref{line:passes_loop}. Therefore, after all the $1 + \lceil 40\eps^{-2} \rceil$ iterations of this loop, the value of the solution of \cref{alg:local_search_multi} is at least
\[
	f(T_1) + (1 + \lceil 40\eps^{-2} \rceil) \cdot \tfrac{\eps^2}{50} f(\OPT \mid \varnothing)
	>
	f(\varnothing) + \tfrac{1}{5}f(\OPT \mid \varnothing) + \tfrac{4}{5} f(\OPT \mid \varnothing)
	=
	f(\OPT)
	\enspace,
\]
which is a contradiction since the solution of \cref{alg:local_search_multi} is always kept as a base of $M$.
\end{proof}

We now observe that \cref{alg:local_search_multi} has all the properties guaranteed by \cref{prop:local_max_adversarial}, which we repeat here for convenience. In particular, we note that \cref{alg:local_search_multi} can be implemented as a semi-streaming algorithm storing $O(k)$ elements because it needs to store at most two solutions at any given time.
\propLocalMaxAdversarial*

\subsection{Discussion of a Lower Bound by McGregor and Vu\texorpdfstring{~\cite{mcgregor2019better}}{}} \label{ssc:lower_bound_multipass}

McGregor and Vu~\cite{mcgregor2019better} showed that any data stream algorithm for the Maximum $k$-Coverage Problem (which is a special case of {\MSMM} in which $f$ is a coverage function and $M$ is a uniform matroid of rank $k$) that makes a constant number of passes must use $\Omega(m/k^2)$ memory to achieve $(1+\eps) \cdot (1-(1-1/k)^k)$-approximation with probability at least $0.99$, where $m$ is the number of sets in the input, and it is assumed that these sets are defined over a ground set of size $n = \Omega(\eps^{-2} k \log m)$. Understanding the implications of this lower bound for {\MSMM} requires us to handle two questions.
\begin{itemize}
	\item The first question is how the lower bound changes as a function of the number of passes. It turns out that when the number of passes is not dropped from the asymptotic expressions because it is considered to be a constant, the lower bound of McGregor and Vu~\cite{mcgregor2019better} on the space complexity becomes $\Omega(m/(pk^2))$, where $p$ is the number of passes done by the algorithm.
	\item The second question is about the modifications that have to be done to the lower bound when it is transferred from the Maximum $k$-Coverage Problem to {\MSMM}. Such modifications might be necessary because of input representation issues. However, as it turns out, the proof of the lower bound given by~\cite{mcgregor2019better} can be applied to {\MSMM} directly, yielding the same lower bound (except for the need to replace $m$ with the corresponding value in {\MSMM}, namely, $|\cN|$). Furthermore, McGregor and Vu~\cite{mcgregor2019better} had to use a very large ground set so that random sets will behave as one expects with high probability. When the objective function is a general submodular function, rather than a coverage function, it can be chosen to display the above-mentioned behavior of random sets, and therefore, $\eps$ can be set to $0$.
\end{itemize}

We summarize the above discussion in the following corollary.
\begin{corollary}[Corollary of McGregor and Vu~\cite{mcgregor2019better}]
For any $k \geq 1$, any $p$-pass data stream algorithm for {\MSMM} that achieves an approximation guarantee of $1-(1-1/k)^k \leq 1 - 1/e + 1/k$ with probability at least $0.99$ must use $\Omega(|\cN|/(pk^2))$ memory, and this is the case even when the matroid $M$ is restricted to be a uniform matroid of rank $k$.
\end{corollary}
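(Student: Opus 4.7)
The plan is to derive this corollary as a direct consequence of the construction in~\cite{mcgregor2019better}, addressing the two technical points raised in the discussion immediately preceding the corollary: making the pass-dependence in their bound explicit, and upgrading from the Maximum $k$-Coverage setting to general submodular objectives so that the slack parameter $\eps$ can be eliminated.

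For the pass-dependence, I would re-open the reduction in~\cite{mcgregor2019better}. Their hard instance is built so that solving Maximum $k$-Coverage with approximation ratio better than $(1+\eps)(1-(1-1/k)^k)$ on it implies solving a communication problem whose total-communication lower bound is $\Omega(m/k^2)$. A $p$-pass streaming algorithm with memory $s$ simulates such a protocol by transmitting the algorithm's internal state at every pass boundary, for total communication $O(p \cdot s)$; hence $p \cdot s = \Omega(m/k^2)$, i.e., $s = \Omega(m/(pk^2))$. This is exactly the form of bound needed, with $m$ the number of sets in the coverage instance. The only work here is to re-trace the argument keeping $p$ as a free parameter rather than folding it into the constant hidden inside the $\Omega(\cdot)$.

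For the transfer to {\MSMM}, the key point is that in the value-oracle model the submodular objective may be specified arbitrarily (subject to monotonicity and submodularity) without any representation blowup. In~\cite{mcgregor2019better}, the objective is the coverage function of a collection of random subsets of a universe of size $n = \Omega(\eps^{-2} k \log m)$; that large universe is needed only so that the random sets concentrate around their expected intersection values up to a $(1\pm\eps)$ factor, and this is precisely what forces the extra $(1+\eps)$ slack in the inapproximability threshold. I would replace that probabilistic construction by a deterministic one: define $f$ directly so that $f(S)$ equals the \emph{expected} coverage value that would have been obtained in their construction. The resulting function is a convex combination of monotone submodular coverage functions, hence itself monotone and submodular, and is therefore a valid {\MSMM} objective. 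Since the sets of the original instance become elements of the ground set, we have $m = |\cN|$; since the constraint is a cardinality constraint of rank $k$, it is realized by the uniform matroid of rank $k$, which is a matroid, so the bound applies in the matroid-restricted form stated in the corollary. With $\eps$ no longer needed, the inapproximability threshold collapses to the exact value $1-(1-1/k)^k$.

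Combining the two steps then yields the claimed bound of $\Omega(|\cN|/(pk^2))$ for any $p$-pass semi-streaming algorithm that achieves ratio $1-(1-1/k)^k$ on {\MSMM} with a uniform matroid of rank $k$. The main obstacle is verifying the second step: one must check that the adversary's hard distribution still induces the same communication complexity when the random coverage function is replaced by its expectation (so that the value and independence oracles the algorithm queries are reproducible from the players' inputs), and that the planted solution in the ``YES'' instance still attains the same optimum value so that the approximation gap between ``YES'' and ``NO'' instances is preserved. Once these two consistency checks go through, the reduction proceeds unchanged and the corollary follows by direct substitution.
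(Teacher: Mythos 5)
Your proposal matches the paper's own argument: the paper likewise obtains the pass-dependence by keeping $p$ explicit in the communication-to-streaming simulation (total communication $O(p\cdot s)$ against the $\Omega(m/k^2)$ bound), and eliminates the $(1+\eps)$ slack by observing that a general submodular objective can be chosen to realize the idealized (expected) behavior of the random coverage sets, with the sets of the coverage instance becoming the ground set $\cN$ and the cardinality constraint becoming a rank-$k$ uniform matroid. This is the same route, so no further comparison is needed.
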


\section{Approximate Local Maximum for Random Streams} \label{sec:random_local_search}

In this section we study {\MSMM} in random order streams by building on ideas from the analysis of Liu et al.~\cite{LRVZ20} for optimizing $f$ under a cardinality constraint. We begin with \cref{ssc:local_search_pass_randomized}, which analyzes and simplifies the single-pass local search algorithm of Shadravan~\cite{S20}. By applying this algorithm multiple times (in multiple passes), we are able to prove in \cref{ssc:local_search_multi_randomized} the following proposition (we note that \cref{ssc:local_search_multi_randomized} highly depends on \cref{sec:framework}, and should not be read before that section). \cref{prop:local_max_randomized} implies \cref{thm:multipass_random} by \cref{lem:local_maximum_good} and \cref{prop:equation_enough}.
\begin{restatable}{proposition}{propLocalMaxRandomized} \label{prop:local_max_randomized}
For every constant $\eps > 0$, there is a multi-pass semi-streaming algorithm that given an instance of {\MSMM}  with a matroid of rank $k$ stores $O(k / \eps)$ elements and makes $O(\eps^{-1} \log \eps^{-1})$ many passes. Assuming the order of the elements in the input stream is chosen uniformly at random in each pass, this algorithm outputs a solution $D$ such that $\bE[f(D \mid \varnothing)] \geq (1 - \eps) \cdot f(D' \mid \varnothing)$, where $D'$ is the $\eps$-approximate local maximum whose value with respect to $f$ is the smallest.
\end{restatable}

In \cref{ssc:matchoid} we observe that our single-pass algorithm from \cref{ssc:local_search_pass_randomized} can naturally be extented to $p$-matchoids. Then, we create a multi-pass algorithm based on this extended single-pass algorithm, which proves \cref{thm:local_optimum}. 

\subsection{Local Search for Matroid Constraints} \label{ssc:local_search_pass_randomized}

Intuitively, a local search algorithm should make a swap in its solution whenever this is beneficial. In the adversarial setting, one has to make a swap only when it is beneficial enough to avoid making too many negligible swaps (this is why $c > 1$ in \cref{alg:local_search_pass}). However, in the random order setting there is a better solution for this problem. Specifically, we (randomly) partition the input stream into \emph{windows} ($\alpha k$ contiguous chunks of the stream with expected size $n/(\alpha k)$ each for some parameter $\alpha > 1$), and then make the best swap within each window. Formally, our random partition is generated according to \cref{alg:partition}.

\begin{algorithm}[ht]
\caption{Partitioning of the input stream ($\alpha$) \label{alg:partition}}
\begin{algorithmic}[1]
\State Draw $|\cN|$ integers uniformly and independently from $1,2,\ldots, \alpha k$.
\For{$i = 1$ to $\alpha k$}
    \State Let $n_i \gets$ \# of integers equal to $i$.
    \State Let $t_i \gets \sum_{j = 1}^{i - 1} n_i$.
    \State Let $w_i \gets$  elements $t_i + 1$ to $t_i + n_i$ in $\cN$.
\EndFor
\State \Return $\{w_1, w_2, \ldots, w_{\alpha k}\}$.
\end{algorithmic}
\end{algorithm}

Our full single pass algorithm, which uses the partition defined by \cref{alg:partition}, is given as \cref{alg:matroid-basic}. The input for the algorithm includes the parameter $\alpha$ and some base $L_0$ of the matroid $\cM$. Additionally, during the execution of the algorithm, the set $L_i$ represents the current solution, and $H$ is the set of all elements that were added to this solution at some point. When processing window $w_i$, \cref{alg:matroid-basic} constructs a set $C_i$ of elements that can potentially be swapped into the solution. This set contains all the elements of the window plus some historical elements (the set $R_i$). The idea of using a set $H$ to store previously valuable elements is inspired from \cite{SMC19, LRVZ20}. Reintroducing previously seen elements allows us to give any element not in the solution a chance of being introduced into the solution in the future, which helps us avoid issues that result from the dependence that exists between the current solution and the set of elements in the current window.

\begin{algorithm}[ht]
\caption{{\sc MatroidStream}$(\alpha, L_0)$ \label{alg:matroid-basic}}
\begin{algorithmic}[1]
\State{Partition $\cN$ into windows $w_1, w_2, \dotsc, w_{\alpha k}$.}
\State{Let $H \gets \varnothing$.}
\For{$i=1$ to $\alpha k$}
    \State{Let $R_i$ be a random subset of $H$ including every $u \in H$ with probability $\frac{1}{\alpha k}$, inde-\hspace*{1cm}pendently.\label{line:R_selection}}
    \State{Let $C_i \gets w_{i} \cup R_i$}
    
    \State{Let $u^{\star}$ and $u_r^{\star}$ be elements maximizing $f(L_i - u_r^\star + u^\star)$ subject to the constraints: \hspace*{1cm}$u^{\star} \in C_i$, $u_r^{\star} \in L_i$ and $L_i - u_r^\star + u^\star \in \cI$ \label{algline:argmax}}.
    
    \If{$f(L_i) < f(L_i - u_r^{\star} + u^{\star})$}
        \State{Update $H \gets H + u^{\star}$.}
        \State{Let $L_{i+1} \gets L_i - u_r^{\star} + u^{\star}$.}
    \EndIf
\EndFor
\State \Return $L_{\alpha k}$.
\end{algorithmic}
\end{algorithm}

 Note that the number of elements stored by \cref{alg:matroid-basic} is $O(\alpha k)$, as this number is dominated by the size of the set $H$. For the same reason \cref{alg:matroid-basic} is a semi-streaming algorithm whenever $\alpha$ is constant.

\begin{definition}
\label{def:history}
Let $H_i$ denote the state of the set $H$ maintained by \cref{alg:matroid-basic} immediately after processing window $i$.
We define $\cH_i$ to be the set of all pairs $(u,j)$ such that element $u \in H_i$ was added to the solution while window $j$ was processed (i.e., $u \in H_i \cap w_j$). For convenience, sometimes we treat $\cH_i$ as a set of elements, and say that $u \in \cH_i$ if $u \in H_i$. 
\end{definition}
One can observe that $\cH_i$ encodes all the changes that the algorithm made to its state while processing the first $i$ windows because the element removed from the solution when $u$ is added is deterministic. Additionally, we note that different random permutations of the input and random coins in Line~\ref{line:R_selection} of \cref{alg:matroid-basic} may produce the same history, and we average over all of them in the analysis. 

The next lemma is from \cite{LRVZ20}. It captures the intuition that any element not selected by the algorithm still appears uniformly distributed in future windows, and bounds the probability with which this happens. The proof of this lemma can be found in \cref{sec:random_appendix}.
\begin{restatable}{lemma}{lemAtLeastProb}
\label{lem:at-least-prob}
Fix a history $\cH_{i-1}$ for some $ i\in [\alpha k]$.
For any element $u \in \cN \setminus \cH_{i-1}$, and any $i \leq j \leq \alpha k$, we have 
$Pr[u\in w_j \mid \cH_{i-1}] \geq 1/(\alpha k).$
%and
%$$ \prob{e\in w_j \mid \cH_{i-1}} \leq 1/(\alpha k - i + 1). $$
\end{restatable}

Let $B$ an arbitrary base of $\cM$ (one can think of $B$ as an optimal solution because the monotonicity of $f$ guarantees that there is an optimal solution that is a base, but we sometimes need to consider other bases as $B$). We now define ``active'' windows, which are windows for which we can show a definite gain in our solution. Specifically, we show below that in any active window the value of the current solution $L$ increases roughly by $\frac{1}{k}(f(B) - 2f(L))$ in expectation, which yields an approximation ratio of $\frac{1}{2}(1-1/e^2)$ after $\alpha k$ windows have been processed in one pass because we expect roughly one in every $\alpha$ windows to be active. 
\begin{definition}
\label{def:active-window}
For window $w_i$, let $p_u^i$ be the probability that $u \in w_i$ conditioned $\cH_{i-1}$. Define the active set $A_i$ of $w_i$ to be the union of $R_i$ and a set obtained by sampling each element $u \in w_i$ with probability $1/(\alpha k p_u^i)$. We call $w_i$ an \textbf{active window} if $|B \cap A_i| \geq 1$.
\end{definition}
Note that the construction of active sets in \Cref{def:active-window} is valid as \Cref{lem:at-least-prob} guarantees that $1/(\alpha k p_e^i)$ is a valid probability (i.e., it is not more than $1$). More importantly, the active set $A_i$ includes every element of $\cN$ with probability exactly $1/(\alpha k)$, even conditioned on the history $\cH_{i-1}$; which implies that, since each element appears in $A_i$ independently, a window is active with probability $(1-1/(\alpha k))^k \geq 1 - e^{-1/\alpha} \approx 1/\alpha$ conditioned on any such history.

Before proving the guarantees of \Cref{alg:matroid-basic}, we also need to introduce the following known lemmata.
\begin{lemma}[Lemma~2.2 of~\cite{feige2011maximizing}]
\label{lem:subsample-matroid}
Let $g\colon 2^\cN \rightarrow \bR$ be a submodular function. Further, let $R$ be a random subset of $T\subseteq \cN$ in which every element occurs with probability $p$ (not necessarily independently). Then, $\bE[g(R)] \geq p \cdot g(T) + (1 - p) \cdot g(\varnothing)$.
\end{lemma}

\begin{lemma}[Follows, for example, from Corollary~39.12a of~\cite{schrijver2003combinatorial}]
\label{lem:matroid-bijection}
If $S$ and $T$ are two bases of a matroid $\cM = (\cN,\cI)$, then there exists a bijection $h\colon T \rightarrow S$ such that for every $u \in T$,  $T - h(u) + u \in \cI$. Furthermore, for every element $u \in S \cap T$, $h(u) = u$.
\end{lemma}

Let $\cA_i$ denote the event that window $i$ is active. The following lemma lower bounds the increase in the value of the solution of \cref{alg:matroid-basic} in an active window.
\begin{lemma}
\label{lem:matroid-gain}
For every integer $0 \leq i < \alpha k$,
\begin{align*}
    \bE[f(L_{i+1}) - f(L_i) \mid \cH_{i}, \cA_{i+1}]
    \geq{} &
    \tfrac{1}{k} \bE\left[f(B \mid L_i) + \sum_{u \in B \cap L_i} \mspace{-9mu} f(u \mid L_i - u) - f(L_i \mid \varnothing) ~\middle|~ \cH_i \right]\\
    \geq{} &
    \tfrac{1}{k} \bE[f(B) - 2f(L_i) \mid \cH_i]
    \enspace.
\end{align*}
Moreover, the above inequality holds even when $B$ is a random base as long as it is deterministic when conditioned on any given $\cH_i$.
\end{lemma}

\begin{proof}
Let us apply \cref{lem:matroid-bijection} with $S = L_i$ and $T = B$ to get a bijection $h\colon B \to L_i$ with the properties specified in the lemma. Let $b$ denote a uniformly random element of $A_{i + 1} \cap B$. Since every element of $B$ belongs to $A_{i + 1}$ with probability $1 / (\alpha k)$, independently, even conditioned on $\cH_i$, and the event $\cA_{i + 1}$ simply excludes the possibility that $A_{i + 1} \cap B$ is empty, we get that $b$ is a uniformly random element of $B$ when conditioned on $\cH_{i + 1}$ and $\cA_{i + 1}$. Furthermore, since $h$ is a bijection, $h(b)$ is a uniformly random element of $L_i$ under the same conditioning, which implies that every element of $L_i$ appears in $L_i - h(b)$ with probability $1 - 1/k$.

Given the above observations, we get
\begin{align*}
\bE[ f(L_{i+1}\mspace{-2mu}&\mspace{2mu}) \mid \cH_i, \cA_{i+1}] \geq \bE [f(L_i - h(b) + b) \mid \cH_i, \cA_{i+1}] \\
     ={} & \bE [f(L_i - h(b)) \mid \cH_i, \cA_{i+1}] + \bE [f(b \mid L_i - h(b)) \mid \cH_i, \cA_{i+1}] \\
     ={} &
     \bE [f(L_i - h(b)) \mid \cH_i, \cA_{i+1}] + \bE [\tfrac{1}{k}{\textstyle \sum_{u \in B}} f(u \mid L_i - h(u)) \mid \cH_i, \cA_{i+1}]\\
     \geq{} & (1 - 1/k) \cdot \bE[f(L_i) \mid \cH_i, \cA_{i+1}] + \tfrac{1}{k} f(\varnothing) + \bE [\tfrac{1}{k}{\textstyle\sum_{u \in B}} f(u \mid L_i - h(u)) \mid \cH_i, \cA_{i+1}] \\
     ={} & (1 - 1/k) \cdot \bE[f(L_i) \mid \cH_i]+ \tfrac{1}{k} f(\varnothing) + \tfrac{1}{k} \cdot \bE [{\textstyle\sum_{u \in B}} f(u \mid L_i - h(u)) \mid \cH_i]
\end{align*}
where the second inequality follows from \cref{lem:subsample-matroid}, and the last equality holds since $L_i$ and $B$ are deterministic given $\cH_i$. Using the submodularity of $f$, and recalling that $h(u) = u$ for $u \in L_i \cap B$, we can now lower bound the argument of the second expectation on the rightmost side of the last inequality as follows.
\begin{align*}
    \sum_{u \in B} f(u \mid L_i - h(u))
    \geq &
    \sum_{u \in B \setminus L_i} \mspace{-9mu} f(u \mid L_i) + \sum_{u \in B \cap L_i} \mspace{-9mu} f(u \mid L_i - u)\\
    \geq{} &
    f(B \mid L_i) + \sum_{u \in B \cap L_i} \mspace{-9mu} f(u \mid L_i - u)
    \enspace.
\end{align*}
The first inequality of the lemma follows by plugging the last inequality into the previous one, and rearranging. Furthermore, the second inequality of the lemma holds since $f(L_i \mid \varnothing) \leq f(L_i)$ and the monotonicity of $f$ implies
\[
    f(B \mid L_i) + \sum_{u \in B \cap L_i} \mspace{-9mu} f(u \mid L_i - u)
    \geq
    f(B \mid L_i)
    \geq
    f(B) - f(L_i)
    \enspace.
    \qedhere
\]
\end{proof}

Technically, Lemma~\ref{lem:matroid-gain} suffices to prove our results. However, it is useful to also prove the following theorem, which reproves a result due to~\cite{S20b}. To understand this theorem we need to make two observations.
\begin{itemize}
    \item We would like to chose $\alpha$ on the order of $1/\eps$ to guarantee that the error term diminishes with $\eps$. However, we also need to guarantee that $\alpha k$ is integral (which is necessary for \cref{alg:matroid-basic}). The value we choose for $\alpha$ in \cref{thm:matroid-approx} is designed to satisfy these two requirements.
    \item As given, \cref{alg:matroid-basic} requires a base $L_0$ of $\cM$ as input. Since we do not care about the value of this base in \cref{thm:matroid-approx} (we care about it in the next section), we can mimic having such a base using the following idea. First, we pretend to add $k$ dummy elements to the ground set such that (i) the dummy elements do not affect the value of any set according to $f$; and (ii) a set that includes dummy elements is independent in the matroid constraint if it is independent when the dummy elements are removed, and its original size before the removal is at most $k$ (see~\cite{buchbinder2014submodular} for a proof that adding such dummy elements does not affect the properties we assume for the objective function and constraint). Once the dummy element are added, we can choose $L_0$ to simply be the base consisting of the $k$ dummy elements.
\end{itemize}

\begin{theorem}
\label{thm:matroid-approx}
For every $\eps \in [0, 1]$, setting $\alpha = \lceil k / \eps \rceil / k$ and initializing $L_0$ to be the set of dummy elements as described above makes \cref{alg:matroid-basic} a semi-streaming algorithm guaranteeing $\frac{1}{2}(1 - 1/e^2) - O(\eps)$ approximation and storing $O(k/\eps)$ elements.
\end{theorem}

\begin{proof}
Recall that $1 - e^{-1/\alpha} \leq \prob{\cA_{i+1}} = 1-(1-1/(\alpha k))^k \leq 1/\alpha$. Thus, for every integer $0 \leq i < \alpha k$,
\begin{align*}
\bE[ f(L_{i+1}) &{} \mid \cH_i] = \Pr[\cA_{i+1}] \cdot \bE[ f(L_{i+1}) \mid \cH_i, \cA_{i+1}] + \Pr[\neg{\cA_{i+1}}] \cdot \bE[ f(L_{i+1}) \mid \cH_i, \neg{\cA_{i+1}}] \\
&\geq \Pr[\cA_{i+1}] \cdot \bE[ f(L_{i+1}) \mid \cH_i, \cA_{i+1}] + \Pr[\neg{\cA_{i+1}}] \cdot \bE[f(L_i) \mid \cH_i] \\
&\geq \Pr[\cA_{i+1}] \cdot \left((1-\tfrac{2}{k}) \cdot \bE[f(L_i) \mid \cH_i] + \tfrac{1}{k} f(\OPT)\right) + \Pr[\neg{\cA_{i+1}}] \cdot \bE[ f(L_i) \mid \cH_i] \\
&\geq \left(1-\frac{2}{\alpha k}\right) \cdot \bE[f(L_i) \mid \cH_i] + \tfrac{1}{k}(1-e^{-1/\alpha}) \cdot f(\OPT)
\end{align*}
where the first inequality follows from the facts that the algorithm only increases the value of its solution and $\cH_i$ completely determines $L_i$, and the second inequality follows from \Cref{lem:matroid-gain} by choosing $B = \OPT$. The law of total expectation allows us to remove the conditioning on $\cH_i$ from both sides of the last inequality, which yields (by repeated applications of the last inequality and using the fact that $f(L_0) \geq 0$) the inequality
\begin{align*}
    \bE[f(L_i)]
    \geq{} &
    \sum_{j = 1}^i \left(1 - \frac{2}{\alpha k}\right)^{i - j} \cdot \tfrac{1}{k}(1-e^{-1/\alpha}) \cdot f(\OPT)\\
    \geq{} &
    \frac{1 - (1 - 2/(\alpha k))^i}{1 - (1 - 2/(\alpha k))} \cdot \tfrac{1}{k}(1-e^{-1/\alpha}) \cdot f(\OPT)\\
    ={} &
    \tfrac{\alpha}{2}(1-e^{-1/\alpha}) \cdot \left(1 - (1 - 2/(\alpha k))^i\right) \cdot f(\OPT)
    \enspace.
\end{align*}
To simplify this inequality, we observe that $1 - 2 / (\alpha k) \leq e^{-2 / (\alpha k)}$ and $\alpha(1 - e^{-1/\alpha}) \geq \alpha(1 / \alpha - 1 / \alpha^2) = 1 - 1 / \alpha$, which yields
\[
    \bE[f(L_i)]
    \geq
    \tfrac{1}{2}(1 - e^{-2i / (\alpha k )} - 1/ \alpha) \cdot f(\OPT)
    =
    \tfrac{1}{2}(1 - e^{-2i / (\alpha k )} - O(\eps)) \cdot f(\OPT)
    \enspace.
\]
The theorem now follows by plugging $i = \alpha k$ into this inequality.
\end{proof}

\subsection{Getting Almost as Good Solution as an Approximate Local Maximum \label{ssc:local_search_multi_randomized}}

We consider in this section a multi-pass algorithm (given as \cref{alg:local_search_multi_random}) obtained by running \cref{alg:matroid-basic} $\Theta(\eps^{-1} \log \eps^{-1})$ times, feeding the output of each execution as the input for the next execution. As promised above, we show that the algorithm obtained in this way outputs a solution whose expected value is almost as good as some $\eps$-approximate local maximum. \Cref{alg:local_search_multi_random} gets $\eps \in (0, 1/2]$ as a parameter.

\begin{algorithm}
\caption{\textsc{Multiple Local Search Passes for Random Streams} $(\eps)$} \label{alg:local_search_multi_random}
\begin{algorithmic}[1]
\State Let $r = \lceil 2 \eps^{-1} \ln \eps^{-1} \rceil$.
\State Find a base $T_0$ of $M$ using a single pass.
\For{$j = 1$ \textbf{to} $r$}
	\State Let $T_j$ be the output of \cref{alg:local_search_pass} when given $L_0 = T_{j - 1}$ and $\alpha = \lceil k / \eps \rceil / k$.
\EndFor
\State \Return $T_{r}$.
\end{algorithmic}
\end{algorithm}

We begin the analysis of \cref{alg:local_search_multi_random} by showing that the expected value of the solution of \cref{alg:matroid-basic} increases significantly in every window as long as this solution is not an approximate local maximum.
\begin{observation} \label{obs:window_increase_non_local}
Consider some window $w_i$ in an execution of \cref{alg:matroid-basic} done within \cref{alg:local_search_multi_random}, and let $\cE$ be the event that the solution $L_i$ of the algorithm at the beginning this window is not an $\eps$-approximate local maximum, then
\[
    \bE[f(L_{i + 1}) - f(L_i) \mid \cE] \geq \frac{\eps}{2\alpha k} \cdot f(\OPT \mid \varnothing)
    \enspace.
\]
\end{observation}
\begin{proof}
Fix an history $\cH_i$ that implies $\cE$ (since $\cH_i$ completely determines $L_i$, it also determines $\cE$). By the law of total expectation, the lemma will follow if we can prove
\[
    \bE[f(L_{i + 1}) - f(L_i) \mid \cH_i] \geq \eps \cdot f(\OPT \mid \varnothing)
    \enspace.
\]
Therefore, in the rest of this proof we concentrate on proving this inequality.

Since $L_i$ is not an $\eps$-local maximum under $\cH_i$, we can choose $B$ to be a base such that
\[
    f(L_i \mid \varnothing) \leq f(B \mid L_i) + \sum_{B \cap L_i} f(u \mid L_i - u) - \eps \cdot f(OPT \mid \varnothing)
    \enspace.
\]
Then, \cref{lem:matroid-gain} implies
\begin{align*}
    \bE[f(L_{i+1}) - f(L_i) \mid \cH_{i}, \cA_{i+1}]
    \geq{} &
    \tfrac{1}{k} \bE\left[f(B \mid L_i) + \sum_{u \in B \cap L_i} \mspace{-9mu} f(u \mid L_i - u) - f(L_i \mid \varnothing) ~\middle|~ \cH_i \right]\\
    \geq{} &
    \tfrac{1}{k}\bE[\eps \cdot f(\OPT \mid \varnothing) \mid \cH_i]
    =
    \frac{\eps}{k} \cdot f(\OPT \mid \varnothing)
    \enspace.
\end{align*}

We are now ready to prove the observation. By the law of total expectation and the fact that the value of the solution of \cref{alg:matroid-basic} never decreases,
\begin{align*}
    \bE[f(L_{i+1}) - f(L_i) \mid \cH_{i}]
    \geq{} &
    \Pr[A_i \mid \cH_i] \cdot \bE[f(L_{i+1}) - f(L_i) \mid \cH_{i}, \cA_i]\\
    \geq{} &
    (1 - e^{-1/\alpha}) \cdot (\eps / k) \cdot f(\OPT \mid \varnothing)
    \geq
    \frac{\eps(1 - 1/\alpha)}{\alpha k} \cdot f(\OPT \mid \varnothing)\\
    \geq{} &
    \frac{\eps(1 - \eps)}{\alpha k} \cdot f(\OPT \mid \varnothing)
    \geq
    \frac{\eps}{2\alpha k} \cdot f(\OPT \mid \varnothing)
    \enspace,
\end{align*}
where the last inequality holds since $\eps \leq 1/2$.
\end{proof}

Let $D'$ be an $\eps$-approximate local maximum whose value according to $f$ is minimal among all $\eps$-approximation local maxima.
\begin{lemma}
The output $T_r$ of \cref{alg:local_search_multi_random} obeys $\bE[f(T_r \mid \varnothing)] \geq (1 - \eps) \cdot f(D' \mid \varnothing)$.
\end{lemma}
\begin{proof}
Consider the setting described in \cref{obs:window_increase_non_local}. By a Markov like argument, the probability of the event $\cE$ is at least $1 - \bE[f(L_i \mid \varnothing)]/f(D' \mid \varnothing)$ because the event $\cE$ happens whenever $f(L_i \mid \varnothing) < f(D' \mid \varnothing)$. Therefore,
\begin{align*}
    \bE[f(L_{i+1}) - f(L_i)]
    \geq{} &
    \Pr[\cE] \cdot \bE[f(L_{i+1}) - f(L_i) \mid \cE]\\
    \geq{} &
    \max\left\{0, \left(1 - \frac{\bE[f(L_i \mid \varnothing)]}{f(D' \mid \varnothing)}\right)\right\} \cdot \frac{\eps}{2\alpha k} \cdot f(OPT \mid \varnothing)\\
    \geq{} &
    \max\left\{0, \left(1 - \frac{\bE[f(L_i \mid \varnothing)]}{f(D' \mid \varnothing)}\right)\right\} \cdot \frac{\eps}{2\alpha k} \cdot f(D' \mid \varnothing)\\
    \geq{} &
    \frac{\eps}{2\alpha k} \cdot \{f(D' \mid \varnothing) - \bE[f(L_i \mid \varnothing)]\}
    \enspace,
\end{align*}
where the penultimate inequality holds since the inequality $f(\OPT) > f(D')$ follows from the fact that $\OPT$ is an optimal solution with respect to $f$. Rearranging this inequality yields,
\[
    f(D' \mid \varnothing) - \bE[f(L_{i + 1} \mid \varnothing)]
    \leq
    \left(1 - \frac{\eps}{2\alpha k}\right) \cdot \{f(D' \mid \varnothing) - \bE[f(L_{i} \mid \varnothing)]\}
    \enspace.
\]

The above inequality applies to every window in every one of the $r$ executions of \cref{alg:matroid-basic} that are used by \cref{alg:local_search_multi_random}. Since there are $r \alpha k$ such windows in all these executions of \cref{alg:matroid-basic}, combining the inequalities corresponding to all them yields
\begin{align*}
    f(D' \mid \varnothing) - \bE[f(T_{r} \mid \varnothing)]
    \leq{} &
    \left(1 - \frac{\eps}{2\alpha k}\right)^{r k \alpha} \cdot \{f(D' \mid \varnothing) - \bE[f(T_0) \mid \varnothing]\}\\
    \leq{} &
    e^{-\eps r / 2} \cdot \{f(D' \mid \varnothing) - \bE[f(T_0) \mid \varnothing]\}
    \leq
    e^{-\eps r / 2} \cdot f(D' \mid \varnothing)\\
    \leq{} &
    e^{-\ln \eps^{-1}} \cdot f(D' \mid \varnothing)
    =
    \eps \cdot f(D' \mid \varnothing)
    \enspace,
\end{align*}
where the penultimate inequality follows from the monotonicity of $f$. The lemma now follows by rearranging the last inequality.
\end{proof}

The last lemma completes the proof of \cref{prop:local_max_randomized} since \cref{alg:local_search_multi_random} uses $O(\eps^{-1} \log \eps^{-1})$ passes (one pass for each execution of \cref{alg:matroid-basic}) and stores only a single solution in addition to the $O(\alpha k) = O(k/\eps)$ elements stored by each execution of \cref{alg:matroid-basic}.\footnote{A technical issue is that we assume in the analysis of \cref{alg:local_search_multi_random} that $\eps \leq 1/2$. However, this assumption can be dropped by simply replacing $\eps$ with $1/2$ at the beginning of the algorithm if $\eps$ happens to be larger.}

\subsection{Extending to \texorpdfstring{$p$}{p}-Matchoid Constraints} \label{ssc:matchoid}

In this section we prove \cref{thm:local_optimum}, which we repeat here for convenience.
\thmLocalOptimum*
%Since this theorem refers to $p$-matchoids, we begin by formally defining this notion. Let $M_1 = (\cN_1, \cI_1), M_2 = (\cN_2, \cI_2), \ldots, M_q = (\cN_q, \cI_q)$ be $q$ matroids. Let $\cN = \cN_1 \cup \cdots \cup \cN_q$ and $\cI = \{S \subseteq \cN \mid S \cap \cN_\ell \in \cI_\ell\textrm{ for every integer }1 \leq \ell \leq q\}$. The set system $M = (\cN, \cI)$ is a \emph{$p$-matchoid} if each element $u \in \cN$ is a member of $\cN_\ell$ for at most $p$ indices $\ell \in [q]$. 
%This definition is interesting since $p$-matchoids generalize many types of basic combinatorial constraints, such as cardinality, matroid intersection, and matching constraints~\cite{chekuri2015streaming}.

\Cref{alg:matchoid-basic} is a generalization of \cref{alg:matroid-basic} for a matchoid constraint. The one key difference between the algorithms is that adding an element $u$ to a solution $L_i$ may cause the removal of up to $p$ elements because $u$ can conflict with at most one element in each one of the $p$ matroids it is a member of. Additionally, while \cref{alg:matchoid-basic} requires the input set $L_0$ to be independent in the matchoid constraint, it does not require it to be a base (unlike \cref{alg:matroid-basic}, which does require that).

\begin{algorithm}[ht]
\caption{$p$-{\sc MatchoidStream}$(\alpha, L_0)$ \label{alg:matchoid-basic}}
\begin{algorithmic}[1]
\State{Partition $\cN$ into windows $w_1, w_2, \dotsc, w_{\alpha k}$.}
\State{Let $H \gets \varnothing$.}
\For{$i=1$ to $\alpha k$}
    \State{Let $R_i$ be a random subset of $H$ including every $u \in H$ with probability $\frac{1}{\alpha k}$, inde-\hspace*{1cm}pendently.}
    \State{Let $C_i \gets w_{i} \cup R_i$}
    
    \State{Let $u^{\star}$ and $S_r^{\star}$ be element and set, respectively, maximizing $f(L_i \setminus S^\star_r + u_r)$ subject \hspace*{1cm}to the constraints: $u^{\star} \in C_i$, $S_r^{\star} \subseteq L_i$ and $L_i \setminus S_r^\star + u^\star \in \cI$}.
    
    \If{$f(L_i) < f(L_i \setminus S_r^{\star} + u^{\star})$}
        \State{Update $H \gets H + u^{\star}$.}
        \State{Let $L_{i+1} \gets L_i \setminus S_r^{\star} + u^{\star}$.}
    \EndIf
\EndFor
\State \Return $L_{\alpha k}$.
\end{algorithmic}
\end{algorithm}

The analysis of \cref{alg:matchoid-basic} is identical to the analysis of \cref{alg:matroid-basic} up to (but excluding) \cref{lem:matroid-gain}. Therefore, we begin by proving the analog of the last lemma given below as \cref{lem:matchoid-gain}.

\begin{lemma}[Lemma~2.2 of~\cite{BFNS14}]
\label{lem:super-subsample-matroid}
Let $g\colon 2^\cN \rightarrow \nnR$ be a non-negative submodular function. Further, let $R$ be a random subset of $T\subseteq \cN$ in which every element occurs with probability at least $p$ (not necessarily independently). Then, $\bE[g(R)] \geq p \cdot g(T)$.
\end{lemma}

\begin{lemma}
\label{lem:matchoid-gain}
For every integer $0 \leq i < \alpha k$,
\[
    \bE[f(L_{i+1}) - f(L_i) \mid \cH_{i}, \cA_{i+1}]
    \geq
    \tfrac{1}{k} \bE[f(B) - (p + 1) \cdot f(L_i) \mid \cH_i]
    \enspace.
\]
\end{lemma}

\begin{proof}
For every integer $1 \leq \ell \leq q$, we would like to apply \cref{lem:matroid-bijection} with $S = B \cap \cN_\ell$ and $T = L_i \cap \cN_\ell$ to get a bijection $h_\ell \colon B \to L_i$ with the properties specified in the lemma with respect to the matroid $M_i$. This cannot be immediately done because $L_i \cap \cN_\ell$ and $B \cap \cN_\ell$. However, if we extend $L_i \cap \cN_\ell$ and $B \cap \cN_\ell$ to bases of $M_i$ in an arbitrary way, and then apply \cref{lem:matroid-bijection}, then we can get an injective function $h_\ell \colon (B \cap \cN_\ell) \to \cN_\ell$ such that $L_i \cap \cN_\ell - h(u) + u \in \cI$ for every element $u \in B \cap \cN_\ell$. 

Let $b$ denote now a uniformly random element of $A_{i + 1} \cap B$. Since every element of $B$ belongs to $A_{i + 1}$ with probability $1 / (\alpha k)$, independently, even conditioned on $\cH_i$, and the event $\cA_{i + 1}$ simply excludes the possibility that $A_{i + 1} \cap B$ is empty, we get that $b$ is a uniformly random element of $B$ when conditioned on $\cH_{i + 1}$ and $\cA_{i + 1}$. Furthermore, since $h_\ell$ is an injective function for every $\ell \in [q]$ and every element $u \in \cN$ belongs to $\cN_\ell$ for at most $p$ different values of $\ell$, the probability that $\bigcup_{\ell : b \in \cN_\ell} \{h(b)\}$ contains some element $u \in \cN$ is at most $p / |B|$. Therefore, if we denote $U(b) \coloneqq \bigcup_{\ell : b \in \cN_\ell}$, then every element of $L_i$ appears in $L_i \setminus U(b)$ with probability at least $1 - p/|B|$.

Since $L_i \setminus U(b) + b$ is independent in the matchoid constraint, the above observations imply
\begin{align*}
\bE[ f(L_{i+1}) \mid \cH_i&{}, \cA_{i+1}] \geq \bE [f(L_i \setminus U(b) + b) \mid \cH_i, \cA_{i+1}] \\
     ={} & \bE [f(L_i \setminus U(b)) \mid \cH_i, \cA_{i+1}] + \bE [f(b \mid L_i \setminus U(b)) \mid \cH_i, \cA_{i+1}] \\
     ={} &
     \bE [f(L_i \setminus U(b)) \mid \cH_i, \cA_{i+1}] + \bE [\tfrac{1}{|B|}{\textstyle \sum_{u \in B}} f(u \mid L_i \setminus U(u)) \mid \cH_i, \cA_{i+1}]\\
     \geq{} & (1 - p/|B|) \cdot \bE[f(L_i) \mid \cH_i, \cA_{i+1}] + \bE [\tfrac{1}{|B|}{\textstyle\sum_{u \in B}} f(u \mid L_i \setminus U(u)) \mid \cH_i, \cA_{i+1}] \\
     ={} & (1 - p/|B|) \cdot \bE[f(L_i) \mid \cH_i]+ \tfrac{1}{|B|} \cdot \bE [{\textstyle\sum_{u \in B}} f(u \mid L_i \setminus U(u)) \mid \cH_i]
\end{align*}
where the second inequality follows from \cref{lem:super-subsample-matroid}, and the last equality holds since $L_i$ and $B$ are deterministic given $\cH_i$. By the monotonicity and submodularity of $f$, we can lower bound the argument of the second expectation on the rightmost side of the last inequality as follows.
\[
    \sum_{u \in B} f(u \mid L_i \setminus U(u))
    \geq
    \sum_{u \in B \setminus L_i} \mspace{-9mu} f(u \mid L_i \setminus U(u))
    \geq
    \sum_{u \in B \setminus L_i} \mspace{-9mu} f(u \mid L_i)
    \geq
    f(B \mid L_i)
    \geq
    f(B) - f(L_i)
    \enspace.
\]

Plugging the last inequality into the previous one, we get
\[
    \bE[ f(L_{i+1}) - f(L_i) \mid \cH_i, \cA_{i+1}]
    \geq
    \tfrac{1}{|B|}\bE[f(B) - (p + 1)f(L_i) \mid \cH_i]
    \enspace.
\]
If $\bE[f(B) - (p + 1)f(L_i) \mid \cH_i] \geq 0$ then this inequality implies the lemma since the size of the independent set $B$ cannot exceed the rank $k$ of the matchoid. Otherwise, the lemma holds since \cref{alg:matchoid-basic} guarantees $f(L_{i + 1}) \geq f(L_i)$.
\end{proof}

The last lemma allows us to reprove also the following result due to~\cite{S20b}, which is a generalisation of \cref{thm:matroid-approx}.
\begin{theorem}
\label{lem:matchoid-approx}
For every $\eps \in [0, 1]$, setting $\alpha = \lceil k / \eps \rceil / k$ and initializing $L_0$ to be the empty set makes \cref{alg:matchoid-basic} a semi-streaming algorithm guaranteeing $\frac{1}{p + 1}(1 - 1/e^{p + 1}) - O(\eps)$ approximation and storing $O(k/\eps)$ elements.
\end{theorem}

\begin{proof}
Repeating the initial stages of the proof of \cref{thm:matroid-approx}, but using \cref{lem:matchoid-gain} instead of \cref{lem:matroid-gain}, we can get, for every integer $0 \leq i < \alpha k$,
\[
    \bE[ f(L_{i+1}) \mid \cH_i]
    \geq
    \left(1-\frac{p + 1}{\alpha k}\right) \cdot \bE[f(L_i) \mid \cH_i] + \tfrac{1}{k}(1-e^{-1/\alpha}) \cdot f(\OPT)
    \enspace.
\]
The law of total expectation allows us to remove the conditioning on $\cH_i$ from both sides of the last inequality, which yields (by repeated applications of the inequality and observing that $f(L_0) \geq 0$) the inequality
\begin{align*}
    \bE[f(L_i)]
    \geq{} &
    \sum_{j = 1}^i \left(1 - \frac{p + 1}{\alpha k}\right)^{i - j} \cdot \tfrac{1}{k}(1-e^{-1/\alpha}) \cdot f(\OPT)\\
    \geq{} &
    \frac{1 - (1 - (p + 1)/(\alpha k))^i}{1 - (1 - (p + 1)/(\alpha k))} \cdot \tfrac{1}{k}(1-e^{-1/\alpha}) \cdot f(\OPT)\\
    ={} &
    \tfrac{\alpha}{p + 1}(1-e^{-1/\alpha}) \cdot \left(1 - (1 - (p + 1)/(\alpha k))^i\right) \cdot f(\OPT)
    \enspace.
\end{align*}
To simplify this inequality, we observe that $1 - (p + 1) / (\alpha k) \leq e^{-(p + 1) / (\alpha k)}$ and $\alpha(1 - e^{-1/\alpha}) \geq \alpha(1 / \alpha - 1 / \alpha^2) = 1 - 1 / \alpha$, which yields
\begin{equation} \label{eq:matchoid_gain_multi_window}
    \bE[f(L_i)]
    \geq
    \tfrac{1}{p + 1}(1 - e^{-i(p + 1) / (\alpha k )} - 1/\alpha) \cdot f(\OPT)
    =
    \tfrac{1}{p + 1}(1 - e^{-i(p + 1) / (\alpha k )} - O(\eps)) \cdot f(\OPT)
    \enspace.
\end{equation}
The theorem now follows by plugging $i = \alpha k$ into this inequality.
\end{proof}

We can extend \cref{alg:matchoid-basic} into a multi-pass algorithm in the same way in which \cref{alg:matroid-basic} is extended into the multi-pass algorithm \cref{alg:local_search_multi_random} in \cref{ssc:local_search_multi_randomized}.\footnote{A technicality to consider is that, for a matchoid constraint, one cannot use the first pass to construct a base $T_0$. However, this is not an issue since \cref{alg:matchoid-basic} can get any independent set as $L_0$, which means that it is fine to simply set $T_0 \gets \varnothing$.} If this is done for $r$ passes, then the resulting algorithm has $r\alpha k$ windows instead of the $\alpha k$ windows of a single pass. Therefore, the expected value of the solution obtained at the end of $r$ passes is given by plugging $r \alpha k$ into Inequality~\eqref{eq:matchoid_gain_multi_window}, yielding a value of
\[
    \tfrac{1}{p + 1}(1 - e^{-r(p + 1)} - O(\eps)) \cdot f(\OPT)
    \enspace.
\]
Therefore, if we choose the number of passes $r$ to be $\lceil (p + 1)^{-1} \log \eps^{-1} \rceil$, the approximation ratio of the multi-pass algorithm becomes
\[
    \tfrac{1}{p + 1}(1 - e^{-\log \eps^{-1}} - O(\eps))
    =
    \frac{1 - \eps}{p + 1} - O(\eps)
    =
    \frac{1}{p + 1} - O(\eps)
    \enspace,
\]
which completes the proof of the second part of \cref{thm:local_optimum}. The first part of the theorem then follows because every matroid is a $1$-matchoid, and vice versa.

% {\sloppy
% \printbibliography}
\bibliographystyle{plainurl}
\bibliography{combined}

\newpage
\appendix

\section{Single-Pass for Non-Monotone Submodular Functions}
\label{app:singleMatNonMon}

We now show how our single-pass algorithm for \MSMM, i.e., \cref{alg:singleMatMonotone}, can be extended to the non-monotone case, i.e., to \SMM, to obtain \cref{thm:mainSinglaMatNonMon}.
Only minor modifications are needed to the algorithm.
The modified algorithm appears as \cref{alg:singleMatNonMonotone}.
There are two changes compared to \cref{alg:singleMatMonotone}, our algorithm for the monotone case.
First, the parameter $\alpha$ is chosen differently, and is set to be (approximately) the solution to $e^\alpha = \frac{\alpha^2 + 2\alpha - 1}{\alpha - 1}$; more precisely, we set $\alpha= 1.9532$.
Second, when updating coordinates of the vectors $a_i$, we only increase a coordinate, corresponding to some element $u\in \cN$, if the total increase of the coordinate $u$ so far does not exceed some target value $p\in (0,1)$, which is set to $\frac{1}{m(c-1)+1}$.

\begingroup
\begin{algorithm}[ht]
\caption{Single-Pass Semi-Streaming Algorithm for {\MSMM}} \label{alg:singleMatNonMonotone}
\begin{algorithmic}[1]
\State Set $\alpha = 1.9532$, $m=\left\lceil\frac{3 \alpha}{\eps}\right\rceil$, $c=\frac{m}{m-\alpha}$, $p=\frac{1}{m(c-1)+1}$, and $L = \left\lceil \log_c(\frac{2 c}{\eps(c-1)}) \right\rceil$.
\State Set $a = 0 \in [0, 1]^\cN$ to be the zero vector, and let $b=-\infty$.

\For{every element arriving $u \in \cN$, if $\partial_u F(a) > 0$}\label{algline:loopOverUNonMon} 
	\State Let $i(u) = \left\lfloor \log_c(\partial_u F(a)) \right\rfloor$.\label{algline:index_i_u_nonMon}
	\Comment{%
	\raggedright Thus, $i(u)$ is largest index $i\in \mathbb{Z}$ with $c^i \leq \partial_u F(a)$.%
	}
  \State $\beta \coloneqq \max\{b, i(u) - \rank(M) - L\}$.
	\For{$i = \beta$ \textbf{to} $i(u)$}\label{algline:loopOverINonMon}
		\If{$A_i + u \in \cI$ \textbf{and} $\sum_{i=\beta}^{i(u)} a_i(u) \leq p$}\label{algline:add_u_test}
			\State $A_i \gets A_i + u$.
			\State $a_i \gets a_i + \frac{c^i}{m\cdot \partial_u F(a)} \characteristic_u$.\label{algline:modify_ai_nonMon}
		\EndIf
	\EndFor

\State Set $b \gets h - L$, where $h$ is largest index $i\in \mathbb{Z}$ satisfying $\sum_{j=i}^\infty |A_j| \geq \rank(M)$.

\State $a \gets \sum_{i=b}^{\infty} a_i$.%\label{algline:updateA}

\State Delete from memory all sets $A_i$ and vectors $a_i$ with $i\in \mathbb{Z}_{<b}$.\label{algline:reduceMem_nonMon}

\EndFor

\State Set $S_k \gets \varnothing$ for $k \in \{0,\ldots, m-1\}$.

\State Let $q$ be largest index $i\in \mathbb{Z}$ with $A_i\neq \varnothing$.\label{algline:define_q_nonMon}

\For{$i = q$ \textbf{to} $b$ (stepping down by $1$ at each iteration)}\label{algline:build_Sk_nonMon}
	\While{$\exists u \in A_i \setminus S_{(i \bmod m)}$ with $S_{(i \bmod m)} + u\in \cI$}
		\State $S_{(i \bmod m)} \gets S_{(i \bmod m)} + u$.
	\EndWhile
\EndFor
\State \Return{a rounding $R\in \mathcal{I}$ of the fractional solution $s\coloneqq \frac{1}{m}\sum_{k=0}^{m-1} \characteristic_{S_k}$ with $f(R) \geq F(s)$}.\label{algline:returnNonMonotSubmodMat}
\end{algorithmic}
\end{algorithm}
\endgroup

We recall that most results shown in \cref{sec:singlepass} did not rely on monotonicity of $f$.
More precisely, the first result needing monotonicity of $f$ was \cref{lem:diffOptToAall}.
Moreover, also \cref{obs:boundDerivativesFAall} does not hold anymore, because it used the fact that we add $u$ to sets $A_i$ as long as the marginal contribution is large enough and $A_i + u$ is independent.
However, this is not always the case in \cref{alg:singleMatNonMonotone} because we will stop adding $u$ to sets $A_i$ if the condition $\sum_{i=\beta}^{i(u)} a_i(u) \leq p$ on Line~\ref{algline:add_u_test} of \cref{alg:singleMatNonMonotone} fails.

To circumvent this issue, we provide two bounds on the partial derivative $\partial_u F(\aall)$, one for elements $u\in \cN$ with $\aall(u)\leq p$ and one for elements $u\in \cN$ with $\aall(u) \geq p$.
In the first case, the condition $\sum_{i=\beta}^{i(u)}\leq p$ was always fulfilled, and we can replicate the same analysis as in the monotone case.
In the second case, we exploit that $\aall(u)\geq p$ is large to provide a bound on $\partial_u F(\aall)$ that depends on $p$.

Recall that $\ell(u)$ was defined, in \cref{sec:singlepass}, as largest index $i\in \mathbb{Z}$ such that $u\in \Span(\overline{A}_{i})$ (or $-\infty$, if no such index exists).
\begin{observation}\label{obs:boundDerivatiesSmallP}
\begin{equation*}
\partial_u F(\aall) \leq c^{\ell(u)+1} \qquad \forall u\in \cN \text{ with } \aall(u) \leq p\enspace.
\end{equation*}
\end{observation}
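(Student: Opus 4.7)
The plan is to adapt the proof of \cref{obs:boundDerivativesFAall} almost verbatim, by verifying that the additional guard introduced in \cref{alg:singleMatNonMonotone}---namely the requirement $\sum_{i=\beta}^{i(u)} a_i(u) \leq p$ on \cref{algline:add_u_test}---never fires during the processing of the element $u$ under the hypothesis $\aall(u) \leq p$. Once this is observed, the reason $u$ fails to be added to $A_{\ell(u)+1}$ must be exactly the same as in the monotone case.

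First I would argue that during the single iteration of the outer for loop at \cref{algline:loopOverUNonMon} that processes $u$, the coordinate $a_i(u)$ equals $0$ for every $i$ at the beginning of the iteration (because $u$ has never been encountered before), and that it grows monotonically, and only during this single iteration. Consequently, at any moment during the inner for loop, the running value of the sum $\sum_{i=\beta}^{i(u)} a_i(u)$ is bounded above by its value at the end of the inner loop, which equals $\aall(u)$. Under the hypothesis $\aall(u) \leq p$, the guard $\sum_{i=\beta}^{i(u)} a_i(u) \leq p$ therefore remains satisfied throughout the inner loop, so admitting $u$ into a given $A_i$ hinges solely on the matroid-independence test together with $i$ lying in the range $[\beta, i(u)]$, exactly as in \cref{alg:singleMatMonotone}.

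Given this reduction, the remainder of the argument proceeds as in the proof of \cref{obs:boundDerivativesFAall}: from $u \not\in \Span(\overline{A}_{\ell(u)+1})$ (by definition of $\ell(u)$) one deduces that $u$ was never added to $A_{\ell(u)+1}$, while $A_{\ell(u)+1} + u \subseteq \overline{A}_{\ell(u)+1} + u \in \cI$, so the matroid-independence test would have succeeded had the algorithm considered the index $\ell(u)+1$ while processing $u$. Hence this index lay outside the considered range $[\beta, i(u)]$, which, combined with submodularity of $f$ and the coordinate-wise inequality $a \leq \aall$ for the vector $a$ at the moment $u$ was processed, yields $\partial_u F(\aall) \leq \partial_u F(a) < c^{\ell(u)+1}$, as desired.

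The only genuinely new step is the verification that the guard does not fire, which is a short monotonicity argument relying entirely on the assumption $\aall(u) \leq p$; everything else is inherited from the monotone analysis. I expect no substantial obstacle beyond this bookkeeping, since the additional structural ingredient of \cref{alg:singleMatNonMonotone} is designed precisely to be inactive in the regime $\aall(u) \leq p$.
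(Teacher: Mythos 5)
Your proof is correct and follows essentially the same route as the paper's: the paper's own argument likewise observes that under $\aall(u)\leq p$ the condition $\sum_{i} a_i(u)\leq \aall(u)\leq p$ held whenever $u$ was considered, so the guard on \cref{algline:add_u_test} cannot be the reason $u$ was not added to $A_{\ell(u)+1}$, and then concludes exactly as in \cref{obs:boundDerivativesFAall}. Your only addition is to spell out why the running value of the guard sum is dominated by $\aall(u)$, which the paper asserts in a single clause.
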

\begin{proof}
Since $u\not\in \Span(\overline{A}_{\ell(u)+1})$ and $\aall(u) \leq p$, we get that $u$ did not get added to the set $A_{\ell(u)+1}$ in \cref{alg:singleMatMonotone}, even though it fulfilled both $A_{\ell(u)+1} + u \in \cI$---because $A_{\ell(u)+1} + u \subseteq \overline{A}_{\ell(u)+1} + u \in \cI$---and $\sum_{i=b}^{i(u)} a_i(u)\leq \aall(u) \leq p$ when it got considered.
Hence, when $u$ got considered in Line~\ref{algline:loopOverUNonMon} of \cref{alg:singleMatNonMonotone}, we had $\partial_u F(a) < c^{\ell(u)+1}$.
Finally, by submodularity of $f$ and because $a \leq \aall$ (coordinate-wise), we have $\partial_u F(\aall) \leq \partial_u F(a) \leq c^{\ell(u)+1}$.
\end{proof}

\begin{lemma}\label{lem:boundDerivatiesLargeP}
\begin{equation*}
\partial_u F(\aall) \leq \frac{1}{1-p} \cdot c^{\ell(u)+1} \qquad \forall u\in \cN \text{ with } \aall(u) \geq p \enspace.
\end{equation*}
\end{lemma}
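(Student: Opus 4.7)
My plan is a case analysis on why \cref{alg:singleMatNonMonotone} stopped adding $u$ to the sets $A_i$ during the single pass in which $u$ is processed on \cref{algline:loopOverUNonMon}. Let $\pi \coloneqq \partial_u F(a)$, where $a$ is the vector $a$ at the moment $u$ is processed; throughout that outer iteration $a$, and hence $\pi$, stay fixed, every successful addition at level $i$ increments $a_i(u)$ by $c^i/(m\pi)$, and by the definition of $i(u)$ we have $c^{i(u)}\leq \pi < c^{i(u)+1}$. Because $a\leq \aall$ coordinate-wise, submodularity of $f$ yields $\partial_u F(\aall)\leq \pi$, so it suffices to bound $\pi$. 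The assumption $\aall(u)\geq p > 0$ forces the set $I_u\subseteq\{\beta,\ldots,i(u)\}$ of levels at which $u$ was added to be non-empty, and in particular guarantees that the inner loop body executes.

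\textbf{Case 1: the cap check on \cref{algline:add_u_test} never failed during $u$'s processing.} Then iteration $i=i(u)$ of the inner loop actually reaches and passes its check. Either $A_{i(u)}+u\in\cI$, in which case $u$ is added and $u\in\overline{A}_{i(u)}$, or $A_{i(u)}+u\not\in\cI$, in which case $u\in\Span(A_{i(u)})\subseteq\Span(\overline{A}_{i(u)})$. Either way $u\in\Span(\overline{A}_{i(u)})$, so $\ell(u)\geq i(u)$, and hence $\pi < c^{i(u)+1}\leq c^{\ell(u)+1}\leq c^{\ell(u)+1}/(1-p)$. This is essentially the argument already used to prove \cref{obs:boundDerivatiesSmallP}.

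\textbf{Case 2: the cap check failed at some iteration.} Let $i^*=\max I_u$ be the last level at which $u$ was added. Since $u\in A_{i^*}\subseteq\overline{A}_{i^*}$, we have $\ell(u)\geq i^*$, and because no additions occur above $i^*$, the set $I_u$ is contained in $\{-\infty,\ldots,\ell(u)\}$. Summing a geometric series,
\[
\aall(u)=\sum_{i\in I_u}\frac{c^i}{m\pi}\;\leq\; \frac{1}{m\pi}\sum_{i=-\infty}^{\ell(u)} c^i \;=\; \frac{c^{\ell(u)+1}}{m\pi(c-1)},
\]
and combining with $\aall(u)\geq p$ gives $\pi\leq c^{\ell(u)+1}/(mp(c-1))$. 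The parameter $p=1/(m(c-1)+1)$ is calibrated precisely so that $mp(c-1)=1-p$, which turns this into $\pi\leq c^{\ell(u)+1}/(1-p)$, as desired.

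I do not anticipate any real obstacle. The only new ingredient relative to \cref{obs:boundDerivatiesSmallP} is a second reason (the per-element cap $p$) why $u$ may fail to be added at some level, and $p$ is defined exactly so that exhausting the cap with a geometric tail of increments topping out at level $\ell(u)$ still yields the advertised $1/(1-p)$ factor. The only subtle step to verify in Case 2 is that the last successfully added level $i^*$ lies at or below $\ell(u)$, which is immediate from $u\in A_{i^*}\subseteq\overline{A}_{i^*}\subseteq\Span(\overline{A}_{i^*})$.
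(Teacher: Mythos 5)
Your proof is correct and is essentially the paper's argument: the paper likewise bounds $\aall(u)$ by the geometric sum $\frac{1}{m\,\partial_u F(a)}\sum_{i\le \ell(u)} c^i$ of the increments (phrased there via linearity of $F$ in the coordinate $u$), combines it with $\aall(u)\ge p$ and the calibration $mp(c-1)=1-p$, and passes from $\partial_u F(a)$ to $\partial_u F(\aall)$ by submodularity. Your Case~1 is not actually needed, since the Case~2 computation only uses $I_u\neq\varnothing$ and $\max I_u\le \ell(u)$ and therefore already covers it, but the split is harmless.
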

\begin{proof}
Let $u\in \cN$ with $\aall(u) \geq p$, and let $a$ be the vector at the beginning of the iteration of the for loop in Line~\ref{algline:loopOverUNonMon} of \cref{alg:singleMatNonMonotone} when $u$ got considered.
Moreover, let $\beta\coloneqq\max\{b, i(u) - \rank(M) -L\}$ be the $\beta$ computed and used at that iteration.
Because the multilinear extension $F$ is linear in each single coordinate, and in particular the one corresponding to $u$,
\begin{equation}\label{eq:contribution_of_u_in_over_p_case}
F\left(a + \sum_{i=\beta}^{\ell(u)} \frac{c^i}{m\cdot \partial_u F(a)} \characteristic_u\right) - F(a) = \frac{1}{m} \sum_{i=\beta}^{\ell(u)} c^i\enspace.
\end{equation}
Note that because the values of $a_i(u)$ are only increased during the iteration of the for loop in Line~\ref{algline:loopOverUNonMon} that considers $u$, we have
\begin{equation}\label{eq:aall_u_as_sum_of_cis}
\aall(u) = \sum_{i=\beta}^{\ell(u)} \frac{c^i}{m\cdot \partial_u F(a)}\enspace.
\end{equation}
Due to the same reason, we have $a(u)=0$. Hence, $\aall \wedge \characteristic_{\cN - u} \geq a$ (coordinate-wise).
We thus obtain
\begin{equation}\label{eq:boundDerLargePSumCi}
\begin{aligned}
\frac{1}{m} \sum_{i=\beta}^{\ell(u)} c^i
  &= F\left(a + \aall(u)\cdot \characteristic_u\right) - F(a) \\
  &\geq F\left((\aall\wedge \characteristic_{\cN-u}) + \aall(u)\cdot \characteristic_u\right) - F(\aall \wedge \characteristic_{\cN-u}) \\
  &= \aall(u) \cdot \partial_u F(\aall) \\
  &\geq p \cdot \partial_u F(\aall)\enspace,
\end{aligned}
\end{equation}
where the first equality is due to~\cref{eq:contribution_of_u_in_over_p_case,eq:aall_u_as_sum_of_cis}, the first inequality follows from submodularity of $f$ and $\aall\wedge \characteristic_{\cN-u} \geq a$, the second equality is due to multilinearity of $F$, and the last inequality holds because we are considering an element $u\in \cN$ with $\aall(u) \geq p$.
The result now follows due to
\begin{equation*}
\partial_u F(\aall)
  \leq \frac{1}{m p} \sum_{i=\beta}^{\ell(u)} c^i 
  \leq \frac{1}{m p} \sum_{i=-\infty}^{\ell(u)} c^i  
  = \frac{1}{mp(c-1)} \cdot c^{\ell(u) + 1}
  = \frac{1}{1-p} \cdot c^{\ell(u) + 1}\enspace,
\end{equation*}
where the first inequality is due to~\cref{eq:boundDerLargePSumCi}, and the last equality follows from our definition of $p$, i.e., $p\coloneqq \frac{1}{m(c-1)+1}$.
\end{proof}

We now combine \cref{obs:boundDerivatiesSmallP,lem:boundDerivatiesLargeP} to obtain a result analogous to \cref{lem:diffOptToAall} (which we had in the monotone case).
\begin{lemma}\label{lem:optDerivativesNonMon}
\begin{equation*}
F(\aall \vee \characteristic_{\OPT}) - F(\aall) \leq \sum_{u\in \OPT} c^{\ell(u) + 1} \enspace.
\end{equation*}
\end{lemma}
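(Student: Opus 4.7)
The plan is to mirror the proof of \cref{lem:diffOptToAall} from the monotone case, but replacing the single derivative bound provided by \cref{obs:boundDerivativesFAall} (which is no longer available) by a case split that uses \cref{obs:boundDerivatiesSmallP} when $\aall(u)\leq p$ and \cref{lem:boundDerivatiesLargeP} when $\aall(u)\geq p$. The key design idea is that the extra factor $\frac{1}{1-p}$ appearing in the second (weaker) derivative bound will be exactly cancelled by a factor $(1-\aall(u))$ that appears naturally when we upper-bound $F(\aall\vee \characteristic_{\OPT})-F(\aall)$ via the gradient of $F$ at $\aall$.

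Concretely, I would first invoke concavity of $F$ along non-negative directions (which is a property of submodular $f$ and requires no monotonicity) applied to the coordinate-wise non-negative direction $(\aall \vee \characteristic_{\OPT}) - \aall$, using that $\aall \in [0,1]^{\cN}$ (the analog of \cref{obs:overline_a_is_in_box} for \cref{alg:singleMatNonMonotone} still goes through since each update to a coordinate of $a_i$ is capped by the condition on \cref{algline:add_u_test}). This yields
\begin{equation*}
F(\aall \vee \characteristic_{\OPT}) - F(\aall)
 \leq \nabla F(\aall)^{T}\bigl((\aall \vee \characteristic_{\OPT}) - \aall\bigr)
 = \sum_{u\in \OPT} (1 - \aall(u))\,\partial_u F(\aall)\enspace.
\end{equation*}

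Next, I would bound each summand by $c^{\ell(u)+1}$ via a case split. For any $u\in \OPT$ with $\partial_u F(\aall) < 0$, the term is non-positive and hence trivially at most $c^{\ell(u)+1}\geq 0$. For $u\in \OPT$ with $\partial_u F(\aall)\geq 0$, I consider two subcases. If $\aall(u)\leq p$, then \cref{obs:boundDerivatiesSmallP} gives $\partial_u F(\aall)\leq c^{\ell(u)+1}$, and combining with $1-\aall(u)\leq 1$ yields $(1-\aall(u))\partial_u F(\aall)\leq c^{\ell(u)+1}$. If instead $\aall(u)\geq p$, then \cref{lem:boundDerivatiesLargeP} gives $\partial_u F(\aall)\leq \frac{1}{1-p}\,c^{\ell(u)+1}$, while $1-\aall(u)\leq 1-p$, so
\begin{equation*}
(1-\aall(u))\,\partial_u F(\aall) \;\leq\; (1-p)\cdot \frac{1}{1-p}\,c^{\ell(u)+1} \;=\; c^{\ell(u)+1}\enspace.
\end{equation*}
Summing the resulting inequality $(1-\aall(u))\partial_u F(\aall)\leq c^{\ell(u)+1}$ over all $u\in \OPT$ then gives the claimed bound.

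The main conceptual obstacle, compared to the monotone proof, is that partial derivatives of $F$ are no longer guaranteed to be non-negative, so some care is needed to argue that terms with negative derivatives do not cause problems; this is handled uniformly by the observation that the target bound $c^{\ell(u)+1}$ is itself non-negative. No substantial calculation beyond the above case split is required: the parameter $p=\frac{1}{m(c-1)+1}$ chosen in \cref{alg:singleMatNonMonotone} is precisely what makes the two regimes line up with a common bound.
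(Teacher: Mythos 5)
Your proof is correct and follows essentially the same route as the paper: bound $F(\aall\vee\characteristic_{\OPT})-F(\aall)$ by $\sum_{u\in\OPT}(1-\aall(u))\partial_u F(\aall)$ via concavity along non-negative directions, then split on $\aall(u)\leq p$ versus $\aall(u)\geq p$ so that the factor $1-\aall(u)\leq 1-p$ cancels the $\frac{1}{1-p}$ in \cref{lem:boundDerivatiesLargeP}. Your explicit handling of negative partial derivatives is harmless but unnecessary, since the two derivative bounds are valid upper bounds regardless of sign and $1-\aall(u)\geq 0$.
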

\begin{proof}
Let
\begin{equation*}
\OPTbig \coloneqq \left\{u\in \OPT \colon \aall(u) \geq p\right\}\enspace.
\end{equation*}
The result now follows from
\begin{align*}
F(\aall\vee {}&\characteristic_{\OPT}) - F(\aall)
  \leq \sum_{u\in \OPT} \partial_u F(\aall) \cdot (1-\aall(u)) \\
  &\leq \frac{1}{1-p} \sum_{u\in \OPTbig} c^{\ell(u)+1}\cdot (1-\aall(u)) + \sum_{u\in \OPT\setminus \OPTbig} c^{\ell(u)+1}\cdot (1-\aall(u))\\
  &\leq \sum_{u\in \OPT} c^{\ell(u) + 1}\enspace,
\end{align*}
where the first inequality uses concavity of $F$ along non-negative directions, the second one is due to \cref{obs:boundDerivatiesSmallP,lem:boundDerivatiesLargeP}, and the last one holds since $1-\aall(u)\leq 1-p$ in the first sum (because $u\in \OPTbig$) and $1-\aall(u) \leq 1$ in the second sum.
\end{proof}

As in the monotone case, we now would like to relate the difference between $f(\OPT)$ and $F(\aall)$ to the above-derived bounds on $\partial_u F(\aall)$.
\cref{lem:diffOptToAall}, which we used in the monotone case, does not hold for non-monotone functions $f$.
To avoid the need for monotonicity, we bound the difference $F(\aall \vee \characteristic_{\OPT})-F(\aall)$ instead.
To relate $F(\aall \vee \characteristic_{\OPT})$ to $f(\OPT)$, we exploit that $\aall$ has small coordinates, through the following known lemma (for completeness, we prove this lemma here, however, we note that it can also be viewed as an immediate corollary of either Lemma~7 of~\cite{chekuri2015multiplicative} or Lemma~2.2 of~\cite{buchbinder2014submodular}).
\begin{lemma}\label{lem:multlin_value_small_coords}
Let $f: 2^\cN\to \mathbb{R}_{\geq 0}$ be a non-negative submodular function with multilinear extension $F$, and let $p\in [0,1]$, $x\in [0,p]^\cN$, and $S\subseteq \cN$. Then
$F(x \vee \characteristic_S) \geq (1-p)f(S)$.
\end{lemma}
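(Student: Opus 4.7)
The plan is to reduce the statement to a single-variable concavity argument after contracting $S$. I would define $g\colon 2^{\cN\setminus S}\to \nnR$ by $g(T)\triangleq f(S\cup T)$, which is non-negative and submodular. Its multilinear extension $G$ satisfies $G(y) = \expected{f(S\cup\RSet(y))}$ for every $y\in [0,1]^{\cN\setminus S}$. Since $(x\vee\characteristic_S)_u = 1$ for $u\in S$ and $(x\vee\characteristic_S)_u = x_u$ for $u\notin S$, the random set $\RSet(x\vee \characteristic_S)$ has the same distribution as $S \cup \RSet(y)$, where $y\in[0,p]^{\cN\setminus S}$ denotes the restriction of $x$ to the coordinates outside $S$. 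Unpacking the multilinear extension, this yields $F(x\vee\characteristic_S) = G(y)$, so it suffices to prove that $G(y)\ge (1-p)\,g(\varnothing)$.

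The case $p=0$ forces $y=0$ and the bound holds with equality, so I will assume $p>0$ in what follows. I would set $z\triangleq y/p\in [0,1]^{\cN\setminus S}$ and consider the univariate function $\phi(t)\triangleq G(tz)$ on $t\in[0,1]$. Because $g$ is submodular, $G$ is concave along the non-negative direction $z$; this is the standard property of multilinear extensions of submodular functions already recalled in \cref{sec:prelim}. Hence $\phi$ is concave on $[0,1]$, and writing $p = (1-p)\cdot 0 + p\cdot 1$ gives
\begin{equation*}
G(y) \;=\; \phi(p) \;\ge\; (1-p)\,\phi(0) + p\,\phi(1) \;=\; (1-p)\,g(\varnothing) + p\,G(z).
\end{equation*}
Non-negativity of $g$ implies $G(z) = \expected{g(\RSet(z))}\ge 0$, so $p\,G(z)\ge 0$ and we obtain $F(x\vee\characteristic_S) = G(y)\ge (1-p)\,g(\varnothing) = (1-p)\,f(S)$, as required. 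There is no real obstacle in this argument: once the contraction to $g$ is recognized, the estimate reduces to one-dimensional concavity together with non-negativity of the multilinear extension.
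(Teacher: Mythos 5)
Your proof is correct, and it takes a genuinely different route from the paper's. The paper proves the bound by passing to the Lov\'asz extension: it invokes the known inequality $F(y)\geq f_L(y)$, writes $f_L(x\vee\characteristic_S)=\int_0^1 f(S\cup\{u\colon x(u)>t\})\,dt$, and observes that for $t\geq p$ the integrand is exactly $f(S)$ because $x\in[0,p]^\cN$, so dropping the initial segment $[0,p)$ of the integral already yields $(1-p)f(S)$. You instead contract by $S$ and run a one-dimensional concavity argument: $\phi(t)=G(tz)$ is concave since $G$ is concave along non-negative directions, so $\phi(p)\geq(1-p)\phi(0)+p\,\phi(1)\geq(1-p)g(\varnothing)$, using non-negativity to discard $\phi(1)$. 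Every ingredient you use (the identification $F(x\vee\characteristic_S)=G(y)$ via the distributional equality of $\RSet(x\vee\characteristic_S)$ and $S\cup\RSet(y)$, concavity along non-negative directions, non-negativity of $G$) is either elementary or already recalled in the paper's preliminaries, so your argument is self-contained relative to the paper's setup, whereas the paper's proof imports the comparison between the multilinear and Lov\'asz extensions from outside. The two arguments discard slack in analogous places --- you drop the term $p\,G(z)\geq 0$, the paper drops the integral over $[0,p)$ --- and neither yields a stronger conclusion, so the difference is purely one of which standard facts are taken as given.
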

\begin{proof}
We use the fact that the multilinear extension is lower bounded by the Lov\'asz extension $f_L\colon [0,1]^\cN \to \mathbb{R}_{\geq 0}$, which is given by
\begin{equation*}
f_L(y)\coloneqq \int_{t=0}^1 f\left(\{u\in \cN \colon y(u) > t\}\right) dt \qquad \forall y\in [0,1]^\cN\enspace.
\end{equation*}
Hence, $F(y) \geq f_L(y)$ for all $y\in [0,1]$ (see, e.g.,~\cite{vondrak_2013_symmetry} for a formal proof of this well-known fact). The result now follows from
\begin{align*}
F(x\vee \characteristic_S) &\geq f_L(x\vee \characteristic_S) \\
                           &= \int_{t=0}^1 f(S \cup \{u\in \cN\colon x(u) > t\}) dt \\
                           &\geq \int_{t=p}^1 f(S \cup \{u\in \cN \colon x(u) > t\}) dt \\
							             &= (1-p) f(S)\enspace,
\end{align*}
where the last equality uses that $x\in [0,p]^\cN$.
\end{proof}

By applying \cref{lem:multlin_value_small_coords} in our context, we get the following lower bound on $F(\aall \vee \characteristic_{\OPT})$ in terms of $f(\OPT)$.
\begin{corollary}\label{cor:boundOptNonMon}
\begin{equation*}
\left(1-p-\frac{1}{m}\right)\cdot f(\OPT) \leq F(\aall \vee \characteristic_{\OPT})
\end{equation*}
\end{corollary}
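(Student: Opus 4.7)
The plan is to combine \cref{lem:multlin_value_small_coords} with a coordinate-wise upper bound on the vector $\aall$. Specifically, we will show that $\aall \in [0, p + \tfrac{1}{m}]^\cN$, and then invoke \cref{lem:multlin_value_small_coords} with the parameter $p + \tfrac{1}{m}$ in place of $p$ to conclude
\begin{equation*}
F(\aall \vee \characteristic_{\OPT}) \;\geq\; \left(1 - p - \tfrac{1}{m}\right) f(\OPT)\enspace.
\end{equation*}

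The main step is therefore to verify the claim that $\aall(u) \leq p + \tfrac{1}{m}$ for every $u \in \cN$. Fix such an element $u$. Since the outer \textbf{for} loop at \cref{algline:loopOverUNonMon} considers $u$ exactly once, the only updates to the coordinate $u$ of any vector $a_i$ occur during this single iteration. Consequently, $\aall(u)$ equals the value of $\sum_{i = \beta}^{i(u)} a_i(u)$ at the end of the inner loop over $i$ (with $\beta$ as defined in \cref{alg:singleMatNonMonotone}), starting from the value $0$ at the start of that inner loop. The condition at \cref{algline:add_u_test} ensures that no further increases occur once the partial sum exceeds $p$. Hence, right before the last update to coordinate $u$, the partial sum was at most $p$, and after that update the sum grew by at most $\tfrac{c^{i}}{m\cdot \partial_u F(a)}$ for some $i \leq i(u)$, which is bounded by $\tfrac{c^{i(u)}}{m\cdot \partial_u F(a)} \leq \tfrac{1}{m}$ using the definition of $i(u)$ at \cref{algline:index_i_u_nonMon}. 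This gives $\aall(u) \leq p + \tfrac{1}{m}$, as claimed.

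With the coordinate bound in hand, applying \cref{lem:multlin_value_small_coords} to $x = \aall$ and $S = \OPT$ (with the upper bound $p + \tfrac{1}{m}$) immediately yields the desired inequality. I do not expect any genuine obstacle here: all the work is in verifying the small-coordinate property of $\aall$, and the condition on \cref{algline:add_u_test} was designed precisely to enforce it.
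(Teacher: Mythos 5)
Your proposal is correct and follows essentially the same route as the paper: establish the coordinate-wise bound $\aall \in [0, p + \tfrac{1}{m}]^\cN$ from the guard on \cref{algline:add_u_test} (last admissible increment of at most $\tfrac{c^{i}}{m\,\partial_u F(a)} \le \tfrac{1}{m}$ on top of a partial sum that was still at most $p$), and then invoke \cref{lem:multlin_value_small_coords}. No gaps.
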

\begin{proof}
This is an immediate consequence of \cref{lem:multlin_value_small_coords} and the fact that $\aall\in [0,p+\frac{1}{m}]$, which holds due to the following.
For any element $u\in \cN$, \cref{alg:singleMatNonMonotone} does not continue to increase coordinates $a_i(u)$ if the sum of the $a_j(u)$ already surpasses $p$.
Moreover, every increase of $u$ happens through an update of one of the $a_i$ vectors by increasing $a_i(u)$ by $\frac{c^i}{m\cdot \partial_u F(a)}\leq \frac{1}{m}$, because $c^i \leq c^{i(u)} \leq \partial_u F(a)$ by choice of $i(u)$.
\end{proof}

Finally, by combining the above obtained relations, and using our choices of the parameters $\alpha$, $c$, $p$, and $m$, we obtain the desired result.
Note that the bound on the memory requirement of \cref{alg:singleMatMonotone} also holds for \cref{alg:singleMatNonMonotone}, as it is unrelated to monotonicity of $f$ or to the minor differences between the two algorithms.
\begin{proof}[Proof of \cref{thm:mainSinglaMatNonMon}]
Because $f(R) \geq F(s)$, it suffices to show that $F(s) \geq 0.1921 \cdot f(\OPT)$.
The value of $\OPT$ and $F(s)$ can be related as follows:
\begin{equation*}%\label{eq:firstRelationOptToFsNonMon}
\begin{aligned}
\left(1-p-\frac{1}{m}\right)\cdot f(\OPT)
  &\leq F(\aall \vee \characteristic_{\OPT}) \\
  &\leq F(\aall) + \sum_{u\in \OPT} c^{\ell(u)+1} \\
  &\leq \frac{1}{1-c^{-m} - \frac{\eps}{2c}} \cdot F(s) + (c-1)\cdot \left(1+\frac{\eps}{2c}\right)\cdot \frac{m}{1-c^{-m}} \cdot F(s) \\
  &\leq \left(m\cdot (c-1)\cdot \left(1+\frac{\eps}{2c}\right) + 1\right) \cdot \frac{1}{1-c^{-m}-\frac{\eps}{2c}} \cdot F(s)\enspace,
\end{aligned}
\end{equation*}
where the first inequality is due to \cref{cor:boundOptNonMon},
the second one follows from \cref{lem:optDerivativesNonMon},
and the third one is implied by \cref{lem:lowerBoundFsInFAall,lem:upperBoundSumCOpt} (we recall that these results did not need monotonicity of $f$).

Regrouping terms in the above inequality and simplifying, we obtain the following:\footnote{The first steps of the derivation are analogous to the ones performed in the proof of \cref{thm:mainSingleMatMon}. The only difference is the additional term of $(1-p-\sfrac{1}{m})$.}
{\allowdisplaybreaks
\begin{align}\label{eq:lowerBoundFsInOptUsingAlphaNonMon}
F(s) &\geq \frac{1-c^{-m}-\frac{\eps}{2c}}{m\cdot (c-1) \cdot \left(1+\frac{\eps}{2c}\right) + 1} \cdot \left(1-p-\frac{1}{m}\right)\cdot f(\OPT) \\\nonumber
     &\geq \frac{1-e^{-\alpha} - \frac{\eps}{2c}}{\alpha\cdot \left(c + \frac{\eps}{2}\right) + 1} \cdot \left(1-p-\frac{1}{m}\right)\cdot f(\OPT) \\\nonumber
     &\geq \left( \frac{1-e^{-\alpha}}{(\alpha + 1)\cdot \left(c+\frac{\eps}{2}\right)} - \frac{\eps}{2c} \right)\cdot \left(1-p-\frac{1}{m}\right)\cdot f(\OPT) \\\nonumber
     &\geq \left(\frac{1-e^{-\alpha}}{\alpha+1} \cdot \frac{1}{1+\eps} - \frac{\eps}{2c}\right) \cdot \left(1-p-\frac{1}{m}\right)\cdot f(\OPT) \\\nonumber
     &\geq \left(\frac{1-e^{-\alpha}}{\alpha+1} - \eps \right) \cdot \left(1-p-\frac{1}{m}\right)\cdot f(\OPT) \\\nonumber
     &\geq \left(\frac{1-e^{-\alpha}}{\alpha+1}\cdot (1-p) - \eps \right) \cdot f(\OPT) \\\nonumber
     &= \left(\frac{1-e^{-\alpha}}{\alpha+1}\cdot \frac{\alpha c}{\alpha c + 1} - \eps \right) \cdot f(\OPT) \\\nonumber
     &\geq \left(\frac{(1-e^{-\alpha})\cdot \alpha}{(\alpha+1)^2} - \eps \right) \cdot f(\OPT)\enspace,
\end{align}}
where the different inequalities hold due to the following.
The second inequality uses that $c=\frac{m}{m-\alpha}$, which implies $c^{-m} = (1-\sfrac{\alpha}{m})^m \leq e^{-\alpha}$ and $m (c-1) = \alpha c$.
The third inequality holds because $c+\sfrac{\eps}{2} \geq 1$ and $\alpha\cdot (c+\sfrac{\eps}{2}) + 1 \geq 1$.
The forth one follows from $c = \frac{m}{m-\alpha} = (1-\sfrac{\alpha}{m})^{-1} \leq (1-\sfrac{\eps}{3})^{-1} \leq 1+\sfrac{\eps}{2}$ by using our definitions of $c$ and $m$.
The fifth inequality uses that $(1+\eps)^{-1}\geq 1-\eps$ and $\frac{1-e^{-\alpha}}{\alpha+1}\leq \frac{1}{2}$.
The sixth inequality holds because $\frac{1-e^{-\alpha}}{\alpha+1}\frac{1}{m}\leq \frac{1}{3m} \leq \frac{\eps}{9\alpha}$ and $p\cdot \eps = \frac{\eps}{\alpha c + 1} \geq \frac{\eps}{9\alpha}$.
The requality uses that $p=\frac{1}{\alpha c + 1}$.
Finally, the last inequality follows from $c \geq 1$.

The claimed approximation factor of $1.921$ is obtained by plugging in our value of $\alpha=1.9532$ (for a small enough $\eps > 0$).
\end{proof}

\section{Proof of \texorpdfstring{\cref{lem:at-least-prob}}{Lemma~\ref*{lem:at-least-prob}}} \label{sec:random_appendix}

In this section we prove \cref{lem:at-least-prob}. We begin with the following helper lemma.
\begin{lemma}
\label{lem:uniform-partition}
Suppose the element of $\cN$ appear in the stream in a uniformly random order, and we partition $\cN$ by \Cref{alg:partition} into $\alpha k$ windows, then this is equivalent to assigning each $u \in \cN$ to one of $\alpha k$ different buckets uniformly and independently at random.
\end{lemma}

\begin{proof}
The way we define the window sizes $n_1, n_2, \dotsc,$ is equivalent to placing each element independently into a random bucket, and then letting $n_i$ be the number of elements that ended up in bucket $i$. Hence, the distribution of the window sizes is correct. Furthermore, conditioned on the window sizes, each window is simply assigned the elements in some positions of the random stream, and therefore, the set of elements it gets is a uniformly random subset of $\cN$ of the right size which is independent of the partitioning of the remaining elements between the other windows.
\end{proof}

Let us denote now by $R$ the random coins used in \cref{line:R_selection} of \cref{alg:matroid-basic}. Below, we prove \cref{lem:at-least-prob} conditioned in a fixed choice of $R$, which implies that the lemma holds also unconditionally due to the law of total probability. 
Let $J_{u}(\cH_{i-1}, R)$ be the set of indices $j$ where there exists some partition $P$ that implies the history $\cH_{i-1}$ given $R$\footnote{Observe that once $R$ and $P$ are fixed, \cref{alg:matroid-basic} becomes deterministic, and therefore, $P$ and $R$ determine $\cH_{i - 1}$.} and has $P(u) = j$.
\begin{lemma} \label{lem:equal_probability_possible}
Conditioned on history $\cH_{i - 1}$ and random coins $R$, the probability of an element $u \in \cN \setminus \cH_{i - 1}$ to end up in every window corresponding to the indices of $J_u(\cH_{i - 1})$ is equal. Furthermore, $J_u(\cH_{i - 1}, R)$ includes every integer $i \leq j \leq \alpha k$
\end{lemma}
\begin{proof}
Choose any $j, j^\prime \in J_u(\cH_{i - 1})$. We would like to show that for each partition $P$ that implies the history $\cH_{i - 1}$ given $R$ and has $P(u) = j$ we can create another partition $\tilde{P}$ that implies $\cH_{i-1}$ given $R$ by setting $\tilde{P}(u) = j'$ and keeping all other values of $\tilde{P}$ as in $P$. Since $\tilde{P}$ is equal to $P$ everywhere except on $u$, this maps each such partition $P$ to a unique partition $\tilde{P}$, establishing that the number of partitions that imply $\cH_{i-1}$ given $R$ and map $u$ to $w_j$ is not larger than the number of such partitions mapping $u$ to $w_{j'}$. Since this is true for every $j, j \in J_u(\cH_{i - 1}, R)$, and all the partitions have equal probability by \cref{lem:uniform-partition}, the first part of the lemma follows once we show the above.

Since $u \notin \cH_{i - 1}$, for an index $j$ to be in $J_u(\cH_{i - 1}, R)$, one of two things must happen. The first option is that $j \geq i$, in which case trivially \cref{alg:matroid-basic} could not add $j$ to $H$ while processing the first $i - 1$ windows (note that the existence of this option already implies the second part of the lemma). The second option is that $j < i$, but $u$ was not selected by \cref{alg:matroid-basic} when it arrived because either $u$ was never the maximum element found in Line~\ref{algline:argmax}, or if it was, its marginal value was not sufficient to replace the current solution. In all these cases, removing or adding $u$ to window $w_j$ does not change the history $\cH_{i-1}$. Thus, given that $P$ implies the history $\cH_{i - 1}$ given $R$, changing $P(u)$ from one index of $J_u(\cH_{i - 1}, R)$ to another does not change this history. 
\end{proof}

We are now ready to prove \cref{lem:at-least-prob}, which we repeat here for convenience.

\lemAtLeastProb*

\begin{proof}
Since $u$ must appear in some window, and it can appear only in windows whose indices appear in $J_u(\cH_{i - 1}, R)$, \cref{lem:equal_probability_possible} implies that conditioned on $R$ we have
\[
    1 = \mspace{-4mu} \sum_{j' \in J_u(\cH_{i - 1},R )} \mspace{-27mu} \Pr[u\in w_{j'} \mid \cH_{i-1}]
      = |J_u(\cH_{i - 1}, R)| \cdot \Pr[u \in w_j \mid \cH_{i-1}]
      \leq
      \alpha k \cdot \Pr[u \in w_j \mid \cH_{i-1}]
      \enspace.
\]
As mentioned above, the conditioning on $R$ can be dropped by the law of totol probability, which implies the lemma.
\end{proof}

\end{document}